\documentclass{amsart}[12pt]
\usepackage{amsmath}
\usepackage{graphicx}
\usepackage[titletoc,toc]{appendix}
\usepackage{latexsym}
\usepackage{hyperref}
\usepackage{graphicx}
\usepackage{amsthm,amssymb,amstext,amscd,amsfonts,amsbsy,amsxtra,latexsym}
\usepackage{wrapfig}
\usepackage{color}
\usepackage{pdflscape}
\usepackage{lscape}
\usepackage{tabularx}
\usepackage{lipsum}
\usepackage{rotating} 
\usepackage{pifont}
\usepackage{upgreek}
\usepackage{comment}
\usepackage{mathrsfs, mathtools}
\usepackage{wasysym}
\usepackage[top=2.5cm, bottom=2.5cm, left=2.5cm, right=2.5cm]{geometry}
\usepackage[normalem]{ulem}

\newcommand{\lb}{\label}
\newcommand{\bea}{\begin{eqnarray}}
\newcommand{\eea}{\end{eqnarray}}

\newcommand{\lp}[2]{\left \Vert \, #1 \, \right \Vert_{#2}}

\newcommand{\chiint}{\chi_{\text{int}}}
\newcommand{\chiext}{\chi_{\text{ext}}}

\makeatletter

\renewcommand{\tocsubsection}[3]{%
  \indentlabel{\@ifnotempty{#2}{\hspace*{2.3em}\makebox[2.3em][l]{%
    \ignorespaces#1 #2.\hfill}}}#3}
\renewcommand{\tocsubsubsection}[3]{%
  \indentlabel{\@ifnotempty{#2}{\hspace*{4.6em}\makebox[3em][l]{%
    \ignorespaces#1 #2.\hfill}}}#3}
		
\newcounter{mnotecount}[section]

\renewcommand{\themnotecount}{\thesection.\arabic{mnotecount}}

\newcommand{\mnote}[1]
{\protect{\stepcounter{mnotecount}}$^{\mbox{\footnotesize $%
\!\!\!\!\!\!\,\bullet$\themnotecount}}$ \marginpar{
\raggedright\tiny\em $\!\!\!\!\!\!\,\bullet$\themnotecount: #1} }

\usepackage[table]{xcolor}
\usepackage[normalem]{ulem}

\usepackage{tabulary} 
\usepackage{booktabs}
\renewcommand{\arraystretch}{1.4}

\theoremstyle{plain}
\newtheorem{theorem}{Theorem}[section]
\newtheorem{proposition}[theorem]{Proposition}
\newtheorem{lemma}[theorem]{Lemma}
\newtheorem{corollary}[theorem]{Corollary}

\theoremstyle{definition}

\newtheorem{remark}[theorem]{Remark}

\numberwithin{equation}{section}

\allowdisplaybreaks[3]

\swapnumbers
\DeclarePairedDelimiter\autobracket{(}{)}

\newcommand{\br}[1]{\autobracket*{#1}}

\newcommand{\metric}{{\mathbf g}}

\begin{document}

\begin{center}

\title[]{Global existence for a Fritz John equation in expanding FLRW spacetimes}

\author{}

\address{}

\email{}

\date{}

\parskip = 0 pt

\maketitle

\bigskip
João L.~ Costa \footnote{e-mail address: jlca@iscte-iul.pt}{${}^{,\sharp,\star}$},  
Jes\'{u}s Oliver \footnote{e-mail address: jesus.oliver@csueastbay.edu}{${}^{,\dagger}$} and 
Flavio Rossetti \footnote{{e-mail address: flavio.rossetti@gssi.it}}{${}^{, \S, \star}$}

\bigskip
{\it {${}^\sharp$}Departamento de Matemática,}\\
{\it Instituto Universit\'ario de Lisboa (ISCTE-IUL),}\\
{\it Av. das For\c{c}as Armadas, 1649-026 Lisboa, Portugal}

\bigskip
{\it {${}^\star$}Center for Mathematical Analysis, Geometry and Dynamical Systems,}\\
{\it Instituto Superior T\'ecnico, Universidade de Lisboa,}\\
{\it Av. Rovisco Pais, 1049-001 Lisboa, Portugal}

\bigskip
{\it {${}^\dagger$} California State University East Bay,}\\
{\it 25800 Carlos Bee Boulevard,}\\
{\it Hayward, California, USA, 94542}

\bigskip
{\it {${}^\S$} Gran Sasso Science Institute,}\\
{\it Viale Francesco Crispi 7,}\\
{\it 67100 L'Aquila (AQ), Italy}

\end{center}\begin{abstract}
We study the family of semilinear wave 
 equations
\[\square_{\mathbf{g}_p}\phi=(\partial_t\phi)^2 \ , \]
on fixed expanding FLRW spacetimes, having $\mathbb{R}^3$ spatial slices and undergoing a power law expansion, with scale factor $a(t)=t^p$,
$0<  p \le 1$. 
This is a natural generalization to a non-stationary background of a famous Fritz John ``blow-up'' equation in $\mathbb{R}^{1+3}$ (corresponding to $p=0$, i.e.\ the case in which  $\mathbf{g}_0$ is the Minkowski metric). 

While, in Minkowski spacetime ($p=0$), non-trivial solutions to this equation are known to diverge in finite time, here  we prove that, on the referred FLRW backgrounds ($0<p\leq 1$),    
sufficiently small, smooth, and compactly supported initial data 
yield global-in-time solutions to the future. 
%
Previous work, co-authored by the first two authors \cite{jaj}, considered accelerated expanding spacetimes ($p>1$) and relied on the integrability of the inverse of the scale factor to establish future global well-posedness. 
In the current work, where such an integrability condition is lacking, we rely on a vector field method that captures and combines dispersive estimates with the spacetime expansion to control the solution and suppress the nonlinear blow-up mechanism. To achieve this, we commute the Laplace-Beltrami operator with a boosts-free subset of the Poincaré algebra and employ Klainerman-Sideris types of inequalities. 

Our strategy is general and is developed to handle the non-stationary nature of  FLRW  spacetimes. 
While we focus solely on this Fritz John type of equation, which serves as a prototype to study blow-up of non-linear waves, our approach provides a rigorous proof of the regularizing effects of spacetime expansion and can be exploited for a wider range of applications and nonlinearities. 
\end{abstract}


%
\section{Introduction}
%
We consider a framework in which a wavelike singularity formation mechanism is put in competition with the regularization effects created by cosmological expansion. As a blow up mechanism we consider the one provided by the famous Fritz John equation 
\begin{equation}
\label{JohnEq0}
\square_{\eta} \phi = (\partial_t\phi)^2\;,
\end{equation}
where $\eta$ is the $(1+3)$-dimensional Minkowski metric, and $\square_{\eta}  = -\partial^2_t + \Delta$ is the Laplace-Beltrami operator of the (Lorentzian) metric $\eta$~\footnote{This might seem like a pedantic way of introducing the classical wave operator, but it is just a convenient way to prepare the generalization of this equation to other geometries, by using the corresponding Laplace-Beltrami operator.}. A celebrated classical result~\cite{john_blow} shows, in particular, that all nontrivial solutions of this equation, with smooth and compactly supported initial data, blow up in finite time. To study the regularization effect of cosmological expansion, we generalize the wave dynamics determined by~\eqref{JohnEq0} into the setting of Friedmann-Lema\^itre-Robertson-Walker (FLRW) spacetimes, which, as we will discuss in a moment, is a family of physically motivated geometries describing expanding universes, parameterized by $p>0$, a parameter that controls the expansion rate. More precisely, we consider spacetime metrics of the form  
\begin{equation} \label{metric_intro0}
   \metric = \metric_p  = -dt^2 + t^{2p} \br{(dx^1)^2 + (dx^2)^2 + (dx^3)^2},
\end{equation}
where  $(t, x) = (t, x^1, x^2, x^3) \in\, \mathbb{R}^{+}\times \mathbb{R}^3$. Note that $\metric_0 = \eta$ is the Minkowski metric. We can then consider the Fritz John  equation as the $p=0$ element of the one-parameter family of partial differential equations (PDEs)   
\begin{equation}
\label{fritz_john_intro}
\square_{\metric_p} \phi = (\partial_t\phi)^2\;. 
\end{equation}
As already mentioned, we use the Laplace-Beltrami operator $\square_{\metric_p}$ as a geometric way to connect a wave operator in FLRW to the classical wave operator in Minkowski; this motivates the left-hand side of~\eqref{fritz_john_intro}. The choice of nonlinearity, in the right-hand side, is mainly motivated by the fact that the vector field $\partial_t$ is, for all $p\geq0$, the unit normal to the ``physical'' time $t=\text{constant}$ slices~\footnote{Moreover, this choice of nonlinearity $(\partial_{t}\phi)^2$ is a particular case of the class of nonlinearities considered in  \cite{jaj}, which concerns the case of accelerated expansion, $p>1$. Notwithstanding, other choices of nonlinearities, in the FLRW setting, also connect ``naturally'' to the original Fritz John case: for instance  $(\partial_{\tau}\phi)^2=t^{2p}(\partial_{t}\phi)^2$, where $\tau$ is the conformal time defined in~\eqref{def_tau}.}. For these reasons, we choose this family of PDEs as a ``natural'' generalization of~\eqref{JohnEq0} to the FLRW setting.

This is a class of geometric (non-linear) wave equations on curved backgrounds. By using the forthcoming formula~\eqref{flrw_eq}, in~\eqref{fritz_john_intro}, these PDEs can be cast as non-linear damped wave equations, in $\mathbb{R}^{+}\times \mathbb{R}^3$, with time-dependent coefficients
$$-\partial^2_t \phi-\frac{3 p}{t}\partial_t \phi+ \frac{1}{t^{2p}}\Delta \phi = (\partial_t\phi)^2\;.$$
Even though, in such a formulation the relativistic setting could, in principle, be neglected, framing it in the context of general relativity provides physical and geometric interpretations of some key features of the dynamics. With that in mind, we briefly outline the cosmological context in which our geometric setting fits in:

According to the Cosmological Principle~\cite{Mukhanov2005}, an assumption with philosophical origins that is, by now, backed by strong empirical evidence, the Universe is,  on large scales, spatially homogeneous and isotropic. These symmetry assumptions strongly restrict the topology and geometry of cosmological models; for instance, if, from the few consistent choices, we pick a model with spatial $\mathbb{R}^3$ topology, then the corresponding spacetime Lorentzian metric can be written in the remarkably simple form
\begin{equation} \label{metric_intro}
   \mathbf{g}_{\text{FLRW}}  = -dt^2 + a(t)^2 \br{(dx^1)^2 + (dx^2)^2 + (dx^3)^2},
\end{equation}
with $(t, x) = (t, x^1, x^2, x^3) \in\, \mathbb{R}^{+}\times \mathbb{R}^3$, and a single free function: the scale factor $t\mapsto a(t)$. 
To obtain a complete cosmological model, one also needs to choose an appropriate matter model, for instance, a perfect fluid with linear equation of state, and then insert that information, together with the metric~\eqref{metric_intro}, into the Einstein equations. These procedure leads to an ODE system, the Friedmann equations, see e.g.\ \cite{FriedmannAllDimensions}, in terms of the scale factor, the pressure and density of the fluid.
%
%
%

In this work we focus on the following class of scale factors arising from the Friedmann equations:
\begin{equation} \label{scale_factor_intro}
a(t) = t^p, \quad 0 < p \le 1\;.
\end{equation}
When this factor is plugged in the FLRW metric~\eqref{metric_intro} we recover~\eqref{metric_intro0}, that describes a spacetime undergoing a polynomial, \textbf{non-accelerated} expansion of non-compact spatial sections, with $\mathbb{R}^3$ topology. This regime is physically relevant: the case $p=1/2$ corresponds to a radiation-filled spacetime, whereas $p=2/3$ is identified with a universe containing a dust (i.e.\ pressureless) fluid. These two settings  play a crucial role in the very successful $\Lambda$CDM model in cosmology \cite{Mukhanov2005}.

We should make it clear that the structure and dynamics of the fluid matter content of FLRW cosmological solutions will play no role to us here; they were just presented to provide a physical motivation for our use of FLRW geometries. 

As already mentioned, here we are interested in the behaviour of non-linear waves on these spacetimes, as dictated by~\eqref{fritz_john_intro}. 

For this purpose, it is useful to compare the current setting with some of the known results for waves propagating in such backgrounds~\cite{NatarioRossetti2023}. As we increase $p$, linear solutions decay ``slowly'', in time, for large expansion rates and, in the accelerated regime, $p > 1$, no decay of the $L^{\infty}$ norms occurs at all (See Section~\ref{secTechniques} for more details). Nonetheless, the decay of derivatives increases with $p$ due to the expansion. This leads to an increased decay in energy that facilitates the control of solutions and allows to prove global existence results for certain wave equations with quadratic non-linearities depending on derivatives, at least for $p > 1$ ~\cite{jaj}.
The situation concerning non-accelerated expansion, $0<p\leq 1$, and quadratic non-linearities remains, to the best of our knowledge, mostly unclear due to the slower decay in energy.

This is were our work comes in: focusing on the setting of~\eqref{fritz_john_intro}, it turns out, somewhat surprisingly, that any amount of expansion, i.e. any $p>0$, has a remarkable regularization effect. In fact, the main result of this work shows that, in contrast with the blow-up instabilities occurring for the semilinear wave equation~\eqref{JohnEq0} in the (static) $(1+3)$-dimensional Minkowski case, any amount of spacetime expansion leads to global existence of solutions, even in our non-accelerated setting~\footnote{It is interesting to compare with what happens for solutions of~\eqref{JohnEq0} with the increase of spatial dimensions $n$: in an appropriate regularity class, for $n=3$, we have the famous blow up result concerning all non-trivial solutions~\cite{john_blow}, while, for $n=4$, all sufficiently small data solutions are global in time~\cite[Theorem 2.1]{sogge}. Of course, this increase in dimension is discrete, while the expansion parameter $p$ under consideration in continuous.}.
More precisely, we show that, for all $0<p\leq 1$, solutions to the Fritz John-type equation~\eqref{fritz_john_intro} 
with metric $\mathbf{g}_p$ given by~\eqref{metric_intro0}, 
exist globally in time (to the future), provided that sufficiently small, regular and compactly supported initial data are prescribed. 


This work generalizes part of the results of \cite{jaj}, where \eqref{fritz_john_intro} were studied in the setting of \textbf{accelerated} FLRW spacetimes (in particular, $a(t)=t^p$, with $p > 1$). This work in the accelerated setting crucially exploited the integrability of $a^{-1}$ to Gr{\"o}nwall the main vector field estimates and to bootstrap energy boundedness. In the non-accelerated case that we analyze here, such an integrability condition is lacking and different methods are required. For instance, due to the non-compact nature of the spatial sections and the slower expansion rate, our analysis has to deal with the non-trivial effects that spacetime expansion has on dispersion.

In the following, we describe the main results and the techniques that we employ. 

\subsection{Preliminary version of the main results}

Given a spacetime $(\mathcal{M}, \metric)$, with $\mathcal{M}  \cong \mathbb{R}^{1+3}$ and $\metric = \metric_p$ given by~\eqref{metric_intro0}, so that $a(t)=t^p$, 
let $\phi \colon \mathcal{M} \to \mathbb{R}$ be sufficiently regular and let 
\[
E[\phi](t) = \frac12 \int_{\mathbb{R}^3} \br{ a^2(t) \br{ \partial_t \phi}^2(t, x) + |\nabla_x \phi|^2(t, x) } dx
\]
be its associated energy. Given $0 < t_0 < t_1$, we introduce the notation 
\[
\left \| \phi  \right \|^2_{L^2[t_0, t_1]} = \int_{t_0}^{t_1} \int_{\mathbb{R}^3} |\phi|^2(t, x) dx dt \, .
\]

\begin{theorem}[Improved energy estimate] \label{thm:intro_energyestimate}
    Let $0 < p \le 1$. Given $\phi \colon [t_0, t_1) \times \mathbb{R}^3 \to \mathbb{R}$ sufficiently regular, then:
    \[
        \sup_{t_0 \le t < t_1} E^{\frac12}[\phi](t) + C(p) \lp{t^{p-\frac12} \partial_t \phi}{L^2[t_0, t_1]} \lesssim_p E^{\frac12}[\phi](t_0) + \lp{t^{p+\frac12} \square_{\mathbf g} \phi}{L^2[t_0, t_1]} \, . 
    \]
\end{theorem}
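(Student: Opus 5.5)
Multiply the equation by the time derivative and integrate over the spatial slices, using the explicit form of the wave operator,
\[
\square_{\metric}\phi = -\partial_t^2\phi - \frac{3p}{t}\partial_t\phi + t^{-2p}\Delta\phi .
\]
Multiplying by $-t^{2p}\partial_t\phi$ and integrating in $x$ (compact support, or sufficient decay, lets us drop boundary terms) gives
\[
\frac{d}{dt}\,\frac12\int \br{t^{2p}(\partial_t\phi)^2 + |\nabla\phi|^2}\,dx
= p\,t^{2p-1}\int(\partial_t\phi)^2\,dx - 3p\, t^{2p-1}\int (\partial_t\phi)^2\,dx - t^{2p}\int \partial_t\phi\, \square_{\metric}\phi\, dx .
\]
The first term on the right comes from differentiating the weight $t^{2p}$. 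Together with the damping term this yields
\[
\frac{d}{dt}E[\phi](t) + 2p\, \lp{t^{p-\frac12}\partial_t\phi(t)}{L^2(\mathbb{R}^3)}^2 = -\int t^{2p}\,\partial_t\phi\,\square_{\metric}\phi\,dx .
\]
The key observation is that the friction term produces a \emph{coercive} spacetime integral with weight exactly $t^{2p-1}$. The net coefficient $2p>0$ holds for every $p>0$, so no restriction $p\le1$ is needed at this step, and the constant $C(p)$ scales like $\sqrt{p}$.

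Next, integrate from $t_0$ to $t\in[t_0,t_1)$. Split the weight on the right as $t^{2p}=t^{p-\frac12}\cdot t^{p+\frac12}$ and apply Cauchy--Schwarz in spacetime, then Young's inequality:
\[
\int_{t_0}^{t}\!\!\int t^{2p}|\partial_t\phi\,\square_{\metric}\phi|\,dx\,ds \le p\,\lp{t^{p-\frac12}\partial_t\phi}{L^2[t_0,t]}^2 + \frac{1}{4p}\lp{t^{p+\frac12}\square_{\metric}\phi}{L^2[t_0,t]}^2 .
\]
Absorbing the first term into the left-hand side gives
\[
E[\phi](t) + p\,\lp{t^{p-\frac12}\partial_t\phi}{L^2[t_0,t]}^2 \le E[\phi](t_0) + \frac{1}{4p}\lp{t^{p+\frac12}\square_{\metric}\phi}{L^2[t_0,t_1]}^2 .
\]

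To conclude, take the supremum over $t<t_1$, let $t\to t_1$ in the spacetime term (monotone convergence), and take square roots using $\sqrt{a+b}\le\sqrt a+\sqrt b$. This gives the stated inequality with implicit constant depending on $p$ through $p^{-1/2}$.

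There is no real obstacle here. The only points requiring care are the following:
\begin{itemize}
\item checking the sign bookkeeping between the $p\,t^{2p-1}$ term from the time-dependent weight and the $-3p\,t^{2p-1}$ damping term, so that the net term is coercive;
\item justifying the integrations by parts, either by assuming enough regularity and spatial decay or by density.
\end{itemize}
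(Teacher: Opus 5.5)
Your proof is correct and is essentially the paper's argument. The paper uses the multiplier $X=a^{-1}\partial_t$ in the energy-momentum tensor formalism, where $\tfrac12|\mathbf{g}|^{1/2}\,{}^{(X)}\pi^{\alpha\beta}T_{\alpha\beta}=2p\,t^{2p-1}(\partial_t\phi)^2$ is exactly your coercive friction term and $X\phi\,|\mathbf{g}|^{1/2}=t^{2p}\partial_t\phi$ is exactly your multiplier, followed by the same Young's-inequality absorption (your choice $\epsilon=p$ makes explicit the paper's $C(p)=(2p)^{1/2}$ degeneration as $p\to 0$).
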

Compared to the energy estimate obtained in \cite[Theorem 3.2]{jaj}, here we included the spacetime term $\| t^{p-\frac12} \partial_t \phi \|_{L^2[t_0, t_1]}$.  This term can be traced back to the spacetime expansion, note for instance that $C(p) = 0$, when $p=0$, and will be exploited to absorb errors stemming from commutations.

In this context, we introduce the Lie algebra
\begin{equation} \label{subalgebra}
\mathbb{L} = \{ \partial_{\tau}, \partial_i, S = \tau \partial_{\tau} + r\partial_r , \Omega_{ij} = x^i \partial_j - x^j \partial_i, \quad \text{with } \, i, j=1, \ldots, 3 \},
\end{equation}
with the usual bracket operation, where $\tau = \int \frac{dt}{a(t)}$ is a conformal time coordinate and $r^2=\sum_i(x^i)^2$. Vector fields belonging to $\mathbb{L}$ will be denoted by $\Gamma$. Then, for $k \ge 0$ and for a suitable norm $\| \cdot \|$ we also introduce the notation
\[
\left \| \phi_{\le k} \right \| = \sum_{|\alpha| \le k} \left \| \Gamma^{\alpha} \phi \right  \|.
\]

\begin{theorem}[Global pointwise estimate] \label{thm:intro_globalpointwise}
    Given a sufficiently regular $\phi \colon \mathcal{M} \to \mathbb{R}$, if $k \ge 0$ and $0 <p < 1$, we have:
    \begin{align*}
        \lp{t \partial_t \phi_{\le k}(t, \cdot)}{L^{\infty}(\mathbb{R}^3)} + \lp{ t^{1-p} \nabla_x \phi_{\le k} (t, \cdot)}{L^{\infty}(\mathbb{R}^3)} &\lesssim_{p, k} E^{\frac12}[\phi_{\le k+4}](t_0) + \\
        &+ \lp{ t^{p+\frac12} \br{\square_{\mathbf g} \phi}_{\le k+4}}{L^2[t_0, t_1]} \\
        &+ \lp{ t^{p+1} \br{\square_{\mathbf g} \phi}_{\le k +1}(t, \cdot)}{L^2(\mathbb{R}^3)} \, .
    \end{align*}
\end{theorem}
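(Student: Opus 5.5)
The plan is to combine three ingredients: the improved energy estimate of Theorem~\ref{thm:intro_energyestimate} applied to commuted fields, a Klainerman--Sideris type weighted $L^2$ estimate adapted to the conformal-time wave operator, and weighted Sobolev (Klainerman-type) embeddings using only the boost-free family $\mathbb{L}$.

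\textbf{Step 1: conformal form and commutation.} Passing to conformal time $\tau$ with $dt = a\,d\tau$, so that $\tau \sim t^{1-p}/(1-p)$, we have $a^2\square_{\mathbf g}\phi = -\partial_\tau^2\phi - \frac{2a'}{a}\cdot a\,\partial_\tau\phi + \Delta\phi$, which is a Minkowski wave operator in $(\tau,x)$ plus a damping term with coefficient $\sim p/\tau$ (using $\frac{a_\tau}{a} = \frac{p}{(1-p)\tau}$). We compute the commutators of $\partial_\tau,\partial_i,\Omega_{ij}$ and $S$ with this operator. Translations in $x$ and rotations commute exactly. $S$ commutes up to a multiple of the operator itself, since the damping coefficient is homogeneous of degree $-1$ in $\tau$. $\partial_\tau$ produces an error of the form $\frac{c}{\tau^2}\partial_\tau\phi$. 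The result is $\square_{\mathbf g}\Gamma^\alpha\phi = (\square_{\mathbf g}\phi)_{\le|\alpha|}$ plus errors $\lesssim t^{-1}\,|\partial_t\phi_{\le |\alpha|-1}|$. This is exactly the point where the spacetime term $\|t^{p-\frac12}\partial_t\phi\|_{L^2[t_0,t_1]}$ in Theorem~\ref{thm:intro_energyestimate} enters: the error, weighted by $t^{p+\frac12}$, is $t^{p-\frac12}\partial_t\phi_{\le k-1}$, so it is absorbed by the left-hand side of the lower-order estimate. Induction on $k$ then gives
\[
\sup_t E^{\frac12}[\phi_{\le k}] \lesssim E^{\frac12}[\phi_{\le k}](t_0) + \lp{t^{p+\frac12}(\square_{\mathbf g}\phi)_{\le k}}{L^2[t_0,t_1]}.
\]

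\textbf{Step 2: Klainerman--Sideris estimate.} Without boosts, we need a weighted estimate of the form
\[
\lp{\langle \tau - r\rangle\,\partial^2\phi_{\le k}}{L^2} \lesssim \lp{\partial\phi_{\le k+1}}{L^2} + \lp{\tau\,a^2(\square_{\mathbf g}\phi)_{\le k}}{L^2}.
\]
This follows the standard algebraic identity, which expresses $\tau\partial_\tau^2 - r\partial_r^2$-type combinations through $S$, $\Omega$ and the wave operator. The damping term is lower order with the weight $\tau/\tau$. The fixed-time source weight $\tau a^2 \sim t^{1+p}$ explains the last term in the statement, at order $k+1$.

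\textbf{Step 3: pointwise bounds.} We apply Klainerman--Sideris Sobolev inequalities, $r|u| \lesssim \|\nabla u_{\le 2}\|_{L^2}$ and $r^{1/2}\langle\tau-r\rangle^{1/2}|\partial u|$ bounded by the weighted $L^2$ norms, together with interior Sobolev embedding. This yields $|\partial_\tau\phi_{\le k}| + |\nabla\phi_{\le k}| \lesssim \tau^{-1}\big(E^{\frac12}[\phi_{\le k+4}] + t^{1+p}\|(\square\phi)_{\le k+1}\|_{L^2}\big)$. Multiplying by $\tau \sim t^{1-p}$ gives the gradient bound, and $\partial_t = a^{-1}\partial_\tau$ gives $t\,\partial_t$. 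Controlling $\sup_t E^{\frac12}$ by Step 1 then concludes.

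\textbf{Main obstacle.} The hardest part is obtaining full $\tau^{-1}$ decay near the light cone $r\sim\tau$, which normally requires the good-derivative structure. In the exterior, the factor $r^{-1}$ is enough. In the interior, the weight $\langle\tau-r\rangle$ from Step 2 is used. Near the cone, we first try to write $\partial_\tau\phi$ and $\partial_r\phi$ in terms of $S\phi/\tau$ plus $(\tau-r)\partial\phi/\tau$, which gives the missing decay. The restriction $p<1$ is needed so that $\tau\sim t^{1-p}$; at $p=1$ the conformal time is logarithmic and this conversion fails.
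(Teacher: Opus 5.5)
Your outline follows the paper's route. It uses the $X=a^{-1}\partial_t$ energy estimate with the expansion-generated bulk term $\|t^{p-\frac12}\partial_t\phi\|_{L^2[t_0,t_1]}$. It commutes with the boost-free algebra $\mathbb{L}$ and absorbs the $t^{-1}|\partial_t\phi_{\le k-1}|$ errors inductively into that bulk term. It then uses Klainerman--Sideris identities in which the damping term is lower order, a cutoff into interior and exterior regions, and finally $\tau\sim t^{1-p}$. The part that needs correcting is your ``main obstacle'' paragraph: it is not an obstacle, and the fix you propose would not work if it were one.

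\textbf{The light cone needs no special treatment.} The target is only uniform $\tau^{-1}$ decay, so take the exterior region to be $r\ge\tau/2$, which contains the light cone. There, apply the weighted Sobolev bound $|u|(\tau,r)\lesssim r^{-1}\sum_{k=1}^{2}\sum_{|\beta|\le 2}\|\partial_r^k\Omega^\beta u\|_{L^2}$ to $u=\nabla_{\tau,x}(\chi_{\mathrm{ext}}\phi)$. Together with $\tau\lesssim r$ on its support, and Hardy's inequality for the terms where the cutoff is differentiated, this already gives $\|\tau\nabla_{\tau,x}(\chi_{\mathrm{ext}}\phi)\|_{L^\infty}\lesssim E^{1/2}[\phi_{\le 4}](t)$ for every component of the gradient. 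No good-derivative structure is needed; this is exactly how the paper treats its term $I\!I$.

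\textbf{Your proposed rewriting only reaches the good derivative.} The identity $S=\tau\partial_\tau+r\partial_r$ gives only $(\tau+r)(\partial_\tau+\partial_r)\phi=2S\phi-(\tau-r)(\partial_\tau-\partial_r)\phi$. The transversal derivative $(\partial_\tau-\partial_r)\phi$ is not reached; it would require the radial boost, which is precisely what $\mathbb{L}$ omits. So you cannot express both $\partial_\tau\phi$ and $\partial_r\phi$ as $S\phi/\tau$ plus $(\tau-r)\partial\phi/\tau$.

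\textbf{The interior $r\le\tau/2$ is where care is needed.} The Sobolev embedding there must be the homogeneous one, $\|u\|_{L^\infty}\lesssim\|\nabla_x u\|_{L^2}+\|\Delta u\|_{L^2}$, applied to $\tau\nabla_{\tau,x}(\chi_{\mathrm{int}}\phi)$. Then, apart from cutoff terms supported where $\tau\lesssim r$, only second and third derivatives of $\phi$ appear. These can be weighted by $\tau\lesssim\tau-r$ and fed into the Klainerman--Sideris bound. An inhomogeneous $H^2\subset L^\infty$ embedding would instead leave $\|\tau\nabla_{\tau,x}\phi\|_{L^2}$, which the energy does not control.
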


Finally, in the case in which $\phi$ solves \eqref{fritz_john_intro}, we prove the following:
\begin{theorem}[Small data global existence of solutions to the Fritz John equation in non-accelerated expanding spacetimes] \label{thm:SDGE}
    Let $\mathbf{g}$ be the FLRW metric \eqref{metric_intro} with scale factor $a(t) = t^p$, where $0 < p \le 1$. Let $(\phi_0, \phi_1)$ be sufficiently regular, small and compactly supported initial data. Then, given $t_0 > 0$, there exists a unique future \textbf{global} solution to the Cauchy problem
    \[
    \begin{cases}
        \square_{\mathbf g}  \phi = \br{ \partial_t \phi }^2, \\
        \phi(t_0, x) = \phi_0(x), \; \partial_t \phi(t_0, x) = \phi_1(x).
    \end{cases}
    \]
    Moreover, for every $0 \le k \le K - 4$, if $0 < p < 1$: 
    \[
    \lp{ t \partial_t \phi_{\le k} \phi(t, \cdot )}{L^{\infty}(\mathbb{R}^3)} + \lp{ t^{1-p} \nabla_x \phi_{\le k}(t, \cdot)}{L^{\infty}(\mathbb{R}^3)} \leq 1/C_p, \quad \forall\, t \ge t_0 \,,
    \]
    where $C_p\rightarrow 0$, when $p\rightarrow 0$, and $ K\in\mathbb{Z}^+$, with $K\geq 8$, is the number of derivatives of the initial data assumed to be small.
    
    Furthermore, for every $0 \le k < K - \frac32$, if $p=1$:
    \[
    \lp{t \partial_x^k \partial_t \phi(t, \cdot)}{L^{\infty}(\mathbb{R}^3)} + \lp{\partial_x^k \partial_x \phi(t, \cdot)}{L^{\infty}(\mathbb{R}^3)}  \lesssim 1, \quad \forall\, t \ge t_0 \, .
    \]
\end{theorem}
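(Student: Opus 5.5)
The plan is a continuity (bootstrap) argument built on Theorems~\ref{thm:intro_energyestimate} and~\ref{thm:intro_globalpointwise}, treating $0<p<1$ first and $p=1$ separately. Local well-posedness and finite speed of propagation give a unique smooth solution on a maximal interval $[t_0,T^*)$, and the solution stays compactly supported in $x$ on each slice. Blow-up at a finite $T^*$ requires some high-order norm to diverge there. It therefore suffices to show that the bootstrap quantity
\[
\mathcal{A}(t_1)=\sup_{t_0\le t<t_1}\Big(E^{\frac12}[\phi_{\le K}](t)+\lp{t\,\partial_t\phi_{\le K-4}(t,\cdot)}{L^\infty}+\lp{t^{1-p}\nabla_x\phi_{\le K-4}(t,\cdot)}{L^\infty}\Big)+\lp{t^{p-\frac12}\partial_t\phi_{\le K}}{L^2[t_0,t_1]}
\]
obeys $\mathcal{A}\le 2C\epsilon$ under the assumption $\mathcal{A}\le 4C\epsilon$, where $\epsilon$ is the size of the data. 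Uniqueness is standard for this semilinear problem.

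\textbf{Step 1 (commuting).} Commute the equation with $\Gamma^\alpha$, $|\alpha|\le K$. In conformal time $\tau$ the operator $\square_{\mathbf g}$ is $a^{-2}(-\partial_\tau^2-\frac{2a'}{a}\partial_\tau+\Delta)$ up to a factor of $a$ in the lower-order term. The fields $\partial_i$ and $\Omega_{ij}$ commute exactly. The fields $\partial_\tau$ and $S$ produce errors of the form $c\,\square_{\mathbf g}\Gamma^{\beta}\phi$ plus terms of size $\tau^{-1}a^{-1}\partial_t\Gamma^\beta\phi$ with $|\beta|<|\alpha|$, coming from the time dependence of the damping coefficient. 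Expressed in $t$, these errors are of order $t^{-1}\partial_t\phi_{\le k-1}$. After multiplying by the weight $t^{p+\frac12}$ from Theorem~\ref{thm:intro_energyestimate}, the linear error becomes $t^{p-\frac12}\partial_t\phi_{\le k-1}$. This is exactly the spacetime term on the left of that theorem, so it is absorbed inductively in $k$, with constants depending on $p$.

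\textbf{Step 2 (nonlinear terms).} The nonlinear term is $\Gamma^\alpha(\partial_t\phi)^2$. Using $[\Gamma,\partial_t]\sim t^{-1}\Gamma'$, it expands into a sum of products $\partial_t\phi_{\le j}\,\partial_t\phi_{\le l}$ with $j+l\le K$, together with harmless lower-order terms. Put the factor with fewer derivatives in $L^\infty$ using the bootstrap bound $|\partial_t\phi_{\le K-4}|\lesssim \epsilon t^{-1}$. This gives
\[
\lp{t^{p+\frac12}(\partial_t\phi)^2_{\le K}}{L^2[t_0,t_1]}\lesssim \epsilon\,\lp{t^{p-\frac12}\partial_t\phi_{\le K}}{L^2[t_0,t_1]}\lesssim\epsilon^2 .
\]
This is the key closing mechanism: without the expansion-generated spacetime norm, the $t^{-1}$ decay would only yield a logarithmically divergent Gr\"onwall factor, which is precisely John's blow-up mechanism. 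Similarly, $\lp{t^{p+1}(\square\phi)_{\le K-3}(t)}{L^2}\lesssim t^{p+1}\cdot\epsilon t^{-1}\cdot\lp{\partial_t\phi_{\le K-3}}{L^2}\lesssim \epsilon\, t^{p}\cdot t^{-p}E^{\frac12}\lesssim\epsilon^2$, since $E$ controls $a^2(\partial_t\phi)^2$. Feeding these bounds into Theorems~\ref{thm:intro_energyestimate} and~\ref{thm:intro_globalpointwise} with $k=K-4$ gives $\mathcal{A}\le C\epsilon+C_p'\epsilon^2$, which closes the bootstrap for $\epsilon$ small depending on $p$. This dependence is what produces the constant $C_p$ in the statement.

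\textbf{Step 3 ($p=1$) and the main obstacle.} Theorem~\ref{thm:intro_globalpointwise} is stated only for $p<1$, so the case $p=1$ needs its own argument. Here $\tau=\log t$ and the metric is conformal to a metric $dt^2$-like in $\tau$. I would use only the commutators $\partial_x^k$, which commute exactly with $\square_{\mathbf g}$. The energy estimate then closes directly with the spacetime term, giving bounded energies for $\partial_x^k\phi$. Pointwise bounds follow from Sobolev embedding in $x$: the energy controls $t\partial_t\partial_x^k\phi$ and $\nabla\partial_x^k\phi$ in $L^2$, and $H^{3/2+}$ embeds in $L^\infty$, which explains the restriction $k<K-\frac32$. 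These bounds give exactly $|\partial_t\phi|\lesssim\epsilon t^{-1}$, which is all Step 2 uses.

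The step I expect to be hardest is Step 1. The commutator errors from $S$ and $\partial_\tau$ must appear with the right time weight and only through $\partial_t$ (not $\nabla_x$), so that they are absorbable by the $C(p)$-term. Any residual $\nabla_x$ error would carry only $t^{-p}$ decay, which is too weak. I would first compute $[\square_{\mathbf g},S]$ explicitly in $(\tau,x)$ coordinates to verify this structure.
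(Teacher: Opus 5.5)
Your proposal is correct and follows the paper's proof. It bootstraps the commuted energy together with the expansion-generated spacetime norm $\|t^{p-\frac12}\partial_t\phi_{\le K}\|_{L^2[t_0,t]}$, places the lower-order factor of $(\partial_t\phi)^2$ in $L^\infty$ via the $t^{-1}$ decay of Theorem~\ref{thm:intro_globalpointwise} (possible since $K/2\le K-4$), and for $p=1$ commutes only with $\partial_x$ and uses Sobolev embedding. The only difference is organizational: you put the $L^\infty$ norms directly into the bootstrap quantity, whereas the paper bootstraps only the energy and recovers the pointwise bounds by a self-absorbing argument (Lemma~\ref{ptwise_nonlinear}, using $\lfloor (k+4)/2\rfloor\le k$ for $k\ge 4$); both versions close for the same reason.
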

In combination with the results of \cite{jaj}, we obtain the following.
\begin{corollary}[Small data global existence of solutions to the Fritz John equation in expanding spacetimes]
    Small data future global existence, in the sense of theorem \ref{thm:SDGE}, holds for the Fritz John equation
    \[
    \square_{\mathbf g}  \phi = (\partial_t \phi)^2,
    \]
    on the background FLRW spacetime, with  $\mathbb{R}^{+}\times\mathbb{R}^3$ topology and metric  
    \[
    \mathbf g  = -dt^2 + t^{2p} \br{(dx^1)^2 + (dx^2)^2 + (dx^3)^2},
    \]
    for every $p > 0$.
\end{corollary}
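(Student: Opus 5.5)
The corollary follows by splitting the range $p>0$ into two regimes and quoting a global existence result in each.

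\emph{Regime $0<p\le 1$.} This is exactly the setting of Theorem~\ref{thm:SDGE}. For sufficiently regular, small, compactly supported data $(\phi_0,\phi_1)$ prescribed at any $t_0>0$, that theorem gives a unique future global solution of $\square_{\mathbf g}\phi=(\partial_t\phi)^2$ with $\mathbf g=\mathbf g_p$. Nothing further is needed here.

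\emph{Regime $p>1$.} This is the accelerated case treated in \cite{jaj}. There the equation $\square_{\mathbf g}\phi=(\partial_t\phi)^2$ falls within the class of quadratic derivative nonlinearities handled by that paper, as noted in the footnote of the introduction. Small data future global existence for FLRW metrics with $a(t)=t^p$, $p>1$, is then a consequence of the integrability of $a^{-1}=t^{-p}$ on $[t_0,\infty)$.

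The only point requiring care is to check that the notion of ``small data global existence in the sense of Theorem~\ref{thm:SDGE}'' is matched by the hypotheses and conclusions of \cite{jaj}. Concretely, I will verify three things:
\begin{itemize}
\item the data in \cite{jaj} are prescribed at some $t_0>0$ and are small in a Sobolev norm of sufficiently high order, which is implied by the smallness of $K$ derivatives assumed here;
\item compact support is either assumed there or not needed (if it is not needed, our hypothesis is stronger and still suffices);
\item the conclusion there is existence and uniqueness of a smooth solution on all of $[t_0,\infty)\times\mathbb{R}^3$.
\end{itemize}
The smallness threshold may depend on $p$; this is harmless because the corollary is stated for each fixed $p>0$.

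\emph{Main obstacle.} The expected difficulty is this reconciliation of regularity classes and smoothness thresholds between the two works. If \cite{jaj} uses a different energy, for example one weighted in conformal time, the two energies are equivalent at the single time $t_0$ up to constants depending on $t_0$ and $p$, so the smallness assumptions transfer directly. The pointwise decay rates in the two regimes differ, but the corollary asserts only global existence and uniqueness, so no common decay statement is required. Combining the two regimes covers every $p>0$, which proves the corollary.
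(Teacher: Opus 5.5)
Your proposal is correct and is the same argument as the paper's, which obtains the corollary by combining Theorem~\ref{thm:SDGE} for $0<p\le 1$ with the accelerated-expansion result of \cite{jaj} for $p>1$. In that result, $(\partial_t\phi)^2$ belongs to the admissible class of nonlinearities, and the integrability of $a^{-1}$ closes the argument. Your checks on data classes and thresholds are more explicit than anything the paper writes down, but they do not change the route.
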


\subsection{Main techniques of the proof}
\label{secTechniques}
We prove physical space bounds that apply both to the linear and non-linear setting. In the linear case, these bounds can be interpreted, in a sense that will be made clear below,  either as dispersion-type estimates or as expansion-type estimates.  

In fact, let $\phi$ solve the \textbf{linear} wave equation $\square_{\mathbf g} \phi = 0$.
Then, the energy estimate in theorem \ref{thm:intro_energyestimate} allows to prove (after commuting with spatial derivatives and using Sobolev inequalities) a
\begin{equation} \label{expansion_type_est}
|\partial_t \phi| \lesssim \frac{1}{t^p}
\end{equation}
decay. We interpret this as an expansion-type of estimate
for the following reasons: Firstly, this time decay becomes negligible\footnote{Note that most of our results degenerate in the $p \to 0$ limit. This is expected due to the blow-up behaviour of the Fritz John equation in Minkowski spacetime. We emphasize this in the final statement of our results, see already section \ref{setup}.} in the $p$-small regime, similarly to the Minkowski ($p=0$) case, for which no expansion occurs. 
 Secondly, we can compare this decay to the case
when the spatial sections are compact and given by a flat torus, where dispersion is absent and the only decay mechanism is given by expansion.
In that case, the sharp decay for solutions to the linear wave equation is $|\partial_t \phi| \lesssim t^{-2p}$ when $0 < p < 1$ (see \cite[Appendix A]{jpj}).   One should also note that, 
for the $p>1$ case, when the expansion is stronger, these expansion-type estimates, together with the integrability of the inverse of the scale factor, are sufficient to prove global existence in the non-linear case~\cite{jaj}.

On the other hand, commutations with all $\Gamma \in \mathbb{L}$ and using generalized Klainerman-Sideris estimates
in combination with the estimate \eqref{expansion_type_est},
 yield additional pointwise decay of the form
    \begin{equation} \label{dispersion_type_estimate}
    |\partial_t \phi| \lesssim \frac{1}{t}
    \end{equation}
for solutions to the linear wave equation. We interpret
this bound as a consequence of the intertwining nature of dispersion and expansion since this requires the expansion-type estimate \eqref{expansion_type_est} and since, in the $p$-small regime, it can be compared to the 
global $t^{-1}$ decay for linear waves in the Minkowski ($p=0$) case. In the latter, the decay is entirely due to dispersion. 


To prove the energy (expansion-type) estimate, we exploit an energy current argument with vector field multiplier $X = a^{-1}\partial_t$. Differently from \cite{jaj}, we keep all terms containing the spacetime norm
$\| t^{p-\frac12} \partial_t \phi \|_{L^2[t_0, t_1]}$. This will be crucial to absorb errors coming from later commutations.

To show the global pointwise estimate \eqref{dispersion_type_estimate}, our analysis relies on the following basic fact: given $[\square_{\mathbf g}, \Gamma]\phi$, where $\Gamma$ is a vector field of the Poincaré algebra,\footnote{That is, the Lie algebra $\mathbb{L}$ defined in \eqref{subalgebra}, together with the boosts vector fields $\Omega_{0j} = -\tau \partial_j-x^j \partial_{\tau}$, $j=1, 2, 3$.} this commutator has different long-time behaviours depending on the vector field $\Gamma$. Due to the time-dependence of the metric, the commutators coming from the boosts vector fields $\Omega_{0j} = -\tau \partial_j - x^j \partial_{\tau}$ give rise to slowly-decaying terms, compared to the other $\Gamma$ fields. In particular,
if one were to include the boosts, it is not clear how to generalize
the Klainerman-Sobolev estimates to the FLRW scenario in order to prove small data global existence, due to such error terms.

Here we bypass this problem by adopting a different approach: since the ``bad'' error terms originate from the $\Omega_{0j}$ vector fields, we exploit the good structure of the boosts-free subalgebra $\mathbb{L}$ of the Poincaré algebra. This was already used by Klainerman and Sideris in \cite{kl_sid} to prove almost-global existence results for some quasilinear wave equations in $\mathbb{R}^{1+3}$. Our generalization of the Klainerman-Sideris estimates to the FLRW case, together with a careful computation of the commutation errors 
inside and outside the wave zone, allows to obtain the global pointwise
estimate described above.

At the nonlinear level, we exploit the decay estimate 
\eqref{dispersion_type_estimate} through a continuity argument. At first glance, one 
might worry that this global pointwise decay rate {\em{is still non-integrable in time}} 
and could therefore lead to finite-time blow-up, as in Minkowski space for the 
Fritz John equation. The key distinction in the FLRW setting with $0<p<1$
is that the expansion-type energy estimate additionally provides control of the spacetime norm $\| t^{p-\frac12} \partial_t \phi \|_{L^2[t_0, t_1]}$.
This norm is sufficiently strong to control the error terms generated by commutation with the boost-free subalgebra $\mathbb{L}$. As a consequence, the bootstrap estimates can be closed without requiring time-integrable decay.
This is reminiscent
of the way in which the integrated local energy decay (ILED) spacetime
norm is used to control errors terms with time-weighted vector fields
within the asymptotically flat regime (see \cite{Olivernontrap} and \cite{OliverSterbenzBH}).

\subsection{Related works and outlook}
The Einstein equations, in harmonic coordinates, can be cast as a system of quasilinear wave equations for the metric tensor. A fundamental surrogate model to study the main stability features of metric solutions is the linear wave equation on a curved background. If $\metric$ is an FLRW metric, as in our case, explicit (Kirchhoff-type) expressions for solutions to 
\[
\square_{\metric} \phi = 0
\]
were derived for specific choices of the scale factor $a(t)$ and for different curvatures of the spatial sections (see \cite{KlainermanSarnak1981} and the more recent analyses in \cite{AbbasiCraig2014, NatarioRossetti2023}). In this context, energy methods proved effective to derive physical-space estimates of solutions and of their derivatives (see \cite{jpj} for the $p > 1$ case, and \cite{Haghshenas2025} for the $ 0 < p < 1$ case). We remark that a physical-space proof yielding sharp global decay rates of solutions is still lacking, whereas a Fourier-based proof can be derived from the results in \cite{Wirth2004}.  Already at the linear level, the main difficulties of the analysis of waves originates from the non-stationarity of such spacetimes and from the fact that they are not asymptotically flat (although they are conformally flat, see the conformal time coordinate introduced in \eqref{subalgebra}).

The analysis of solutions to
\begin{equation} \label{nonlinear_intro}
\square_{\metric} \phi = \mathcal{N}(\phi, \phi’),
\end{equation}
where $\mathcal{N}$ encodes non-linear terms, is relevant both to understand the impact of a dynamical background on non-linear effects, and also as a step towards the study of the stability of solutions to the Einstein-matter equations. 

When $\mathcal{N}$ contains only derivatives of solutions, there exist a structure for $\mathcal{N}$, inspired by the decay of linear waves, that guarantees small data global existence when the underlying FLRW spacetimes undergo an accelerated expansion (in particular, the $p>1$ case of our setup) \cite{jaj}. Such a non-linear structure  includes the case $\mathcal{N} = (\partial_t \phi)^2$ that we study in the present work. Therefore, our work generalizes the results of \cite{jaj} for $\mathcal{N} = (\partial_t \phi)^2$ to the case of non-accelerated spacetimes ($0<p\leq 1$).  We focus on this type of non-linear term due to its relevance in the analysis of wave equations in $\mathbb{R}^{1+3}$. However,  our methods are general and theorem \ref{thm:SDGE} is the only result of the present work that relies on this specific form of the non-linear terms.

In \cite{TsutayaWakasugi2021} (see also \cite{WeiYong2024}), it was proved that, in the decelerating regime $0 < p < 1$, there exist non-trivial solutions to \eqref{nonlinear_intro} with $\mathcal{N} = |\partial_t \phi|^q$ that blow up in finite time when $1 < q \le 1+ \frac{1}{1+p} < 2$. This is related to the Glassey conjecture in FLRW spacetimes.
We refer to the review in \cite{TsutayaWakasugi2022} for an overview on the case $\mathcal{N}=\mathcal{N}(\phi)$, which is related to the generalization of the Strauss conjecture to cosmological spacetimes.

Furthermore, we point out the numerical work \cite{fa}, where the long-time decay of linear waves and derivatives in FLRW spacetimes was investigated, and \cite{RossettiVanoVinuales2025}, where the analysis was repeated by employing more robust numerical methods (involving a hyperboloidal foliation of the spacetime) and where non-linear terms were also included. In the latter, it was suggested that small data global existence may actually have failed for the Fritz John equation \eqref{fritz_john_intro} for very slow spacetime expansions. This was put in the context of the challenging problem of studying non-linear instabilities from a numerical point of view. The present paper disproves such a suggestion: we show (see theorem \ref{thm:SDGE}) that small data global existence holds for the metric given by \eqref{metric_intro}, \eqref{scale_factor_intro} for every positive value of $p$. 

The future stability of FLRW metrics has been investigated in the case of the Einstein-massless Vlasov system. This requires a radiation-type of fluid ($p=1/2$ in \eqref{scale_factor_intro}), see \cite{TaylormVlasov} in the spherically symmetric setting and \cite{TaylorVelozoVlasov} for an analysis of the Vlasov equation on fixed FLRW backgrounds. When $1/3 \le p \le 2/3$,\footnote{This assumption is required so that the square of the speed of sound of the fluid lies in $[0, 1]$.} FLRW metrics solve the Einstein-Euler system of equations (see \cite{fluidsReview} for a recent review and \cite{RodnianskiSpeck, Speck,Oliynyk} for related works). Similarly to the case of the present work on the wave equation, expansion provides a regularizing effect also for relativistic fluids \cite{SpeckFixedBackground, FajmanOlinykWyattNonAccelerated, FajmanOfnerOliynykWyatt, phase_transition_fluids}. The FLRW metrics can also be interpreted as solutions of the Einstein-scalar field or Einstein-non linear scalar field system, see,  e.g.\ \cite{RingstromNonLinear, FournScalarField} and the recent \cite{Bernhardt}.

In the setting of the present paper, we focus on a specific semilinear wave equation as a model to understand the regularizing effects of spacetime expansion. We leave the task of finding a null condition in the non-accelerated setting and generalizing our methods to quasilinear systems for future work.

\subsection{Outline of the paper}
In section \ref{setup} we present the main problem and state the rigorous version of our results. In section \ref{secEnergy} we derive the main energy estimate that we will use throughout the paper. In section \ref{secCommutators} we analyse the commutation relations of the Laplace-Beltrami operator in expanding FLRW spacetimes, with respect to the generators of the Poincaré algebra. In section \ref{secPointwise} we obtain the pointwise (Klainerman-Sideris) decay estimates by means of commutation with a boosts-free subset of the Poincaré algebra. In section \ref{secGlobal} we show the small data global existence result in the specific case of the Fritz John equation $\square_g \phi = (\partial_t \phi)^2$, when $0 < p < 1$, by exploiting the previous energy and pointwise bounds and a bootstrap argument.  This result is then proved in the case $p=1$, without need of commuting, in section \ref{secp1}.

\section{Setup and Main Results}
\label{setup}
%
Let $(\mathcal{M},{\mathbf{g}})$
be a Friedmann-Lemaitre-Robertson-Walker (FLRW) spacetime
with topology $\mathbb{R}^+\times\mathbb{R}^3$ and metric
\begin{equation}
\label{metricFLRW}
   {\mathbf{g}} := -dt^2 + t^{2p}\delta_{ij}dx^idx^j\;,  
\end{equation}
where $0<  p \le 1$ and $\delta_{ij}$ denotes the Kronecker delta.
Our choice of the scale factor
\begin{equation}
\label{power_law}
 a(t)=t^p,
\end{equation}
%
with $0<p\leq 1$, implies that the cosmologies of the present work undergo a {\bf{non-accelerated}}
power law expansion in the direction of positive time $t$. A relevant consequence of the previous choices is that
\begin{equation}
\label{non_accel}
\lim_{T\rightarrow \infty} \int_{t_0}^{T}  \frac{1}{a(s)} ds=
\lim_{T\rightarrow \infty}\int_{t_0}^{T}  \frac{1}{s^p} ds= \infty ,
\qquad 0< p \le 1\;,
\end{equation}
where, from now on,  $t_0>0$ is fixed. Consider the covariant wave operator
\begin{equation}
\square_{\mathbf{g}}\phi = \frac{1}{\sqrt{|\mathbf{g}|}}
\partial_{\alpha}\left(\mathbf{g}^{\alpha\beta}
\sqrt{|\mathbf{g}|} \partial_{\beta}\phi\right ) \;,
\end{equation}
where $|\mathbf{g}|=-\det(\mathbf{g}_{\alpha\beta})$, $\mathbf{g}^{\alpha\beta}$
are the components of the inverse of $\mathbf{g}_{\alpha\beta}$, and where
greek indices run from $0$ to $3$. For the FLRW metric~\eqref{metricFLRW},we have
\bea
\lb{flrw_eq}
\square_{\mathbf{g}}\phi=-\partial_t^2 \phi
-\frac{3 p}{t}\partial_t \phi+ \frac{1}{t^{2p}}\Delta \phi\;,
\eea
where $\Delta$ is the Laplace operator on $\mathbb{R}^{3}$. In terms of the \textit{conformal time coordinate} $\tau$, defined up to a constant by 
\begin{equation} \label{def_tau}
\tau = \int \frac{dt}{a(t)} \; ,
\end{equation}
the FLRW metric becomes
\[
\mathbf{g} = a^2(\tau) \left ( -d\tau^2 + \br{dx^1}^{2} + \br{dx^2}^2 + \br{dx^3}^2 \right )
\]
and the respective covariant wave operator reads
\begin{equation} \label{Laplace_Beltrami_FLRW}
\square_{\mathbf{g}} \phi = -\frac{1}{a^2}\partial_{\tau}^2 \phi -
2\frac{\partial_{\tau} a}{a^3} \partial_{\tau} \phi + \frac{1}{a^2}\Delta \phi \; .
\end{equation}
\subsection{Vector fields and notation} \label{section:notation}
Next, we define the Lie algebra\footnote{A quick computation shows that $\mathbb{L}$ is a Lie subalgebra of the Poincaré algebra $\mathbb{P} = \{ \partial_{\tau}, \partial_i, S, \Omega_{ij}, \Omega_{0j}\}$, with $\Omega_{0j} = -\tau \partial_j - \delta_{jk }x^k\partial_{\tau}$ and $i, j = 1, 2, 3$.}
of vector fields in $(t,x^i)$ coordinates:
    \begin{equation*}
        \mathbb{L} \  =  \ \big\{  \partial_\tau= a\partial_t
        ,\ \partial_i=\partial_{x^i} , \  
        S=t\partial_t + r\partial_r  , \ 
        \Omega_{ij}=x^i\partial_j-x^j\partial_i
        \big\} \ , \label{mod_fields}
    \end{equation*}
where $1\leq i\leq 3$. For translational derivatives we also define
    \[\nabla_{\tau, x}\phi=(\partial_{\tau}\phi, \partial_1\phi, \partial_2\phi ,\partial_3\phi)\ ,
    \qquad \nabla_{t, \tilde x}\phi=(\partial_{t}\phi,
    a^{-1}\partial_1\phi, a^{-1}\partial_2\phi ,a^{-1}\partial_3\phi) \ , \]
where we denote the {\it{unit normalized}}
spatial derivatives by
\[\tilde \partial_i:=a^{-1}\partial_{x^i} \ .\]
Given a multi-index of $1+3$ non-negative integers
$\alpha=(\alpha_0,\alpha_1, \alpha_2 ,\alpha_{3})$ we define $\nabla^{\alpha}_{\tau, x} \phi$ as the vector of ${|\alpha| + 3 \choose 3}$ different components, where each component can be expressed as
    \[ \partial_{\tau}^{\alpha_0} \partial_1^{\alpha_1} \partial_2^{\alpha_2} \partial_3^{\alpha_3} \phi, \]
where again $\alpha_0 + \alpha_1 + \alpha_2 + \alpha_3 = |\alpha|$.
For a triplet $\alpha=(i,j,k)$ of
multi-indices $i,j$ and a nonnegative integer $k$
we denote
$|\alpha|=|i|+|j|+k$ and
\[\phi_{\alpha}=\nabla_{\tau,x}^{i}\Omega^{j}S^{k}\phi \ ,\]
where $\Omega:=(\Omega_{ij})$ with $1\leq i<j\leq 3$
denote the spatial rotations.
Let $f:\mathbb{R}^{+}\times \mathbb{R}^{n}\rightarrow \mathbb{R}$
have sufficiently fast decay
at spatial infinity. We use the norms
\[\lp{f}{L^2}:=
\lp{f(t,\cdot)}{L^2}=\left(\int_{\mathbb{R}^{3}}\left|f(t,x)\right|^2  dx\right)^{\frac12}\;,
\qquad \lp{f}{L^2[t_0,t_1]}:=\left(\int_{t_0}^{t_1}
\!\!\!\int_{\mathbb{R}^{3}}\left|f(t, x)\right|^2  dxdt\right)^{\frac12}\;. \]
Combining these definitions, for $n \ge 0$ we write
\begin{equation} \label{def_len} 
\lp{\Gamma^{\le n} \phi }{L^2} :=
\lp{\phi_{\leq n}}{L^2}:=\sum_{0 \le |\alpha|\leq n}\lp{\phi_{\alpha}}{L^2}
=\sum_{0 \le |\alpha|\leq n}\lp{\nabla_{\tau,x}^{i}\Omega^{j}S^{k}\phi}{L^2} \ ,
\end{equation}
and we use an analogous notation for combinations of vector fields, for instance:
\[
\lp{ \Gamma \, \Gamma^{\le n} \phi}{L^2} \coloneqq \sum_{0 \le |\alpha|\leq n}\lp{ \Gamma \nabla_{\tau,x}^{i}\Omega^{j}S^{k}\phi}{L^2}
\]
for any $ \Gamma \in \mathbb{L}$ 
and, due to the commutation relations of $\mathbb{L}$, we will see that
\[
\lp{\Gamma^{\le m} \Gamma^{\le n} \phi}{L^2} \sim \lp{\Gamma^{\le m + n}  \phi}{L^2} \; ,
\]
for any $n, m \ge 0$.
In a similar way, we also define the following notation for
the commutators of $\square_{\mathbf{g}}$ with products
of vector fields $\Gamma\in \mathbb{L}$:
\[\lp{[\square_\mathbf{g},\Gamma^{\leq n}]\phi}{L^2}
:=\sum_{|\alpha|\leq n}\lp{[\square_{\mathbf{g}},\Gamma^{\alpha}]\phi}{L^2}
=\sum_{|\alpha|\leq n}\lp{[\square_{\mathbf{g}},\nabla_{\tau,x}^{i}
\Omega^{j}S^{k}]\phi}{L^2} \ ,\]
with analogous definitions for the spacetime norms $L^{2}[t_0,t_1]$ and for the supremum norm. We stress that the commutation relations between the above vector fields and $\square_{\mathbf{g}}$ will play a paramount role. For instance, given two integers $n$ and $m$, the relation between $| (\square_g \phi_{\le n})_{\le m}|$ and $|(\square_g \phi)_{\le n + m}|$ is non-trivial and will be analyzed in detail (see already corollary \ref{coroll:switch_brackets}).

\subsection{Main Results I: Linear Estimates}
\label{subsec_main_linear}
In this section we collect our main $L^2$ and $L^{\infty}$ estimates providing control of general smooth functions in $\mathbb{R}^{+}\times\mathbb{R}^3$, with spatial compact support, involving derivatives in $\mathbb{L}$ and the Laplace-Beltrami operator of FLRW metrics:
\begin{theorem}
\label{energyEst_Gamma_many_th}
Let $\mathbf{g}$ denote a smooth metric
of the form~\eqref{metricFLRW} with $0<p<1$, describing an FLRW spacetime
with topology $\mathbb{R}^{+}\times\mathbb{R}^3$. Let $0< \tau_0< \tau_1$ and assume that $\phi(\tau, \cdot) \in C^{\infty}_c(\mathbb{R}^3)$ for every $\tau \in (\tau_0, \tau_1)$.
For any fixed $k\geq 0$,
there exists a constant $C:=C(p,k)>0$ such that
the following estimate holds in $(\tau,x)$ coordinates:
    \begin{align}
    \sup_{\tau_0\leq \tau \leq  \tau_1}
    \lp{\nabla_{\tau, x}\phi_{\leq k}(\tau, \cdot)}{L^2}+
    C\lp{\tau^{-\frac{1/2}{1-p}}\partial_{\tau}\phi_{\leq k}}
    {L^2[\tau_0,\tau_1]}&\lesssim_p  
    \lp{\nabla_{\tau, x}\phi_{\leq k}(\tau_0, \cdot)}{L^2}
   +\lp{\tau^{\frac{p+1/2}{1-p}}
    (\square_{\mathbf{g}}\phi)_{\leq k}}{L^2[\tau_0,\tau_1]} \, ,
     \label{energyEst_Gamma_many}
\end{align}
where the constants depending on $p$ degenerate in the limit $p \to 0$.
Equivalently, for the unit normalized components of the gradient
    in $(t,x)$ coordinates we have: 
       \begin{align}
    \sup_{t_0\leq t \leq  t_1}
    \lp{t^{p}\nabla_{t, \tilde x}\phi_{\leq k}(t, \cdot)}{L^2}+
    C\lp{t^{p-1/2}
    \partial_{t}\phi_{\leq k}}
    {L^2[t_0,t_1]}&\lesssim_p  
    \lp{t^{p}\nabla_{t , \tilde x}\phi_{\leq k}(t_0, \cdot)}{L^2}
   +\lp{t^{p+1/2}
    (\square_{\mathbf{g}}\phi)_{\leq k}}{L^2[t_0,t_1]}
     \label{energyEst_Gamma_many_tx} \, .
\end{align}
\end{theorem}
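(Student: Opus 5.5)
The plan is to prove \eqref{energyEst_Gamma_many} in $(\tau,x)$ coordinates by induction on $k$, and then obtain \eqref{energyEst_Gamma_many_tx} by the change of variables $dt = a\,d\tau$, $\partial_t = a^{-1}\partial_\tau$.

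\textbf{Notation and a key identity.} Since $a(t)=t^p$, we have $t = c_p\,\tau^{\frac{1}{1-p}}$ and $a = c_p'\,\tau^{q}$ with $q := \frac{p}{1-p}>0$. Hence $\partial_\tau a/a = q/\tau$. By \eqref{Laplace_Beltrami_FLRW},
\[
a^2\square_{\mathbf g}\phi = L\phi, \qquad L := -\partial_\tau^2 - \frac{2q}{\tau}\partial_\tau + \Delta .
\]
The point is that $L$ is, up to the good-signed damping term, the Minkowski wave operator in $(\tau,x)$.

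\textbf{Base case $k=0$.} I would multiply $L\phi$ by $\partial_\tau\phi$ (this is the multiplier $X=a^{-1}\partial_t$ of the introduction) and integrate over $\mathbb{R}^3$. Compact support makes the boundary terms vanish, giving
\[
\frac{d}{d\tau}\,\frac12\int\bigl(|\partial_\tau\phi|^2+|\nabla_x\phi|^2\bigr)dx + 2q\int \tau^{-1}|\partial_\tau\phi|^2\,dx = -\int a^2\,\square_{\mathbf g}\phi\,\partial_\tau\phi\,dx .
\]
Integrating in $\tau$ produces the damping term $2q\,\|\tau^{-1/2}\partial_\tau\phi\|^2$ on the left, with constant $2q$ degenerating as $p\to0$. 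On the right I would use Cauchy--Schwarz with weight $\tau^{\mp1/2}$: half of the damping term is absorbed, and what remains is $C_q\int\!\!\int \tau a^4|\square_{\mathbf g}\phi|^2$. A $\sup$ in $\tau$ then gives the energy part.

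Translating the weights requires care with how the spacetime norm in the $\tau$-variables is measured. In $(t,x)$, $a/t = t^{p-1}\sim\tau^{-1}$, so $\int t^{2p-1}|\partial_t\phi|^2\,dt = \int a t^{-1}|\partial_\tau\phi|^2\,d\tau \sim \int\tau^{-1}|\partial_\tau\phi|^2\,d\tau$. Likewise $\tau a^4\sim t^{2p+1}\,a^{-1}$ in $d\tau$-measure, which is exactly $t^{2p+1}$ in $dt$-measure. I would then express these as powers of $\tau$ via $t\sim\tau^{1/(1-p)}$ to match the stated exponents. Since the two statements differ only by this Jacobian bookkeeping, I would treat \eqref{energyEst_Gamma_many_tx} as the primary form. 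This gives Theorem \ref{thm:intro_energyestimate}.

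\textbf{Commutation with $\mathbb{L}$.} $\partial_i$ and $\Omega_{ij}$ commute with $L$. For $S=\tau\partial_\tau+r\partial_r$, every term of $L$ (including $\tau^{-1}\partial_\tau$) is homogeneous of degree $-2$, so $[S,L]=-2L$, i.e.\ $L S\phi = (S+2)L\phi$. Combined with $S(a^2F)=a^2(S+2q)F$, this shows that $a^2\square_{\mathbf g}(S\phi)$ is a combination of $a^2\Gamma^{\le1}\square_{\mathbf g}\phi$ with no error at all. The only genuine error comes from $\partial_\tau$:
\[
[\partial_\tau,L] = \frac{2q}{\tau^2}\partial_\tau .
\]
Iterating, $a^2\square_{\mathbf g}\phi_\alpha = a^2(\square_{\mathbf g}\phi)_{\le|\alpha|}$ plus terms of the form $\tau^{-1-m}\,\partial_\tau\phi_{\le |\alpha|-1}$ with $m\ge1$, with coefficients depending on $q$ and $k$.

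\textbf{Induction and main obstacle.} Applying the base estimate to $\phi_\alpha$, the error's contribution to the source term is bounded by
\[
\int\!\!\int \tau\,\tau^{-2-2m}|\partial_\tau\phi_{\le k-1}|^2 \le \tau_0^{-2m}\,\|\tau^{-1/2}\partial_\tau\phi_{\le k-1}\|^2 .
\]
This is precisely the damping spacetime norm at order $k-1$, which the induction hypothesis bounds by the data and source at order $k-1$. The main obstacle is ensuring that these commutator errors land \emph{exactly} in the spacetime norm produced by expansion, rather than requiring smallness. That is why I would keep the damping term rather than discard it, and accept constants $C(p,k)$ that grow with $k$ and blow up as $p\to0$. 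Summing over $|\alpha|\le k$ closes the induction.
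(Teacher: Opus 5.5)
Your proposal is correct. Its skeleton is the paper's: the multiplier $a^{-1}\partial_t$ keeps the damping bulk term, and an induction over commutations with $\mathbb{L}$ pays for each commutator error with the lower-order bulk norm, without smallness. Your identity $\partial_\tau\phi\cdot L\phi$ is exactly the paper's multiplier identity, since $a^{-1}\partial_t=a^{-2}\partial_\tau$ and $\sqrt{|\mathbf g|}\,dt\,dx=a^4\,d\tau\,dx$.

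The genuine difference is in the commutator lemma. The paper (Proposition \ref{prop:commutation_relations}, Lemma \ref{lemma:comm_estimate}) bounds the $\partial_\tau$-coefficient of $[\square_{\mathbf g},S]$ by $t^{-1-p}$. That produces an error $t^{-1}\partial_t\phi$ at exactly the weight of the bulk term, which Theorem \ref{thm:commuted_energy} absorbs by adding a large multiple $M$ of the lower-order estimate.

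Your homogeneity identity $LS=(S+2)L$ shows this coefficient, $\partial_\tau a/a^3+\tau\,(a\partial_\tau^2a-(\partial_\tau a)^2)/a^4$, vanishes identically once $\tau$ is normalized so that $a\propto\tau^{p/(1-p)}$. With a shifted normalization of $\tau$ it is still only $O(\tau^{-2})\,\partial_\tau\phi$ in your $L$-form, the same harmless type as the $\partial_\tau$ error. So in your version the only errors come from $\partial_\tau$, and they carry a spare factor $\tau^{-2}\le\tau_0^{-2}$.

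This buys a cleaner and sharper commutator structure; it would also remove the $t^{-1}$ term in Corollary \ref{coroll:switch_brackets}. The closing mechanism, control of the lower-order bulk norm by the induction hypothesis, is the same as the paper's.

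One bookkeeping slip: $\tau a^4\sim t^{2p+1}a$, not $t^{2p+1}a^{-1}$. Your conclusion that the source weight becomes $t^{2p+1}$ in $dt$-measure is nonetheless right.
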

When $p=1$, the case $k=0$ (no commutations with the vector fields in $\mathbb{L}$) is sufficient to prove small data global existence.
\begin{corollary}
Let $\mathbf{g}$ denote a smooth metric
of the form~\eqref{metricFLRW} with $p=1$, describing an FLRW spacetime
with topology $\mathbb{R}^{+}\times\mathbb{R}^3$. Let $0<t_0<t_1$ and assume that $\phi(t, \cdot) \in C^{\infty}_c(\mathbb{R}^3)$ for every $t \in (t_0, t_1)$. Then, there exists a constant $C:=C(p)>0$ such that
the following estimate holds:
\begin{align}
    \sup_{t_0\leq t \leq  t_1}
    \lp{t \nabla_{t, \tilde x}\phi (t, \cdot)}{L^2}+
    C\lp{t^{1/2}
    \partial_{t}\phi}
    {L^2[t_0,t_1]}&\lesssim  
    \lp{t \nabla_{t , \tilde x}\phi(t_0, \cdot)}{L^2}
   +\lp{t^{3/2}
    \square_{\mathbf{g}}\phi}{L^2[t_0,t_1]}
     \, .
\end{align}
\end{corollary}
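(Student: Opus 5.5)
The estimate follows from an energy identity for the multiplier $X=a^{-1}\partial_t$, i.e.\ multiplying the equation by $t^{2p}\partial_t\phi$, with every spacetime term coming from the expansion kept rather than discarded. With $p=1$ we have $a=t$ and
\[
\square_{\mathbf g}\phi=-\partial_t^2\phi-\frac{3}{t}\partial_t\phi+\frac{1}{t^{2}}\Delta\phi .
\]
Multiply by $-t^{2}\partial_t\phi$ and integrate over $\mathbb{R}^3$; there are no boundary terms because $\phi(t,\cdot)$ has compact support. This gives
\[
-\int t^2\partial_t\phi\,\square_{\mathbf g}\phi\,dx=\frac{d}{dt}\,\frac12\int\big(t^2(\partial_t\phi)^2+|\nabla_x\phi|^2\big)dx-\int t(\partial_t\phi)^2dx+3\int t(\partial_t\phi)^2dx .
\]
Here $\int t^2\partial_t\phi\,\partial_t^2\phi=\tfrac{d}{dt}\tfrac12\int t^2(\partial_t\phi)^2-\int t(\partial_t\phi)^2$, and $-\int\partial_t\phi\,\Delta\phi=\tfrac{d}{dt}\tfrac12\int|\nabla_x\phi|^2$. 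With $E(t)=\tfrac12\int(t^2(\partial_t\phi)^2+|\nabla_x\phi|^2)\,dx=\tfrac12\lp{t\nabla_{t,\tilde x}\phi}{L^2}^2$, the identity becomes
\[
\frac{dE}{dt}+2\int t(\partial_t\phi)^2\,dx=-\int t^2\partial_t\phi\,\square_{\mathbf g}\phi\,dx .
\]
The coefficient of the spacetime term is $3p-p=2p$, which equals $2$ at $p=1$. It is positive, so the expansion gives a damping term with a good sign.

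Next, integrate from $t_0$ to $t$ and apply Cauchy--Schwarz to the right-hand side, splitting the weight as $t^{2}=t^{1/2}\cdot t^{3/2}$:
\[
\Big|\int_{t_0}^{t}\!\!\int t^2\partial_t\phi\,\square_{\mathbf g}\phi\Big|\le \lp{t^{1/2}\partial_t\phi}{L^2[t_0,t]}\,\lp{t^{3/2}\square_{\mathbf g}\phi}{L^2[t_0,t]}\le \lp{t^{1/2}\partial_t\phi}{L^2[t_0,t]}^2+\tfrac14\lp{t^{3/2}\square_{\mathbf g}\phi}{L^2[t_0,t]}^2 .
\]
The first term on the right is absorbed into the $2\lp{t^{1/2}\partial_t\phi}{L^2[t_0,t]}^2$ on the left, leaving a coefficient of $1$. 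This yields
\[
E(t)+\lp{t^{1/2}\partial_t\phi}{L^2[t_0,t]}^2\le E(t_0)+\tfrac14\lp{t^{3/2}\square_{\mathbf g}\phi}{L^2[t_0,t_1]}^2
\]
for every $t\in[t_0,t_1]$. The spacetime norm on the left is monotone in $t$, so we may take the supremum of $E$ and set $t=t_1$ in the spacetime term, then take square roots. The result is the stated estimate with an absolute constant $C$, which is consistent with the absence of a $p$-subscript on $\lesssim$ for $p=1$.

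The only delicate point is the sign bookkeeping of the two $t(\partial_t\phi)^2$ contributions. One comes from differentiating the weight $t^2$, the other from the friction term $\tfrac{3}{t}\partial_t\phi$, and their net coefficient $2$ must remain positive for the absorption to work. It does, so there is no real obstacle. This is simply the $k=0$, $p=1$ instance of the energy identity underlying Theorem~\ref{energyEst_Gamma_many_th}, and it needs no commutation with $\mathbb{L}$ because no vector fields are applied.
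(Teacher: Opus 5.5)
Your proof is correct and follows the paper's argument. The corollary is just the $p=1$ case of the improved energy estimate \eqref{energyEst}, which the paper proves with the multiplier $X=a^{-1}\partial_t$: the deformation-tensor term produces exactly your bulk coefficient $2p=2$, and the source term is then absorbed by Young's inequality. The only difference is that you carry out the multiplier identity by hand rather than through the deformation tensor and Stokes' theorem, which also makes the constant explicit.
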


Using the above estimates, we produce a global 
pointwise decay result for the gradient of solutions to the linear wave
equations propagating in our family of metrics. The following commuted estimate will only be exploited in the case $0 < p < 1$.

\begin{theorem}(Global pointwise decay for $\nabla_{\tau, x} \phi$)
\label{global_ptwise_decay_linear}
    Let $\mathbf{g}$ denote a smooth metric
of the form~\eqref{metricFLRW} with $0<p<1$, describing an FLRW spacetime
with topology $\mathbb{R}^{+}\times\mathbb{R}^3$. Let $0< \tau_0< \tau_1$ and assume that $\phi(\tau, \cdot) \in C^{\infty}_c(\mathbb{R}^3)$ for every $\tau \in (\tau_0, \tau_1)$. Then, for any fixed $k\geq 0$, 
    in $(\tau,x)$ coordinates, we have
    \begin{align}
    \label{global_ptdecay_grad1}
     \lp{\tau \nabla_{\tau, x}\phi_{\leq k}(\tau, \cdot)}{L^{\infty}}
     \lesssim_{p,k}\lp{\nabla_{\tau, x}\phi_{\leq k+4}(\tau_0, \cdot)}{L^2}
     &+\lp{\tau^{\frac{p+1/2}{1-p}}
     (\square_{\mathbf{g}}\phi)_{\leq k+4}}{L^2[\tau_0,\tau_1]}\\
     &\hspace{1in} +\lp{\tau^{1+\frac{2p}{1-p}}
     \big(\square_{\mathbf{g}}\phi\big)_{\leq k+1}(\tau, \cdot)}{L^2} \ ,
     \notag 
    \end{align}
    where the constants depending on $p$ degenerate in the limit $p \to 0$.
    Equivalently, for the unit normalized components of the gradient
    in $(t,x)$ coordinates we have
    \begin{align}
    \label{global_ptdecay_grad2}
      \lp{t \nabla_{t,\tilde x}(\phi_{\leq k})(t, \cdot)}{L^{\infty}}
     \lesssim_{p,k}\lp{t^{p}\nabla_{t, \tilde x}(\phi_{\leq k+4})(t_0, \cdot)}{L^2}
     &+\lp{t^{p+1/2}
     (\square_{\mathbf{g}}\phi)_{\leq k+4}}{L^2[t_0,t_1]}\\
     &\hspace{1in}+\lp{t^{p+1}
     \big(\square_{\mathbf{g}}\phi\big)_{\leq k+1}(t, \cdot)}{L^2}
      \ .  \notag
    \end{align}
\end{theorem}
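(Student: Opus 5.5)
Our plan is to deduce \eqref{global_ptdecay_grad1} from a Klainerman--Sideris type weighted $L^2$ estimate for second derivatives of $\phi_{\le k+1}$, combined with Sobolev embeddings adapted to rotations and the scaling field. The right-hand side of the target has exactly this structure: the first two terms come from the commuted energy estimate of Theorem \ref{energyEst_Gamma_many_th} with $k+4$ vector fields, and the last term comes from the elliptic Klainerman--Sideris identity at fixed $\tau$. The time weight $\tau^{1+\frac{2p}{1-p}}$ matches $t^{p+1}$, since $a^2 = t^{2p}\sim \tau^{\frac{2p}{1-p}}$ converts $\square_{\mathbf g}$ into the flat conformal operator.

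\textbf{Step 1: reduce to the flat wave operator.} We first rewrite \eqref{Laplace_Beltrami_FLRW} as
\[
-\partial_\tau^2\phi+\Delta\phi = a^2\square_{\mathbf g}\phi + 2\frac{\partial_\tau a}{a}\partial_\tau\phi ,\qquad \frac{\partial_\tau a}{a}=\frac{p}{1-p}\,\tau^{-1}.
\]
The flat part $\square_\eta=-\partial_\tau^2+\Delta$ commutes with $\mathbb L$ (up to $[\square_\eta,S]=2\square_\eta$), so $\square_\eta\phi_{\le k}$ is a combination of $(a^2\square_{\mathbf g}\phi)_{\le k}$ and $(\tau^{-1}\partial_\tau\phi)_{\le k}$. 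Since $\tau^{-1}$ is homogeneous of degree $-1$ under $S$, the latter is $\lesssim \tau^{-1}|\partial_\tau\phi_{\le k}|$.

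\textbf{Step 2: Klainerman--Sideris estimate.} We then invoke the classical estimate of \cite{kl_sid} at fixed $\tau$ (using only $\partial$, $\Omega$, $S$):
\[
\lp{\langle \tau-r\rangle \nabla^2_{\tau,x}\psi}{L^2}\lesssim \lp{\nabla_{\tau,x}\psi_{\le 1}}{L^2}+\lp{\tau\,\square_\eta\psi}{L^2},
\]
applied to $\psi=\phi_{\le k+2}$. By Step 1, $\tau\square_\eta\psi$ produces $\tau a^2(\square_{\mathbf g}\phi)_{\le k+2}$, which is the last term of the target (after index bookkeeping via Corollary \ref{coroll:switch_brackets}), plus $\nabla_{\tau,x}\phi_{\le k+2}$, which is bounded by energy. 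We close the energy terms with Theorem \ref{energyEst_Gamma_many_th}, which gives the first two terms of the right-hand side. Commutators of $\square_{\mathbf g}$ with $\Gamma$ are not needed separately, since we commute with $\square_\eta$ and only then convert.

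\textbf{Step 3: pointwise bounds.} Near and outside the wave zone ($r\ge \tau/2$), we use the weighted Sobolev inequality $r|u|\lesssim \sum_{|\beta|\le 2}\lp{\partial_r \Omega^\beta u}{L^2}^{1/2}\lp{\Omega^\beta u}{L^2}^{1/2}$ on $u=\nabla_{\tau,x}\phi_{\le k}$. This gives $\tau|\nabla\phi_{\le k}|\lesssim$ energy of $\phi_{\le k+3}$, so no $\langle\tau-r\rangle$ weight is needed there. In the interior ($r<\tau/2$), $\langle \tau-r\rangle\sim\tau$, and the standard $H^2\subset L^\infty$ argument on $\{r<\tau/2\}$, with a cutoff, controls $\tau|\nabla\phi_{\le k}|$ by $\lp{\tau\nabla^2\phi_{\le k+1}}{L^2}+\lp{\nabla\phi_{\le k+1}}{L^2}$ (lower order with weight $\tau^{-1}\cdot\tau$), which Step 2 bounds. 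The loss of four vector fields accounts for Sobolev on the sphere plus the Klainerman--Sideris step. Finally, the $(t,x)$ version \eqref{global_ptdecay_grad2} follows by substituting $\tau\sim t^{1-p}$ and $\partial_\tau=a\partial_t$, so that $\tau a|\partial_t\phi|\sim t|\partial_t\phi|$.

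\textbf{Main obstacle.} The hardest point is the damping term $\tau^{-1}\partial_\tau\phi$ inside $\tau\square_\eta\psi$. It yields only $\lp{\partial_\tau\phi_{\le k+2}}{L^2}$, which is acceptable. However, in the Klainerman--Sideris identity the transition region $|\tau-r|\ll\tau$ must be handled carefully, with no boost available. We would first try the Klainerman--Sideris decomposition $\square_\eta$, $S$, $\Omega$ with $\partial_\tau$ expressed through $S$ and $\partial_r$. If the constant $\frac{p}{1-p}$ causes trouble, we would absorb it using the spacetime term $\lp{\tau^{-\frac{1/2}{1-p}}\partial_\tau\phi}{L^2[\tau_0,\tau_1]}$ from Theorem \ref{energyEst_Gamma_many_th}.
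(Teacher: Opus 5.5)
Your argument is correct and has the same architecture as the paper's proof. You split at $r\sim\tau/2$. For $r\ge\tau/2$ you use the $r$-weighted angular Sobolev inequality. For $r<\tau/2$, where $\tau\lesssim\langle\tau-r\rangle$, you combine the homogeneous Sobolev inequality \eqref{Sob_hom} with a Klainerman--Sideris bound. You close with the commuted energy estimate.

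The real difference is how you get the Klainerman--Sideris step.
\begin{itemize}
\item The paper derives FLRW Klainerman--Sideris identities directly for $\square_{\mathbf g}$ (Lemma \ref{lemma:kl_sid_estimates}), absorbing the damping term via $|\partial_\tau a/a|\sim\tau^{-1}$. It must then apply them to $\phi_{\le k}$ and use Corollary \ref{coroll:switch_brackets} to replace $\square_{\mathbf g}\phi_{\le j}$ by $(\square_{\mathbf g}\phi)_{\le j}$. That switch produces extra $\partial_\tau\phi_{\le k+3}$ error terms, which are reabsorbed using both the energy and the spacetime norm.
\item You write $\square_\eta\phi=a^2\square_{\mathbf g}\phi+\frac{2p}{1-p}\tau^{-1}\partial_\tau\phi$ and quote the flat lemma. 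Since $\square_\eta$ commutes with $\mathbb L$ up to $[\square_\eta,S]=2\square_\eta$, and $a^2$ and $\tau^{-1}$ are homogeneous under $S$, you get directly $|\square_\eta\phi_{\le j}|\lesssim a^2|(\square_{\mathbf g}\phi)_{\le j}|+\tau^{-1}|\partial_\tau\phi_{\le j}|$. The error is then plain fixed-time energy.
\end{itemize}
Your route has shorter bookkeeping. Together with your exterior inequality (one $\partial_r$ rather than $\partial_r^{1,2}$), it needs only $k+3$ vector fields. The paper's route stays inside the $\square_{\mathbf g}$ commutator calculus of Section \ref{secCommutators}, which it needs anyway for the energy estimates. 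For the same reason your ``main obstacle'' is not one: $\tau\cdot\tau^{-1}\partial_\tau\phi$ is just energy, and you never use Klainerman--Sideris near $r\sim\tau$.

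Two small repairs are needed.
\begin{itemize}
\item Index. The interior Sobolev step only needs $\langle\tau-r\rangle\nabla^2$ of $\phi_{\le k+1}$, so apply the Klainerman--Sideris bound to $\psi=\phi_{\le k+1}$. With $\psi=\phi_{\le k+2}$ you land on $(\square_{\mathbf g}\phi)_{\le k+2}$ in the fixed-time term, and Corollary \ref{coroll:switch_brackets} cannot lower that index.
\item Global weight. The global inequality you quote, with $\lp{\tau\square_\eta\psi}{L^2}$ on the right, is false for the $\partial_\tau^2\psi$ component where $r\gg\tau$. Take, at a fixed time, $\psi=\partial_\tau\psi=0$ and $\partial_\tau^2\psi$ supported at $r\sim R\to\infty$. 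That component needs the weight $\langle\tau+r\rangle$; compare the factor $r$ in \eqref{kl_sid2}.
\end{itemize}
You only use the estimate on $\{r\le\frac34\tau\}$, where $r\lesssim\tau$, so state it localized with $\chiint$. The weighted bound for $\nabla_x^2$ follows from Plancherel applied to $\tau\chiint\psi$, with Hardy for the cutoff terms. Then the weight $\tau a^2\sim t^{1+p}$ comes out exactly as in the target.
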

\subsection{Main Results II: Nonlinear Estimates}
\label{subsec_main_nonlinear}
The rest of our paper
concerns the following
Cauchy problem for the semilinear wave equation
\begin{equation}
\label{FJequation}
\begin{cases}
\ \square_{\mathbf{g}}\phi=(\partial_t\phi)^2\;, \\
\displaystyle \ \phi(t_0,x)=\phi_0(x) \;\;, \; \partial_t\phi(t_0,x)=\phi_1(x) \;.
\end{cases}
\end{equation}
This is related to one of the examples of small-data finite time blow up
investigated by F. John in $\mathbb{R}^{1+3}$ \cite{john_blow}. In contrast
to the case of Minkowski space, here we prove small data future global
well-posedness in our class of FLRW spacetimes:
\begin{theorem}
{(Small data global wellposedness for the semilinear
Fritz John equation)}
\label{main1}
Let $K\geq 8$ and consider an FLRW spacetime
with topology $\mathbb{R}^{+}\times\mathbb{R}^3$ and smooth metric
of the form~\eqref{metricFLRW} with $0<p \le 1$. 
Consider 
$\phi_0,\phi_1:\mathbb{R}^3\rightarrow\mathbb R$ with $\phi_0, \phi_1 \in C^{\infty}_c(\mathbb{R}^3)$, satisfying the condition
\begin{equation}
\label{smallInitial_data_FJ}
{\left(\lp{(\phi_0)_{\leq K+1}}{L^{2}}  +\lp{(\phi_1)_{\leq K}}{L^{2}}\right)=:C_0<\infty\;.}
\end{equation}

Then, given $t_0>0$, there exists $\delta_0>0$, such that,
if $C_0 \leq \delta_0$, the initial value problem
	\eqref{FJequation}
	admits a unique smooth and compactly supported global solution $\phi$, i.e.\ $\phi(t, \cdot) \in C^{\infty}_c(\mathbb{R}^3)$ for every $t \ge t_0$.

Furthermore, for every {$0\leq k<K-4$}, if $0 < p < 1$:
\[\lp{t \nabla_{t,\tilde x}\phi_{\leq k}(t,\cdot )}{L^{\infty}}
\le C(p)\;, \quad \forall \, t \ge t_0 \, ,
\]
where $C(p) \to \infty$ when $p \to 0$.

Moreover, for every $0 \le k < K - \frac32$, if $p=1$:
\[\lp{t \partial_x^k \nabla_{t,\tilde x}\phi(t,\cdot )}{L^{\infty}}
\lesssim  1 \;, \quad \forall \, t \ge t_0\, .
\]
\end{theorem}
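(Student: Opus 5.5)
The proof is a continuity (bootstrap) argument built on the linear estimates of Theorems \ref{energyEst_Gamma_many_th} and \ref{global_ptwise_decay_linear}, with local well-posedness and finite speed of propagation supplying a smooth, compactly supported solution on a maximal interval $[t_0,T^*)$. Compact support is preserved, so the hypotheses $\phi(t,\cdot)\in C^\infty_c$ of the linear theorems hold there. For $0<p<1$, on an interval $[t_0,T]$ we make two bootstrap assumptions: an energy bound
\[
\mathcal{E}_K(T):=\sup_{t_0\le t\le T}\lp{t^p\nabla_{t,\tilde x}\phi_{\le K}(t)}{L^2}+\lp{t^{p-1/2}\partial_t\phi_{\le K}}{L^2[t_0,T]}\le 2A\,\epsilon ,
\]
and a pointwise bound $\lp{t\nabla_{t,\tilde x}\phi_{\le K-4}}{L^\infty}\le 2A\,\epsilon$, with $\epsilon\sim C_0$. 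We then improve both bounds to $A\epsilon$. Since this also controls all the $C^k$ norms, the blow-up criterion gives $T^*=\infty$, and uniqueness comes from the local theory.

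\textbf{Energy improvement.} Apply \eqref{energyEst_Gamma_many_tx} with $k=K$. One needs $\lp{t^{p+1/2}((\partial_t\phi)^2)_{\le K}}{L^2[t_0,T]}$. Since $\Gamma\in\mathbb{L}$ acts on $\partial_t\phi$ up to commutators $[\Gamma,\partial_t]$, which are again combinations of $\partial_t$ with bounded or time-decaying coefficients (e.g.\ $[S,\partial_t]$ is handled via $\partial_t=a^{-1}\partial_\tau$ and $S=\frac{1-p}{\cdot}\tau\partial_\tau+\dots$), we have
\[
|((\partial_t\phi)^2)_{\le K}|\lesssim \sum_{|\alpha|+|\beta|\le K}|\partial_t\phi_{\alpha}|\,|\partial_t\phi_{\beta}|.
\]
Here $\partial_t\phi_\alpha$ also includes terms $a^{-1}\nabla_x\phi$, which are treated the same way. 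Put the factor with fewer vector fields in $L^\infty$, using $|\partial_t\phi_{\le K/2}|\lesssim \epsilon t^{-1}$ (valid since $K/2\le K-4$ when $K\ge8$). The other factor goes in $L^2$, giving
\[
\lp{t^{p+1/2}\,t^{-1}\epsilon\,\partial_t\phi_{\le K}}{L^2[t_0,T]}=\epsilon\lp{t^{p-1/2}\partial_t\phi_{\le K}}{L^2[t_0,T]}\le 2A\epsilon^2 .
\]
This is exactly the point emphasized in the introduction: the $t^{-1}$ decay is not integrable, but it is absorbed by the spacetime term of the improved energy estimate. Hence $\mathcal{E}_K\lesssim_p C_0+\epsilon^2$, which improves the bootstrap for small $\epsilon$.

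\textbf{Pointwise improvement.} Apply \eqref{global_ptdecay_grad2} with $k=K-4$. The spacetime term is the same as above. The remaining term is $\lp{t^{p+1}((\partial_t\phi)^2)_{\le K-3}(t)}{L^2}$, which we bound by $t^{p+1}\lp{\partial_t\phi_{\le K/2}}{L^\infty}\lp{\partial_t\phi_{\le K-3}}{L^2}\lesssim t^{p+1}\,\epsilon t^{-1}\,\epsilon t^{-p}=\epsilon^2$, using the energy bootstrap for the $L^2$ factor. So the pointwise bound improves to $\lesssim_p C_0+\epsilon^2$. The constants degenerate as $p\to0$, which forces $\delta_0(p)\to0$.

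\textbf{The case $p=1$ and the main obstacle.} For $p=1$ no vector fields beyond $\partial_x$ are used. Commuting only with $\partial_x^k$, which commute exactly with $\square_{\mathbf g}$, we use the $p=1$ corollary: $|\partial_x^k\partial_t\phi|\lesssim t^{-1}\lp{t\nabla\partial_x^{\le k+2}\phi}{L^2}$ by Sobolev embedding in the physical coordinates, with the $t$-weights fitting because $a=t$. The nonlinear term is closed as above with $\lp{t^{1/2}\partial_t\phi}{L^2}$ absorbing $t^{3/2}\cdot t^{-1}$. I expect the main obstacle to be the bookkeeping for $0<p<1$. Precisely, the issues are:
\begin{itemize}
\item showing that $((\partial_t\phi)^2)_{\le K}$ distributes with only $\partial_t$-type factors, so the spacetime norm controls every top-order term; the $a^{-1}\nabla_x\phi$ terms created by $[S,\partial_t]$ need the weight $t^{-p}$ to be available;
\item ensuring the $L^\infty$ factor always has at most $K-4$ vector fields.
\end{itemize}
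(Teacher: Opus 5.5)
Your proposal is correct and follows the paper's proof. The paper likewise bootstraps the energy together with the spacetime norm $\lp{t^{p-1/2}\partial_t\phi_{\le K}}{L^2[t_0,T]}$ and gets $t^{-1}$ pointwise decay from \eqref{global_ptdecay_grad2}, with the $L^\infty$ factor carrying at most $K/2\le K-4$ fields. It absorbs the non-integrable $t^{-1}$ into that spacetime norm, and for $p=1$ it uses only $\partial_x$-commutation plus Sobolev. The one cosmetic difference is that you bootstrap the pointwise bound as a separate assumption, whereas the paper derives it by absorption in Lemma \ref{ptwise_nonlinear}.

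One correction to your list of obstacles: no $a^{-1}\nabla_x\phi$ factors ever arise, because $[S,\partial_t]$ is a constant multiple of $\partial_t$, $[\partial_\tau,\partial_t]=-\frac{\partial_\tau a}{a}\partial_t$, and $\partial_i,\Omega_{ij}$ commute with $\partial_t$. This matters, because such factors in the $L^2[t_0,T]$ slot would \emph{not} be controlled by the spacetime norm: they would produce a divergent $\big(\int t^{-1}\,dt\big)^{1/2}$. So your remark that they are ``treated the same way'' is wrong, but harmless since the case never occurs.
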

\begin{remark}[On regularity]
     Our methods can be generalized to lower regularity settings, by defining the space of solutions, for $n \in \mathbb{Z}^+$ (see e.g.\ \cite{kl_sid}):
    \[
        H^n_{\Gamma}(\mathbb{R}^3) \coloneqq \left \{ f \colon \lp{\Gamma^{\le n} f}{L^2(\mathbb{R}^3)} = \lp{f_{\le n}}{L^2(\mathbb{R}^3)} < +\infty \right \}
    \]
    and construct solutions $\phi$ to the Cauchy problem such that 
    \[
    (\phi, \partial_t \phi) \in L^{\infty}([t_0, +\infty), H^{K+1}_{\Gamma}(\mathbb{R}^3)) \times L^{\infty}([t_0, +\infty), H^{K}_{\Gamma}(\mathbb{R}^3)).
    \]
    
    This requires additional straightforward, but technical, details to prove local existence of solutions and the persistence of regularity in such spaces. Since the main ideas of our work do not depend on the regularity of solutions, smoothness is assumed.
\end{remark}
\begin{remark}
The work \cite{fa} produces
numerical results for the pointwise decay
of spherically symmetric solutions
to the linear wave equation in non-accelerated  FLRW spacetimes.
In that work,
$\partial_t\phi$ is shown to decay pointwise at a faster rate than the ones obtained using our methods (that also hold for linear waves).
Therefore, for the linear
wave equation, the decay rates for $\partial_t\phi$ provided by our work are not optimal.
Nonetheless our methods do not require sharp decay to
establish small data global existence for
\eqref{FJequation}. 
\end{remark}
%
\section{Energy Formalism}
\label{secEnergy}
%
We define the {\it{Energy-Momentum Tensor}} to be
\begin{equation}
    T_{\alpha\beta} = \partial_\alpha \phi\, \partial_\beta \phi
    - \frac{1}{2}\mathbf{g}_{\alpha\beta}
    \partial^\mu \phi\,  \partial_\mu \phi  \;.
\end{equation}
Let $D$ be the Levi-Civita connection of the metric $\mathbf{g}$.
The divergence of the energy momentum is then
\[D^\alpha T_{\alpha\beta}=  \partial_{\beta}\phi\square_{\mathbf{g}}\phi\; .\] 
Given a (smooth) 
vector field $X$ we define the 1-form
\[{}^{(X)}P_\alpha = T_{\alpha\beta}X^\beta .\]
Taking its divergence yields
\begin{equation} \label{divergence}
    D^\alpha {}^{(X)}P_\alpha = \frac{1}{2}\;^{(X)}\pi^{\alpha\beta}T_{\alpha\beta}+X\phi\cdot  \square_{\mathbf{g}}\phi\;,
\end{equation}
where 
\[^{(X)}\pi_{\alpha\beta}:= {\mathcal L}_X \mathbf{g} _{\alpha\beta}
= D_{\alpha}X_{\beta}+D_{\beta}X_{\alpha} \;,\] 
is a symmetric 2-tensor known as the \textit{Deformation Tensor}  
 of
$\mathbf{g}$ with respect to $X$. Integrating the divergence identity~\eqref{divergence} over the time slab
$\{(t,x) \;|\;t_0\leq t\leq t_1\}$ and using Stokes'
theorem we get the following vector field {\it{Multiplier Identity}}
\begin{equation} \label{multident}
\begin{split}
    \int_{t=t_0}\; ^{(X)}P_\alpha N^\alpha |\mathbf{g}|^{\frac{1}{2}}dx -
    \int_{t=t_1}\; ^{(X)}P_\alpha N^\alpha |\mathbf{g}|^{\frac{1}{2}}dx 
    = \int_{t_0}^{t_1}\!\!\!\int_{\mathbb{R}^3}\left(\frac{1}{2} \;^{(X)}\pi^{\alpha\beta}T_{\alpha\beta}+
    X\phi \cdot \square_{\mathbf{g}}\phi\right)|\mathbf{g}|^{\frac{1}{2}}dxdt\;,
\end{split}
\end{equation}
where $N=\partial_t$ is the future pointing unit normal to the time slices
$t=const$, with $|\mathbf{g}|=-\det(\mathbf{g}_{\alpha\beta})=a^{6}$ and $dx=dx^1 dx^2 dx^3$. 

The integrand $^{(X)}P_\alpha N^\alpha$ in \eqref{multident} is the {\it{Energy Density}}
associated to the vector field multiplier $X$. We
then have the following expression for the {\it{Energy}} of $\phi$ if we take
$X=a^l \partial_t = a^{-1} \partial_t$:
\begin{equation}
\label{energyDef}
E[\phi](t):= \int_{\{t\} \times \mathbb{R}^3}\!\! (
^{(X)}\! P_\alpha N^\alpha |\mathbf{g}|^{\frac{1}{2}})dx=
\frac12\int_{\mathbb{R}^3}a^2(t)
\left[(\partial_{t}\phi)^{2}+\delta^{ij}\tilde \partial_{i}\phi
\tilde \partial_{j}\phi\right] (t,x)
dx =\frac{1}{2}\lp{a(t)\nabla_{t,\tilde x}\phi(t,\cdot)}{L^{2}}^2\;,
\end{equation}
where the unit normalized spatial derivatives $\tilde \partial_i$ were introduced in section \ref{section:notation}.
Alternatively, in $(\tau, x)$ coordinates:
\begin{equation}
\label{energyDef_tp_tau}
E[\phi](t)=
\frac12\int_{\mathbb{R}^3}
\left[(\partial_{\tau}\phi)^{2}+\delta^{ij} \partial_{i}\phi
 \partial_{j}\phi\right] (t,x) \ dx
 =\frac{1}{2}\lp{\nabla_{\tau,x}\phi(t,\cdot)}{L^{2}}^2 \;.
\end{equation}
Recall the following identity in \cite{jaj}.
\begin{lemma}
For $X=a^l \partial_t$, where $a$ is the expanding factor and $l\in\mathbb{R}$, we have
\begin{equation}
    \;^{(X)}\pi^{\alpha\beta}T_{\alpha\beta}=a^{l-1}\dot a
    \left[(3-l) \left(\partial_t \phi\right)^2
    - (1 + l) \,\delta^{ij} \tilde \partial_i\phi \tilde \partial_j \phi
    \right]\;,\label{def_tensor_explicit}
\end{equation}
where $\delta^{ij}=:\delta_{ij}$ is the Kronecker delta.
\end{lemma}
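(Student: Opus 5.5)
The plan is a direct computation of the deformation tensor of $X=a^l\partial_t$ in the $(t,x)$ coordinates of \eqref{metricFLRW}, followed by its contraction with $T_{\alpha\beta}$. I would compute ${}^{(X)}\pi_{\alpha\beta}=\mathcal{L}_X\mathbf{g}_{\alpha\beta}$ from the coordinate formula
\[
\mathcal{L}_X\mathbf{g}_{\alpha\beta}=X^\mu\partial_\mu \mathbf{g}_{\alpha\beta}+\mathbf{g}_{\mu\beta}\partial_\alpha X^\mu+\mathbf{g}_{\alpha\mu}\partial_\beta X^\mu .
\]
This avoids Christoffel symbols. Since $X^0=a^l(t)$ and $X^i=0$, only a few components survive.

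For the $00$ component, $\partial_t\mathbf{g}_{00}=0$, so
\[
\pi_{00}=2\mathbf{g}_{00}\,\partial_t(a^l)=-2l\,a^{l-1}\dot a .
\]
For the mixed components, $\pi_{0i}=0$, because $X^\mu$ depends only on $t$ and $\mathbf{g}_{0i}=0$. For the spatial components,
\[
\pi_{ij}=a^l\,\partial_t(a^2)\,\delta_{ij}=2a^{l+1}\dot a\,\delta_{ij}.
\]
Raising indices with $\mathbf{g}^{00}=-1$ and $\mathbf{g}^{ij}=a^{-2}\delta^{ij}$ gives
\[
\pi^{00}=-2l\,a^{l-1}\dot a, \qquad \pi^{ij}=2a^{l-3}\dot a\,\delta^{ij}.
\]

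Next I would write out the components of the energy-momentum tensor. Using
\[
\partial^\mu\phi\,\partial_\mu\phi=-(\partial_t\phi)^2+|\tilde\partial\phi|^2, \qquad |\tilde\partial\phi|^2:=\delta^{ij}\tilde\partial_i\phi\,\tilde\partial_j\phi,
\]
one gets
\[
T_{00}=\tfrac12\big[(\partial_t\phi)^2+|\tilde\partial\phi|^2\big], \qquad T_{ij}=\partial_i\phi\,\partial_j\phi-\tfrac12 a^2\delta_{ij}\big(-(\partial_t\phi)^2+|\tilde\partial\phi|^2\big).
\]
The trace is then $\delta^{ij}T_{ij}=a^2|\tilde\partial\phi|^2-\tfrac32 a^2\big(-(\partial_t\phi)^2+|\tilde\partial\phi|^2\big)$. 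The factor $a^2$ here cancels against part of the $a^{l-3}$ in $\pi^{ij}$.

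Finally, contract:
\[
\pi^{\alpha\beta}T_{\alpha\beta}=\pi^{00}T_{00}+\pi^{ij}T_{ij}.
\]
The first term is $-l\,a^{l-1}\dot a\,\big[(\partial_t\phi)^2+|\tilde\partial\phi|^2\big]$. The second is $2a^{l-1}\dot a\big[\tfrac32(\partial_t\phi)^2-\tfrac12|\tilde\partial\phi|^2\big]=a^{l-1}\dot a\big[3(\partial_t\phi)^2-|\tilde\partial\phi|^2\big]$. Adding the two gives exactly $a^{l-1}\dot a\big[(3-l)(\partial_t\phi)^2-(1+l)|\tilde\partial\phi|^2\big]$, which is \eqref{def_tensor_explicit}.

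There is no real obstacle in this argument. The only care needed is bookkeeping of the powers of $a$ when raising indices and converting $\partial_i$ to $\tilde\partial_i$, together with the sign conventions for $\mathbf{g}^{00}$ in $T_{00}$. A useful sanity check is that $l=-1$ (the multiplier used for the energy \eqref{energyDef}) kills the gradient term and leaves $4a^{-2}\dot a(\partial_t\phi)^2\ge0$. This is the positive spacetime term exploited in Theorem~\ref{energyEst_Gamma_many_th}.
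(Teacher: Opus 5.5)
Your proposal is correct. The deformation-tensor components, the raised components $\pi^{00}=-2l\,a^{l-1}\dot a$ and $\pi^{ij}=2a^{l-3}\dot a\,\delta^{ij}$, the expressions for $T_{00}$ and $\delta^{ij}T_{ij}$, and the final contraction all check out, and the vanishing of $\pi^{0i}$ is justified.

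The paper gives no proof of this identity; it only recalls it from \cite{jaj}. Your direct coordinate computation via the Lie-derivative formula therefore serves as a self-contained verification rather than a competing route.

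Your $l=-1$ check is also consistent with how the paper uses the lemma. Multiplying by $|\mathbf{g}|^{1/2}=a^3$ with $a=t^p$ turns $4a^{-2}\dot a(\partial_t\phi)^2$ into exactly $4p\,t^{2p-1}(\partial_t\phi)^2$, which is \eqref{goodSign} in the proof of \eqref{energyEst}.
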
   
\noindent Using these identities
we can obtain our first energy estimate in the case $a(t)=t^p$. 
\begin{theorem}
(Improved Energy Estimate)
 Let $\mathbf{g}$ be the FLRW metric~\eqref{metricFLRW}
 and $\phi: [t_0,t_1) \times \mathbb{R}^{3}\rightarrow \mathbb{R}$
 be a $C^{2}$ function. The following weighted
 energy estimate then holds:
 \begin{equation} 
 \label{energyEst}
 \sup_{t_0\leq t< t_1}\lp{t^p\nabla_{t,\tilde x}\phi(t, \cdot)}{L^{2}}
 + (2p)^{1/2}\lp{t^{p-1/2}\partial_t\phi}{L^2[t_0,t_1]}
\lesssim_p \lp{t^p\nabla_{t,\tilde x}\phi(t_0, \cdot)}{L^{2}}+ \lp{t^{p+1/2}
\square_{\mathbf{g}}\phi}{L^2[t_0,t_1]}\;, \end{equation}
where the constants at the right hand side degenerate as $p \to 0$.
Alternatively, in $(\tau, x)$ coordinates:
 \begin{equation} 
 \label{energyEst_tau}
 \sup_{t_0\leq t< t_1}\lp{\nabla_{\tau, x}\phi(t, \cdot)}{L^{2}}
 + (2p)^{1/2}\lp{t^{-1/2}\partial_{\tau}\phi}{L^2[t_0,t_1]}
\lesssim_p \lp{\nabla_{\tau, x}\phi(t_0, \cdot)}{L^{2}}+ \lp{t^{p+1/2}
\square_{\mathbf{g}}\phi}{L^2[t_0,t_1]} \;.
  \end{equation}
In particular, $C=C(p)=(2p)^{1/2}$
in estimate \eqref{energyEst_Gamma_many}.

\end{theorem}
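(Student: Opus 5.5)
We apply the multiplier identity \eqref{multident} with $X=a^{-1}\partial_t$, i.e.\ $l=-1$, so that the boundary integrands are exactly the energy $E[\phi](t)=\frac12\lp{t^p\nabla_{t,\tilde x}\phi(t,\cdot)}{L^2}^2$ from \eqref{energyDef}. For $l=-1$, formula \eqref{def_tensor_explicit} gives
\[
{}^{(X)}\pi^{\alpha\beta}T_{\alpha\beta}=4a^{-2}\dot a\,(\partial_t\phi)^2 ,
\]
because the spatial-gradient coefficient $-(1+l)$ vanishes. With $|\mathbf g|^{1/2}=a^3=t^{3p}$ and $\dot a = p t^{p-1}$, the bulk term is
\[
\int_{t_0}^{t_1}\!\!\int_{\mathbb R^3}\tfrac12\cdot 4 p\, t^{-p-1}(\partial_t\phi)^2\, t^{3p}\,dx\,dt
=2p\,\lp{t^{p-1/2}\partial_t\phi}{L^2[t_0,t_1]}^2 ,
\]
which is nonnegative. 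The multiplier identity therefore reads, for any $t\in[t_0,t_1)$,
\[
E[\phi](t)+2p\lp{t^{p-1/2}\partial_t\phi}{L^2[t_0,t]}^2
= E[\phi](t_0)-\int_{t_0}^{t}\!\!\int_{\mathbb R^3} t^{2p}\,\partial_t\phi\,\square_{\mathbf g}\phi\,dx\,dt .
\]
Here the sign convention of \eqref{multident} has to be checked carefully, so that the deformation term indeed lands on the left with a favorable sign. This is the expansion-driven gain that vanishes as $p\to0$. We also need $\phi$ regular enough, e.g.\ compactly supported in $x$, so that Stokes' theorem has no boundary terms at spatial infinity; for general $C^2$ data one approximates.

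The inhomogeneous term is where the improvement over \cite{jaj} comes from: we do not Grönwall, and instead split the weights as $t^{2p}=t^{p-1/2}\cdot t^{p+1/2}$. Cauchy--Schwarz and Young then give
\[
\Big|\int\!\!\int t^{2p}\partial_t\phi\,\square_{\mathbf g}\phi\Big|
\le p\lp{t^{p-1/2}\partial_t\phi}{L^2[t_0,t]}^2+\tfrac{1}{4p}\lp{t^{p+1/2}\square_{\mathbf g}\phi}{L^2[t_0,t_1]}^2 .
\]
The first piece is absorbed into the spacetime term on the left, which leaves
\[
E[\phi](t)+p\lp{t^{p-1/2}\partial_t\phi}{L^2[t_0,t]}^2\le E[\phi](t_0)+\tfrac1{4p}\lp{t^{p+1/2}\square_{\mathbf g}\phi}{L^2[t_0,t_1]}^2 .
\]
The constant $1/(4p)$ degenerates as $p\to0$, as the statement asserts.

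Taking the supremum over $t<t_1$, letting $t\to t_1$ in the spacetime term by monotone convergence, and taking square roots (using $\sqrt{A+B}\sim\sqrt A+\sqrt B$) gives \eqref{energyEst_Gamma_many_tx} with $k=0$, i.e.\ \eqref{energyEst}. To obtain exactly the coefficient $(2p)^{1/2}$, one can use Young with a smaller $\epsilon$ and rescale, or simply note that it is a harmless constant absorbed into $\lesssim_p$. The $(\tau,x)$ version \eqref{energyEst_tau} then follows from the identities $a\partial_t=\partial_\tau$ and $a\tilde\partial_i=\partial_i$, so that $t^p\nabla_{t,\tilde x}\phi=\nabla_{\tau,x}\phi$ and $t^{p-1/2}\partial_t\phi=t^{-1/2}\partial_\tau\phi$. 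The main obstacle is really only the sign and bookkeeping in the multiplier identity; the rest is the Young absorption.
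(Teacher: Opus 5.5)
Your proposal is correct and follows the paper's proof: the multiplier $X=a^{-1}\partial_t$ (i.e.\ $l=-1$) removes the spatial-gradient part of the deformation term and leaves the favorable bulk term $2p\,t^{2p-1}(\partial_t\phi)^2$. Your sign in the multiplier identity is the right one, and Young's inequality with the split $t^{2p}=t^{p-1/2}\cdot t^{p+1/2}$, absorption, and $t^p\partial_t=\partial_\tau$ finish the argument exactly as in the paper. Note that your explicit choice $\epsilon=p$ already gives the coefficient $(2p)^{1/2}$ after multiplying by $2$ and taking square roots, so no further rescaling is needed.
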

\begin{proof}
We choose $l=-1$. Substituting this along with
$a(t)=t^{p}$, $|\mathbf{g}|^{\frac{1}{2}}=t^{3p}$
into equations \eqref{energyDef} and
\eqref{def_tensor_explicit} yields
\begin{equation}
\label{energyDef_tp}
E[\phi](t)=
\frac12\int_{\mathbb{R}^3}t^{2p}
\left[(\partial_{t}\phi)^{2}+\delta^{ij}\tilde \partial_{i}\phi
\tilde \partial_{j}\phi\right] (t,x) \ dx
=\frac{1}{2}\lp{t^{p}\nabla_{t,\tilde x}\phi(t,\cdot)}{L^{2}}^2 
\end{equation}
and
\begin{equation}
\label{goodSign}
 |\mathbf{g}|^{\frac{1}{2}} 
 \big({}^{(X)}\pi^{\alpha\beta}T_{\alpha\beta}\big)=4p \,
 t^{2p-1}(\partial_t\phi)^2 \; ,
\end{equation}
respectively. Combining this with identity~\eqref{multident}, rearranging
terms, and taking absolute value yields
\begin{equation}
\label{energyEst0}
E(t_1) + 2p \int_{t_0}^{t_1}\int_{\mathbb{R}^3}
t^{2p-1}(\partial_t\phi)^2  \ dxdt
\leq E(t_0) + \int_{t_0}^{t_1}\int_{\mathbb{R}^3}
 |t^{2p}  \partial_t\phi \, \square_{\mathbf{g}}\phi|
 \,  dxdt \;.
\end{equation}
By Young's inequality with $\epsilon$:
\[\int_{t_0}^{t_1}\int_{\mathbb{R}^3}
 t^{2p}  |\partial_t\phi \, \square_{\mathbf{g}}\phi|
 \,  dxdt\leq \epsilon \int_{t_0}^{t_1}\int_{\mathbb{R}^3}
 t^{2p-1}  |\partial_t\phi|^2 \,
 \,  dxdt +\frac{1}{\epsilon }\int_{t_0}^{t_1}\int_{\mathbb{R}^3}
 t^{2p+1}|\square_{\mathbf{g}}\phi|^2
 \,  dxdt \]
Choosing $\epsilon > 0$ sufficiently small (depending on $p$), rearranging terms,
taking square roots and simplifying finishes the proof
of Eq. \eqref{energyEst}. In particular, the constants depending on $\epsilon$ (hence depending on $p$) are absorbed in the $\lesssim$ sign. The estimate
\eqref{energyEst_tau} follows immediately
from $t^p\partial_t=\partial_\tau$.
\end{proof}

%
\section{Commutators}
\label{secCommutators} 
%
In this section we focus on the case $0 < p < 1$ in \eqref{scale_factor_intro}.
We recall the expression \eqref{Laplace_Beltrami_FLRW} for the covariant wave operator expressed in conformal time:
\[
\square_{\mathbf{g}} \phi = -\frac{1}{a^2}\partial_{\tau}^2 \phi -
2\frac{\partial_{\tau} a}{a^3} \partial_{\tau} \phi + \frac{1}{a^2}\Delta \phi \; .
\]
Let:
\[
\begin{cases}
    S = \tau \partial_{\tau} + x^i \partial_i, \\
    \Omega_{ij} = \delta_{ik}x^k \partial_j - \delta_{jk}x^k \partial_i, \\
    \Omega_{0j} = -\tau \partial_j - \delta_{jk} x^k \partial_{\tau},
\end{cases}
\]   
where $\delta$ is the Kronecker delta.
Since $x^k$ is a scalar function for every $k$, in the following we will write $x_k$ to denote $\delta_{jk} x^j$, by a small abuse of notation.
\begin{proposition}[Commutation relations associated to the generators of the Poincaré algebra] \label{prop:commutation_relations}
    Let $0 < p < 1$. Then, we have:
    \begin{align}
    [\square_{\mathbf{g}}, \partial_i] &= 0,  \qquad \text{ for } i=1, 2, 3\; ,  \label{commutator_di}\\
    [\square_{\mathbf{g}}, \Omega_{ij}] &= 0,  \qquad \text{ for } i, j = 1, 2, 3\; , \\
    [\square_{\mathbf{g}}, \partial_{\tau}] &= 2 \frac{\partial_{\tau}a}{a}\square_{\mathbf{g}} + 2 \frac{a \partial_{\tau}^2 a - (\partial_{\tau} a)^2}{a^4} \partial_{\tau} \; , \label{commutator_dtau}  \\
    [\square_{\mathbf{g}}, \Omega_{0j}] &= -x_j [\square_{\mathbf{g}}, \partial_{\tau}] + 2
    \frac{\partial_{\tau}a}{a^3}\partial_j, \label{commutator_boost} \qquad \text{ for } j=1, 2, 3 \; ,  \\
    [\square_{\mathbf{g}}, S] &=2 \left( 1 + \tau \frac{\partial_{\tau}a}{a}  \right) \square_{\mathbf{g}} + 2 \left( \frac{\partial_{\tau}a }{a^3} + \tau \frac{a \partial_{\tau}^2 a - (\partial_{\tau} a)^2}{a^4} \right) \partial_{\tau}\; .   \label{commutator_scaling} 
\end{align}
Furthermore, the above coefficients satisfy the following relations:
\begin{align*}
    &\left | \frac{\partial_{\tau} a}{a} \right | \sim \tau^{-1} \sim t^{p-1}\; , \qquad &&\left | \frac{a \partial_{\tau}^2 a - (\partial_{\tau} a)^2}{a^4} \right | \sim  \tau^{-\frac{2}{1-p}} \sim t^{-2} \; ,\\
    &\left | \frac{\partial_{\tau} a}{a^3} \right | \sim \tau^{1 - \frac{2}{1-p} } \sim t^{-1- p} \; , \\
    &\left | 1 + \tau \frac{\partial_{\tau} a}{a} \right | \sim 1 \;, \qquad &&\left |  \frac{\partial_{\tau}a }{a^3} + \tau \frac{a \partial_{\tau}^2 a - (\partial_{\tau} a)^2}{a^4} \right | \sim \tau^{1-\frac{2}{1-p}} \sim t^{-1-p} \;.
\end{align*}
\end{proposition}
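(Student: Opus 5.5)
Write $\square_{\mathbf g} = a^{-2} L$ with $L := -\partial_\tau^2 - 2\frac{\partial_\tau a}{a}\partial_\tau + \Delta$, using \eqref{Laplace_Beltrami_FLRW}, and compute each commutator $[\square_{\mathbf g},\Gamma]\phi = \square_{\mathbf g}(\Gamma\phi) - \Gamma(\square_{\mathbf g}\phi)$ directly. The general identity we will use is
\[
[\square_{\mathbf g},\Gamma] = a^{-2}[L,\Gamma] - \Gamma(a^{-2})\, L = a^{-2}[L,\Gamma] + \frac{2\,\Gamma(a)}{a}\,\square_{\mathbf g},
\]
since $\Gamma(a^{-2}) = -2a^{-3}\Gamma(a)$ and $a^{-2}L = \square_{\mathbf g}$. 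Here $a = a(\tau)$ depends only on $\tau$.

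For $\partial_i$ and $\Omega_{ij}$ we have $\Gamma(a) = 0$. Moreover $\Gamma$ commutes with $\partial_\tau$ and with multiplication by functions of $\tau$, and it commutes with the flat Laplacian $\Delta$ (translations and rotations). Hence both commutators vanish.

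For $\partial_\tau$ we have $\Gamma(a)/a = \partial_\tau a/a$. Since $[L,\partial_\tau] = 2\,\partial_\tau\!\bigl(\frac{\partial_\tau a}{a}\bigr)\partial_\tau$ and $\partial_\tau(\partial_\tau a/a) = (a\partial_\tau^2 a - (\partial_\tau a)^2)/a^2$, dividing by $a^2$ gives \eqref{commutator_dtau}.

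For $S = \tau\partial_\tau + x^i\partial_i$ we have $S(a) = \tau\partial_\tau a$. Next we compute $[L,S]$ using:
\begin{itemize}
\item $[\partial_\tau^2, \tau\partial_\tau] = 2\partial_\tau^2$ and $[\Delta, x^i\partial_i] = 2\Delta$, which together give $[-\partial_\tau^2+\Delta, S] = 2(-\partial_\tau^2+\Delta)$;
\item $\bigl[-2\frac{\partial_\tau a}{a}\partial_\tau, \tau\partial_\tau\bigr] = -2\frac{\partial_\tau a}{a}\partial_\tau + 2\tau\,\partial_\tau\!\bigl(\frac{\partial_\tau a}{a}\bigr)\partial_\tau$.
\end{itemize}
Writing $-\partial_\tau^2+\Delta = L + 2\frac{\partial_\tau a}{a}\partial_\tau$, we get
\[
[L,S] = 2L + 2\frac{\partial_\tau a}{a}\partial_\tau + 2\tau\frac{a\partial_\tau^2 a-(\partial_\tau a)^2}{a^2}\partial_\tau .
\]
Dividing by $a^2$ and adding $2\tau\frac{\partial_\tau a}{a}\square_{\mathbf g}$ yields \eqref{commutator_scaling}.

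For the boost $\Omega_{0j} = -\tau\partial_j - x_j\partial_\tau$ we split it into two pieces. The first piece is $-\tau\partial_j$: since $[\partial_\tau,\tau] = 1$, we get $[L,-\tau\partial_j] = 2\partial_\tau\partial_j + 2\frac{\partial_\tau a}{a}\partial_j$. The second piece is $-x_j\partial_\tau$: we have $[L,-x_j\partial_\tau] = -x_j[L,\partial_\tau] - [\Delta,x_j]\partial_\tau$, and $[\Delta,x_j] = 2\partial_j$, so the cross terms $\pm 2\partial_j\partial_\tau$ cancel. The remaining contributions are as follows:
\begin{itemize}
\item $\Omega_{0j}(a) = -x_j\partial_\tau a$ produces $-2x_j\frac{\partial_\tau a}{a}\square_{\mathbf g}$;
\item $-x_j a^{-2}[L,\partial_\tau]$ pairs with that term to give $-x_j[\square_{\mathbf g},\partial_\tau]$;
\item $a^{-2}\cdot 2\frac{\partial_\tau a}{a}\partial_j$ gives $2\frac{\partial_\tau a}{a^3}\partial_j$.
\end{itemize}
This is \eqref{commutator_boost}. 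The only real care needed in the proof is this sign bookkeeping, including the cancellation of the cross terms.

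For the asymptotics, compute $\tau$ from \eqref{def_tau}: $\tau = t^{1-p}/(1-p)$ (fixing the constant), so $t \sim \tau^{1/(1-p)}$ and $a = c_p\,\tau^{p/(1-p)}$. Set $q = p/(1-p) > 0$. Then:
\begin{itemize}
\item $\partial_\tau a/a = q/\tau$;
\item $\frac{a\partial_\tau^2 a - (\partial_\tau a)^2}{a^4} = a^{-2}\partial_\tau(\partial_\tau a/a) = -q\,a^{-2}\tau^{-2} \sim \tau^{-2q-2} = \tau^{-2/(1-p)} \sim t^{-2}$;
\item $\partial_\tau a/a^3 = q\,\tau^{-1}a^{-2} \sim \tau^{1-2/(1-p)} \sim t^{-1-p}$;
\item $1 + \tau\,\partial_\tau a/a = 1+q = 1/(1-p)$.
\end{itemize}
For the last coefficient, note that $\tau\cdot\frac{a\partial_\tau^2 a-(\partial_\tau a)^2}{a^4} = -q\,\tau^{-1}a^{-2}$. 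Adding this to $\partial_\tau a/a^3 = q\,\tau^{-1}a^{-2}$ gives exactly $0$. So the stated relation $\sim \tau^{1-2/(1-p)}$ should be read as an upper bound $\lesssim$; the coefficient of $\partial_\tau$ in \eqref{commutator_scaling} in fact vanishes identically for exact power laws. All remaining relations follow from $\tau \sim t^{1-p}$ with $0<p<1$.
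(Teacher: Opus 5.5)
Your proof is correct and follows the paper's direct computation. The only difference is bookkeeping: you factor $\square_{\mathbf g}=a^{-2}L$ and use $[\square_{\mathbf g},\Gamma]=a^{-2}[L,\Gamma]+2\frac{\Gamma(a)}{a}\square_{\mathbf g}$ to isolate the $f\,\square_{\mathbf g}$ part, whereas the paper expands $[\square_{\mathbf g},\Gamma]\phi$ in terms of $\partial_\tau^2\phi$, $\partial_\tau\phi$, $\Delta\phi$ and then regroups.

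Your observation that the $\partial_\tau$-coefficient of $[\square_{\mathbf g},S]$ vanishes identically for $\tau=t^{1-p}/(1-p)$ is also right. With any other additive constant in $\tau$ it is $O(\tau^{-\frac{2}{1-p}})$. So the stated two-sided relation $\sim\tau^{1-\frac{2}{1-p}}$ for that coefficient holds only as an upper bound $\lesssim$, and that upper bound is all the paper uses later (Table~\ref{table:coefficients}, Lemma~\ref{lemma:comm_estimate}).
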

\begin{proof}
We will repeatedly use \eqref{Laplace_Beltrami_FLRW} throughout the proof. 
We first analyse $[\square_{\mathbf{g}}, \partial_{\tau}]$. By definition, we have
\[
[\square_{\mathbf{g}}, \partial_{\tau}]\phi = -2\frac{\partial_{\tau} a}{a^3} \partial_{\tau}^2 \phi + 2 \left ( \frac{a \partial_{\tau}^2 a - 3 (\partial_{\tau} a )^2}{a^4} \right ) \partial_{\tau} \phi + 2 \frac{\partial_{\tau} a}{a^3} \Delta \phi.
\]
Expression \eqref{commutator_dtau} then follows by exploiting the definition of $\square_{\mathbf{g}} \phi$.
Next, notice that, by definition, we have: 
\[
[\square_{\mathbf{g}}, \Omega_{0j}] \phi = 2 \frac{\partial_{\tau} a}{a^3} x_j \partial_{\tau}^2 \phi + 2 \frac{\partial_{\tau} a}{a^3} \partial_j \phi - 2 x_j \frac{a \partial_{\tau}^2 a - 3 (\partial_{\tau} a)^2}{a^4} \partial_{\tau} \phi - 2 \frac{\partial_{\tau} a}{a^3} x_j \Delta \phi.
\]
Then, \eqref{commutator_boost} follows by applying the definition of $\square_{\mathbf{g}} \phi$ again.
Finally, for the scaling vector field:
\[
[\square_{\mathbf{g}} , S]\phi = -\frac{2}{a^2}\left ( 1 + \tau \frac{\partial_{\tau} a}{a}  \right ) \partial_{\tau}^2 \phi - \frac{2}{a^2} \left ( \frac{\partial_{\tau} a}{a} - \tau \frac{a \partial_{\tau}^2 a - 3 (\partial_{\tau} a)^2}{a^2} \right ) \partial_{\tau} \phi + \frac{2}{a^2} \left ( 1 + \tau \frac{\partial_{\tau} a}{a} \right ) \Delta \phi.
\]
Again, \eqref{commutator_scaling} follows from the expression of the Laplace-Beltrami operator. The estimates for the time-dependent coefficients follow from \eqref{power_law} and \eqref{def_tau}.
\end{proof}
\begin{remark}
    The Friedmann equations for spatially-flat FLRW spacetimes read \cite{FriedmannAllDimensions}:
    \begin{align}
    \frac{\partial_{\tau}a}{a} &= \sqrt{\frac{8 \pi \rho_0}{3}}a^{1 - \frac{1}{p}}, \label{Friedmann1} \\
    \frac{a \partial_{\tau}^2 a - (\partial_{\tau} a)^2}{a^4}  &= -\frac{4 \pi}{3}(\rho + 3 P), \label{Friedmann2}
    \end{align}
    where $\rho$ and $P$ are, respectively, the energy density and pressure of the perfect fluid permeating the FLRW spacetime. Here, $\rho_0$ is the initial energy density. In particular, due to the equation of state $P = w \rho$ (with $w$ being the square of the speed of sound in the fluid),  notable cancellations for the above commutation relations occur in the case of vacuum spacetimes, for which $\rho \equiv 0$. The above expressions are related to the coefficients found in the commutation relations of proposition \ref{prop:commutation_relations}. In particular, the right hand sides of \eqref{Friedmann1} and \eqref{Friedmann2} vanish in the cases $p=0$ (Minkowski spacetime) and $p=1$ (Milne-like spacetime), as expected.
\end{remark}
\begin{remark}
    A naive approach to tackle the problem of global existence for quasi-linear wave equations in FLRW spacetime consists of following the known procedure for Minkowski space, where the generators of the Poincaré algebra and improved Sobolev identities are exploited, and trying to estimate the error terms stemming from the time-dependence of the FLRW metric. This approach, however, does not seem to work due to the  slow-decaying term
    \[
    \frac{\partial_{\tau}a}{a^3} \partial_j
    \]
    coming from commutation with the boost vector fields $\Omega_{0j}$, $j=1, 2, 3$ (see proposition \ref{prop:commutation_relations}).\footnote{Notice that $[\square_{\mathbf{g}}, S]$ contains a time-dependent coefficient of analogous asymptotics. However, as it is apparent already from a dimensional point of view, this does not constitute a problem since such a coefficient is coupled with the $\partial_{\tau}$ derivative.}

    This obstruction motivates our choice to use Klainerman-Sideris types of estimates (see already lemma \ref{lemma:kl_sid_estimates}), so to commute only with the vector fields $\partial_{\mu}$, $S$ and $\Omega_{ij}$, with $\mu=0, 1, 2, 3$ and $i, j = 1, 2, 3$.
\end{remark}
\vspace{.2in}
We now introduce further notation to express the commutators of the Laplace-Beltrami operator with higher-order differential operators. Let $\Gamma \in \mathbb{P} = \{ \partial_{\mu}, S, \Omega_{ij}, \Omega_{0j} \}$. Then, it follows from proposition \ref{prop:commutation_relations} that
\[
[\square_{\mathbf{g}}, \Gamma] = f \square_{\mathbf{g}} + c^{\mu} \partial_{\mu},
\]
for some $f=f(\tau, x)$ and some vector $c^{\mu} = c^{\mu}(\tau, x)$. Table \ref{table:coefficients}  summarizes the results of proposition \ref{prop:commutation_relations} using this notation.
\begin{table}[h] 
\begin{center}
\renewcommand{\arraystretch}{1.8}
\begin{tabulary}{\textwidth}{c !{\vrule width 1pt} c l|c l|c l} 
\toprule  
& $f$ & & $c^0$ & &  $c^i$  & \\[.4em]
 \specialrule{1pt}{0pt}{0pt} 
 $\Gamma = \partial_i$ & 0 & &  0 & &  0  & \\  
 \rowcolor{gray!20}
 $\Gamma = \Omega_{ij} $ & 0 & &  0 & &  0 & \\ 
 $\Gamma = \partial_{\tau}$ & 2 $\frac{\partial_{\tau} a}{a}$ & $ = O(t^{p-1}) $ & $2\frac{a \partial_{\tau}^2 a - (\partial_{\tau} a)^2}{a^4}$ & $ = O(t^{-2})$ & 0 &  \\  
 \rowcolor{gray!20}
 $\Gamma = \Omega_{0j} $ & $-2x_j \frac{\partial_{\tau}a}{a} $ & $= x_j O(t^{p-1})$ & $-2 x_j \frac{a \partial_{\tau}^2 a - (\partial_{\tau} a)^2}{a^4}$ & $ = x_j O(t^{-2})$  & $2 \delta^{i}_j  \frac{\partial_{\tau}a}{a^3}  $ & $= O(t^{-1-p})$ \\ 
 $\Gamma = S$ & $2 \left ( 1 + \tau \frac{\partial_{\tau}a}{a} \right )$ & $ = O(1)$ & $2 \left ( \frac{\partial_{\tau} a}{a^3} + \tau \frac{a \partial_{\tau}^2 a - (\partial_{\tau} a)^2}{a^4} \right )$ & $ = O(t^{-1-p})$ & 0 & \\
\bottomrule
\end{tabulary}
\end{center}
\caption{Coefficients $f$ and $c^{\mu}$ related to the expansion $[\square_{\mathbf{g}}, \Gamma] = f \square_{\mathbf{g}} + c^{\mu} \partial_{\mu}$, where $\Gamma \in \{ \partial_{\mu}, S, \Omega_{ij}, \Omega_{0j} \}$.} \label{table:coefficients}
\end{table}

\begin{lemma}[Commutators of higher-order differential operators] \label{lemma:comm_n}
    Given $n \in \mathbb{N}$ and $\Gamma_1, $ $ \ldots, \Gamma_n $ $\in $ $ \mathbb{P} =$ $\{\partial_{\mu},$ $S, \Omega_{ij},$ $ \Omega_{0j}\}$, let:
    \[
    [\square_{\mathbf{g}}, \Gamma_i] = f_i \square_{\mathbf{g}} + c_i^{\mu}\partial_{\mu}, \quad i=1, \ldots, n \;,
    \]
    for some functions $f_i=f_i(\tau, x)$ and for some vectors\footnote{By a standard abuse of notation, we make no distinction between the vector $(c_i^{\mu})_{\mu = 0, \ldots, 3}$ and its components $c_i^{\mu}$, for $i$ fixed.} $c_i^{\mu} = c_i^{\mu}(\tau, x)$, $i = 1, \ldots, n$.
    
    Then, we have:
    \begin{align}
        [\square_{\mathbf{g}}, \Gamma_1 \cdots \Gamma_n] \phi  &= \Gamma_1 [\square_{\mathbf{g}}, \Gamma_2 \cdots \Gamma_n]\phi  \label{comm_higher_order} \\
        &+ c_1^{\mu} \partial_{\mu} \Gamma_2 \cdots \Gamma_n \phi + \label{firstline_lemma} \\
        &+f_1 c_2^{\mu} \partial_{\mu} \Gamma_3 \cdots \Gamma_n \phi + \\
        &+ f_1 (f_2 + \Gamma_2) c_3^{\mu} \partial_{\mu} \Gamma_4 \cdots \Gamma_n \phi + \ldots +\\
        &+ f_1 (f_2 + \Gamma_2) \cdot \ldots \cdot (f_{n-1} + \Gamma_{n-1}) c_n^{\mu}\partial_{\mu} \phi + \label{lastline_lemma} \\
        &+ f_1 (f_2 + \Gamma_2) \cdot \ldots \cdot (f_n + \Gamma_n) \square_{\mathbf{g}} \phi \;. \nonumber
    \end{align}
We stress that  lines \eqref{firstline_lemma}--\eqref{lastline_lemma} contain $n$ terms and we adopt the following convention: given $m \in \mathbb{N}$, $m < n$, then $[\square_{\mathbf{g}}, \Gamma_1 \cdots \Gamma_m]\phi$  is expressed as above, where only the first $m$ terms in \eqref{firstline_lemma}--\eqref{lastline_lemma} appear.
\end{lemma}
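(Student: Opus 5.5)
The plan is to argue by induction on $n$, using only the first-order relation $[\square_{\mathbf{g}}, \Gamma_i] = f_i \square_{\mathbf{g}} + c_i^{\mu}\partial_{\mu}$ supplied by Proposition \ref{prop:commutation_relations} (and recorded in Table \ref{table:coefficients}). Everything rests on the Leibniz-type identity for commutators,
\[
[\square_{\mathbf{g}}, \Gamma_1 \Gamma_2\cdots\Gamma_n]\phi = \Gamma_1 [\square_{\mathbf{g}}, \Gamma_2\cdots\Gamma_n]\phi + [\square_{\mathbf{g}}, \Gamma_1]\,\Gamma_2\cdots\Gamma_n\phi .
\]
Its first term is exactly \eqref{comm_higher_order}. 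Expanding the second term with the first-order relation produces $c_1^{\mu}\partial_{\mu}\Gamma_2\cdots\Gamma_n\phi$, which is \eqref{firstline_lemma}, together with a remainder $f_1\,\square_{\mathbf{g}}\Gamma_2\cdots\Gamma_n\phi$.

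The remainder has to be unfolded. I write $\square_{\mathbf{g}}\Gamma_2\cdots\Gamma_n\phi$ as
\[
\Gamma_2 \square_{\mathbf{g}}\Gamma_3\cdots\Gamma_n\phi + [\square_{\mathbf{g}},\Gamma_2]\Gamma_3\cdots\Gamma_n\phi = (f_2+\Gamma_2)\,\square_{\mathbf{g}}\Gamma_3\cdots\Gamma_n\phi + c_2^{\mu}\partial_\mu\Gamma_3\cdots\Gamma_n\phi .
\]
Iterating gives, by a secondary induction on the number of factors,
\[
\square_{\mathbf{g}}\Gamma_2\cdots\Gamma_n\phi = \sum_{k=2}^{n} (f_2+\Gamma_2)\cdots(f_{k-1}+\Gamma_{k-1})\, c_k^{\mu}\partial_\mu \Gamma_{k+1}\cdots\Gamma_n\phi + (f_2+\Gamma_2)\cdots(f_n+\Gamma_n)\square_{\mathbf{g}}\phi .
\]
Here each $(f_j+\Gamma_j)$ is read as an operator acting on everything to its right, with $f_j$ meaning multiplication. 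Multiplying by $f_1$ then gives precisely the remaining lines of the statement: $n-1$ terms involving $c_2,\dots,c_n$, plus the final $\square_{\mathbf{g}}\phi$ term. Combined with the $c_1$ term this yields the $n$ terms in \eqref{firstline_lemma}--\eqref{lastline_lemma}, which also settles the stated convention for $m<n$.

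A cleaner way to organise this is to prove the auxiliary identity for $\square_{\mathbf{g}}\Gamma_2\cdots\Gamma_n$ first, by induction on length. Its base case is $\square_{\mathbf{g}}\Gamma_n\phi=(f_n+\Gamma_n)\square_{\mathbf{g}}\phi + c_n^\mu\partial_\mu\phi$. Its inductive step is the one-line manipulation above. With the auxiliary identity in hand, the main formula only needs the Leibniz step and one substitution; in fact it does not even need the outer induction, since the $\Gamma_1[\square_{\mathbf{g}},\Gamma_2\cdots\Gamma_n]$ term is left unexpanded.

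No real obstacle is expected. The only care required is operator ordering: the factors $(f_j+\Gamma_j)$ do not commute with the functions $c_k^\mu$ or with each other, so the products must be kept in the stated left-to-right order. The step most worth checking is that the base case of the secondary induction matches the last line of the statement; that is the case $n=1$, where the formula reduces to $[\square_{\mathbf{g}},\Gamma_1]\phi = c_1^\mu\partial_\mu\phi + f_1\square_{\mathbf{g}}\phi$, consistent with the empty-product conventions.
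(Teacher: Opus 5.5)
Your proposal is correct and follows the paper's proof essentially step for step. The paper likewise first iterates $\square_{\mathbf{g}}\Gamma_j\psi=(f_j+\Gamma_j)\square_{\mathbf{g}}\psi+c_j^{\mu}\partial_\mu\psi$ to expand $\square_{\mathbf{g}}\Gamma_1\cdots\Gamma_n\phi$ (your auxiliary identity), then applies the Leibniz step $[\square_{\mathbf{g}},\Gamma_1\Gamma_2\cdots\Gamma_n]=\Gamma_1[\square_{\mathbf{g}},\Gamma_2\cdots\Gamma_n]+[\square_{\mathbf{g}},\Gamma_1]\Gamma_2\cdots\Gamma_n$ and substitutes that expansion into $f_1\square_{\mathbf{g}}\Gamma_2\cdots\Gamma_n\phi$, with no outer induction, as you note.
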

\begin{proof}
    Notice that, using  $[\square_{\mathbf{g}}, \Gamma_1] = f_1 \square_{\mathbf{g}} + c_1^{\mu}\partial_{\mu},$  we have:
    \begin{align*}
        \square_{\mathbf{g}} \Gamma_1 \cdots \Gamma_n \phi &= \Gamma_1 \square_{\mathbf{g}} \Gamma_2 \cdots \Gamma_n \phi + [\square_{\mathbf{g}}, \Gamma_1] \Gamma_2 \cdots \Gamma_n \phi \\
        & = (f_1 + \Gamma_1) \square_{\mathbf{g}} \Gamma_2 \cdots \Gamma_n \phi + c_1^{\mu}\partial_{\mu} \Gamma_2 \cdots \Gamma_n \phi \; .
    \end{align*}
    An iteration of the above argument yields:
    \begin{align}
        \square_{\mathbf{g}} \Gamma_1 \cdots \Gamma_n \phi &= (f_1 + \Gamma_1) \cdot \ldots \cdot (f_n + \Gamma_n) \square_{\mathbf{g}} \phi + \label{commutator_aux} \\
        &+  c_1^{\mu} \partial_{\mu} \Gamma_2 \cdots \Gamma_n \phi + (f_1 + \Gamma_1) c_2^{\mu}\partial_{\mu} \Gamma_3 \cdots \Gamma_n \phi + \ldots + \nonumber \\
        &+ (f_1 + \Gamma_1) \cdot \ldots \cdot (f_{n-1} + \Gamma_{n-1}) c_n^{\mu} \partial_{\mu} \phi  \; . \nonumber
    \end{align}
    On the other hand, using $[\square_{\mathbf{g}}, \Gamma_1] =  f_1 \square_{\mathbf{g}} + c_1^{\mu}\partial_{\mu} $ again:
    \begin{align}
        [\square_{\mathbf{g}}, \Gamma_1 \cdots \Gamma_n]\phi &=
        (\Gamma_1 \square_{\mathbf{g}} + [\square_{\mathbf{g}}, \Gamma_1])\Gamma_2 \cdots \Gamma_n \phi - \Gamma_1 \cdots \Gamma_n \square_{\mathbf{g}} \phi \nonumber \\ 
        &=\Gamma_1 [\square_{\mathbf{g}}, \Gamma_2 \cdots \Gamma_n] \phi +c_1^{\mu}\partial_{\mu}\Gamma_2 \cdots \Gamma_n \phi + f_1 \square_{\mathbf{g}} \Gamma_2 \cdots \Gamma_n \phi \; . \label{proof_commutation_rel}
    \end{align}
    The final expression then follows by using \eqref{commutator_aux} for the last term above.
\end{proof}
\begin{lemma}[Boost-free commutator estimates] \label{lemma:comm_estimate}
    Let $n, m \in \mathbb{N}$ with $n>0, m \ge 0$. Then, given $\Gamma_k \in \mathbb{L} = \{\partial_\mu, S, \Omega_{ij}\}$, $k = 1, \ldots, n+m$, we have:
    \[
    \left | \Gamma^{\le m}[\square_{\mathbf{g}}, \Gamma^{\le n}] \phi \right | \lesssim \left | \br{\square_{\mathbf{g}} \phi_{\le n-1}}_{\le m} \right | +  \left | \br{\square_{\mathbf{g}} \phi}_{\le m + n-1} \right | + (t^{-2} + t^{-1-p}) \left |\partial_{\tau} \phi_{\le m+n-1} \right | \; , 
    \]
    or, equivalently, in terms of $\partial_t$:
    \[
    \left | \Gamma^{\le m}[\square_{\mathbf{g}}, \Gamma^{\le n}] \phi \right | \lesssim \left | \br{\square_{\mathbf{g}} \phi_{\le n-1}}_{\le m} \right | + \left |\br{\square_{\mathbf{g}} \phi}_{\le m + n-1} \right |+  (t^{-2+p} + t^{-1}) \left | \partial_t \phi_{\le m + n-1} \right |  \; ,
    \]
    where, in the last term at the right hand side, the fast-decaying term comes from the possible commutation with $\partial_{\tau}$, whereas the slowest-decaying term comes from the possible commutation with $S$.
\end{lemma}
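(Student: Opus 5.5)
We argue by induction on $n$, using the expansion of Lemma~\ref{lemma:comm_n} applied to $\Gamma_1\cdots\Gamma_n$ with each $\Gamma_k\in\mathbb{L}$, and then apply $\Gamma^{\le m}$ to every term.

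The first step is to read off from Table~\ref{table:coefficients} the structural facts valid for $\Gamma\in\mathbb{L}$. In every case $c^i=0$, so only $c^0\partial_\tau$ appears. Moreover, $f$ and $c^0$ depend on $\tau$ only. The coefficient $f$ is either $0$, $2\partial_\tau a/a=O(\tau^{-1})$, or $2(1+\tau\partial_\tau a/a)=O(1)$. The coefficient $c^0$ is $O(t^{-2})$ (from $\partial_\tau$) or $O(t^{-1-p})$ (from $S$).

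The key additional observation is that these coefficients are stable under the vector fields in $\mathbb{L}$. Since $a=c\,\tau^{p/(1-p)}$, each coefficient is a power of $\tau$ or a constant. Applying $\partial_i$ or $\Omega_{ij}$ to it gives zero. Applying $S=\tau\partial_\tau+r\partial_r$ gives a multiple of the same power. Applying $\partial_\tau$ gives a power one lower, which is better for $\tau\ge\tau_0$. Hence $|\Gamma^{\le j}f|\lesssim 1$ and $|\Gamma^{\le j}c^0|\lesssim t^{-2}+t^{-1-p}$ for all $j$. This is where the boost fields are excluded: for $\Omega_{0j}$ the weights $x_j$ and the $c^i$ terms would ruin the argument.

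Next we treat the terms with $c^\mu$ in the expansion. After applying $\Gamma^{\le m}$, each has the form
\[
\Gamma^{\le m}\bigl[F\,(f_{k}+\Gamma_{k})\cdots c_{k}^0\,\partial_\tau \Gamma_{k+1}\cdots\Gamma_n\phi\bigr],
\]
with a product of operators $f_\ell+\Gamma_\ell$ in between. Distributing all vector fields by the Leibniz rule and using the coefficient bounds above, each such term is bounded by $(t^{-2}+t^{-1-p})\,|\Gamma^{\le m+n-1}\partial_\tau\phi|$. To conclude, commute $\partial_\tau$ past $S$ using $[S,\partial_\tau]=-\partial_\tau$, and past the other fields, which commute with it. This gives $|\Gamma^{\le m+n-1}\partial_\tau\phi|\lesssim|\partial_\tau\phi_{\le m+n-1}|$, which is the third term of the claimed estimate.

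The terms containing $\square_{\mathbf g}$ are handled next. The last line of the expansion is $f_1(f_2+\Gamma_2)\cdots(f_n+\Gamma_n)\square_{\mathbf g}\phi$, a product of at most $n-1$ vector fields applied to $\square_{\mathbf g}\phi$ with bounded coefficients. After applying $\Gamma^{\le m}$ it is bounded by $|(\square_{\mathbf g}\phi)_{\le m+n-1}|$.

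The remaining term is the leading term $\Gamma^{\le m}\Gamma_1[\square_{\mathbf g},\Gamma_2\cdots\Gamma_n]\phi$. We use the induction hypothesis with $(m+1,n-1)$, which recursively produces only the three types of terms already controlled. Alternatively, we can use the non-iterated form \eqref{proof_commutation_rel}, in which the $\square$ term is $f_1\square_{\mathbf g}\Gamma_2\cdots\Gamma_n\phi$. After applying $\Gamma^{\le m}$ this is bounded by $|(\square_{\mathbf g}\phi_{\le n-1})_{\le m}|$, which is exactly the first term of the claim. The recursion terminates at $n=1$, where $[\square_{\mathbf g},\Gamma]=f\square_{\mathbf g}+c^0\partial_\tau$ directly gives the first and third terms.

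The main obstacle is the bookkeeping. We must show that the recursion really closes with only these three term types, and that the orders of $\Gamma$ never exceed $m+n-1$ on $\partial_\tau\phi$ or on $\square_{\mathbf g}\phi$. We must also confirm that derivatives falling on the coefficients never worsen the decay, which follows from the monomial-in-$\tau$ structure noted above. Finally, the $\partial_t$ version of the estimate follows from $\partial_\tau=t^p\partial_t$.
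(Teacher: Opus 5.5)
Your proposal is correct and follows essentially the paper's route: expand via Lemma~\ref{lemma:comm_n}, then use that for $\Gamma\in\mathbb{L}$ one has $c^i=0$, the coefficients depend only on $\tau$ with harmless $\Gamma$-derivatives, and $|c^0|\lesssim t^{-2}+t^{-1-p}$; commute $\partial_\tau$ past the fields using $[S,\partial_\tau]=-\partial_\tau$, and iterate on the leading term $\Gamma_1[\square_{\mathbf g},\Gamma_2\cdots\Gamma_n]\phi$, with $|(\square_{\mathbf g}\phi_{\le n-1})_{\le m}|$ arising, as in the paper, from $f_1\square_{\mathbf g}\Gamma_2\cdots\Gamma_n\phi$ in \eqref{proof_commutation_rel}. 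On the bookkeeping you flag, the induction on $(m+1,n-1)$ closes most cleanly if you prove the stronger bound without that first term, since the fully expanded form \eqref{comm_higher_order} only ever produces $(\square_{\mathbf g}\phi)_{\le m+n-1}$ and $c^0\partial_\tau$ terms, whereas the hypothesis in the stated form returns $|(\square_{\mathbf g}\phi_{\le n-2})_{\le m+1}|$, which needs one further commutation to fit the claimed right-hand side.
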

\begin{proof}
    First, we consider the case $m=0$. Moreover, let us momentarily assume, for the sake of concreteness, that $n \ge 5$. As it will be clear from the next steps, such an assumption can ultimately be dropped without any loss of generality.
    
    If we exploit \eqref{proof_commutation_rel} and \eqref{comm_higher_order}, we get:
    \begin{align}
        [\square_{\mathbf{g}}, \Gamma_1 \cdots \Gamma_n]\phi &= \Gamma_1 \Gamma_2[\square_{\mathbf{g}}, \Gamma_3 \cdots \Gamma_n] \phi + \Gamma_1 \br{ c_2^{\mu} \partial_{\mu} \Gamma_3 \cdots \Gamma_n \phi } + \label{first_row}\\
        &+\Gamma_1 \br{f_2 c_3^{\mu} \partial_{\mu} \Gamma_4 \cdots \Gamma_n \phi } + \nonumber \\
        &+ \Gamma_1 \br{f_2 (f_3 + \Gamma_3) c_4^{\mu} \partial_{\mu} \Gamma_5 \cdots \Gamma_n \phi } + \ldots + \nonumber \\
        &+ \Gamma_1 \br{ f_2 (f_3 + \Gamma_3) \cdot \ldots \cdot (f_{n-1} + \Gamma_{n-1} ) c_n^{\mu} \partial_{\mu} \phi } \nonumber \\
        &+ \Gamma_1 \br{ f_2 ( f_3 + \Gamma_3 ) \cdot \ldots \cdot (f_n + \Gamma_n) \square_{\mathbf{g}} \phi } \label{row_second_to_last} \\
        &+ \br{f_1 \square_{\mathbf{g}} + c_1^{\mu} \partial_{\mu} }\Gamma_2 \cdots \Gamma_n \phi \; . \label{last_row}
    \end{align}
    By inspecting table \ref{table:coefficients}, we observe that, for every $x \in \mathbb{R}^3$, the functions $\tau \mapsto f(\tau, x)$ and $\tau \mapsto c^{\mu}(\tau, x)$ associated to a vector field $\Gamma$ have  bounded derivatives of all orders, uniformly in $[\tau_0, +\infty)$, $\tau_0 > 0$. Moreover, $c^{\mu} = (c^0, 0)$,  since we are excluding the boost vector field, and $c^0$ decays in time. This decay is encoded in the following function:
    \begin{equation}
        \mathcal{C}(t) \coloneqq \sum_{\Gamma \in \{\partial_{\mu}, S, \Omega_{ij}\}}|c_{\Gamma}^0| \lesssim   \tau^{-\frac{2}{1-p}} + \tau^{1 - \frac{2}{1-p}} \sim  t^{-2} + t^{-1-p} \; ,
    \end{equation}
    where we kept the first lower-order term in the expression to stress that the fast-decaying term comes from  $\partial_{\tau}$, whereas the slower-decaying term comes from the scaling vector field (see, for instance, table \ref{table:coefficients} again).
    
    Using the above, line \eqref{last_row} is bounded in absolute value by:
   \begin{align*}
    |f_1 \square_{\mathbf{g}} \Gamma_2 \cdots \Gamma_n \phi| + |c_1^{\mu} \partial_{\mu} \Gamma_2 \cdots \Gamma_n \phi| & =  
    |f_1 \square_{\mathbf{g}} \Gamma_2 \cdots \Gamma_n \phi| + |c_1^{0} \partial_{\tau} \Gamma_2 \cdots \Gamma_n \phi| \\
    &\lesssim |\square_{\mathbf{g}} \phi_{\le n-1}| + \mathcal{C}(t)|\partial_{\tau} \phi_{\le n-1}| \; .
     \end{align*}
    Similarly, line \eqref{row_second_to_last} is bounded by $|(\square_{\mathbf{g}} \phi)_{\le n-1}|$.

    On the other hand, all remaining terms (except for the first term at the right hand side of \eqref{first_row}) can be bounded, up to a multiplying constant, by 
    \[
    \mathcal{C}(t) \sum_{|\alpha| + |\beta| \le n-1} | \Gamma^{\alpha} \partial_{\tau} \Gamma^{\beta} \phi |  \lesssim \mathcal{C}(t) |\partial_{\tau} \phi_{\le n-1}| \; ,
    \]
    where the last inequality follows from commuting $\partial_{\tau}$ with the $\Gamma$ vector fields and using that $[\partial_{\tau}, \partial_{\mu}] = [\partial_{\tau}, \Omega_{ij}] = 0$ for every $\mu = 0, \ldots, 3$, for every $i, j = 1, 2, 3$ and $[\partial_{\tau}, S] = \partial_{\tau}$.

    Finally, the computations of lemma \ref{lemma:comm_n} can be applied iteratively to the first term at the right hand side of \eqref{first_row}: this gives additional terms that are akin to the terms we have already bounded, plus the term
    \[
    \Gamma_1 \cdots \Gamma_{n-1} [\square_{\mathbf{g}}, \Gamma_n] \phi = \Gamma_1 \cdots \Gamma_{n-1} (f_n \square_{\mathbf{g}} \phi + c_n^{0} \partial_{\tau} \phi) \; .
    \]
    Analogously to the previous computations, the above expression is bounded in absolute value by
    \[
    |(\square_{\mathbf{g}} \phi)_{\le n-1}| + \mathcal{C}(t) |\partial_{\tau} \phi_{\le n-1}| \; ,
    \]
    up to a multiplying constant.

    The proof in the case $m > 0$ is analogous, indeed the additional vector fields acting on $[\square_g, \Gamma_1 \cdots \Gamma_n] \phi$ (see \eqref{first_row}) can be controlled in a similar fashion, using that $[S, \partial_{\tau}] = \partial_{\tau}$ and that $[\Gamma, \partial_{\tau}] = 0$ if $\Gamma \in \mathbb{L} \setminus \{S\}$.
\end{proof}

\begin{corollary} \label{coroll:switch_brackets}
    Under the assumptions of lemma \ref{lemma:comm_estimate}, we have:
    \[
    \left | \br{\square_{\bf g} \phi_{\le n}}_{\le m} \right | \lesssim \left | \br{\square_{\bf g} \phi}_{\le m + n} \right | + (t^{-2} + t^{-1-p}) \left | \partial_{\tau} \phi_{\le m + n - 1} \right | \; ,
    \]
    or, equivalently, in terms of $\partial_t$:
    \[
    \left | \br{\square_{\bf g} \phi_{\le n}}_{\le m} \right | \lesssim \left | \br{\square_{\bf g} \phi}_{\le m + n} \right | + (t^{-2 + p} + t^{-1}) \left | \partial_t \phi_{\le m + n - 1} \right | \; ,
    \]
    where, in the last term at the right hand side, the fast-decaying term comes from the possible commutation with $\partial_{\tau}$, whereas the slowest-decaying term comes from the possible commutation with $S$.
\end{corollary}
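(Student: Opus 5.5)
The plan is to write $\Gamma^{\alpha}\phi = \phi_\alpha$ with $|\alpha|\le n$ and use the trivial identity
\[
\square_{\mathbf g}\Gamma^{\alpha}\phi = \Gamma^{\alpha}\square_{\mathbf g}\phi + [\square_{\mathbf g},\Gamma^{\alpha}]\phi .
\]
Applying $\Gamma^{\beta}$ with $|\beta|\le m$ and summing over all such $\alpha,\beta$ gives
\[
\left|\br{\square_{\mathbf g}\phi_{\le n}}_{\le m}\right| \le \left|\br{\square_{\mathbf g}\phi}_{\le m+n}\right| + \left|\Gamma^{\le m}[\square_{\mathbf g},\Gamma^{\le n}]\phi\right| ,
\]
because $\Gamma^{\beta}\Gamma^{\alpha}\square_{\mathbf g}\phi$ is one of the terms defining $(\square_{\mathbf g}\phi)_{\le m+n}$. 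By the convention in the notation section, the products $\Gamma^\beta\Gamma^\alpha$ are reordered into the form $\nabla^i\Omega^jS^k$ using the commutation relations of $\mathbb{L}$; this only produces lower order terms of the same type. For $n=0$ the commutator is absent and the claim is trivial, so I will assume $n\ge1$.

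The main work is the commutator term. Lemma \ref{lemma:comm_estimate} bounds it by
\[
\left|\br{\square_{\mathbf g}\phi_{\le n-1}}_{\le m}\right| + \left|\br{\square_{\mathbf g}\phi}_{\le m+n-1}\right| + (t^{-2}+t^{-1-p})\left|\partial_\tau\phi_{\le m+n-1}\right| .
\]
The last two terms are already of the desired form. The first has the same structure as the left-hand side, with $n$ replaced by $n-1$. I will therefore argue by induction on $n$ with $m$ arbitrary, so the induction hypothesis is the statement itself for $n-1$ and every $m$.

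The base case $n=0$ (or $n=1$, which reduces to $n=0$ via the lemma) is trivial. For the inductive step, apply the hypothesis to $\left|\br{\square_{\mathbf g}\phi_{\le n-1}}_{\le m}\right|$. This bounds it by
\[
\left|\br{\square_{\mathbf g}\phi}_{\le m+n-1}\right| + (t^{-2}+t^{-1-p})\left|\partial_\tau\phi_{\le m+n-2}\right| ,
\]
and both terms are dominated by the right-hand side of the claim. Since $n$ is fixed and finite, the implicit constants stay finite.

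To obtain the version in terms of $\partial_t$, I will use $\partial_\tau = t^p\partial_t$. This turns $t^{-2}+t^{-1-p}$ into $t^{-2+p}+t^{-1}$, and the multiplication by $t^p$ commutes harmlessly past the $\Gamma$'s because $\partial_\tau\phi_{\le k}$ is defined with $\partial_\tau$ outermost.

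The only step that needs care is the bookkeeping claim that reordering the vector fields produces only terms of the types above. This follows from $[S,\partial_\tau]=-\partial_\tau$, $[S,\partial_i]=-\partial_i$, and the fact that rotations commute with $\partial_\tau$ and permute the $\partial_i$. I expect this to be the main (minor) obstacle, and it is handled exactly as in the proof of Lemma \ref{lemma:comm_estimate}.
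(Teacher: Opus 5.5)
Your proposal is correct and follows the paper's argument. You split $\square_{\mathbf g}\Gamma^{\alpha}\phi=\Gamma^{\alpha}\square_{\mathbf g}\phi+[\square_{\mathbf g},\Gamma^{\alpha}]\phi$, bound $\Gamma^{\le m}[\square_{\mathbf g},\Gamma^{\le n}]\phi$ by Lemma \ref{lemma:comm_estimate}, and iterate on the leftover term $\left|(\square_{\mathbf g}\phi_{\le n-1})_{\le m}\right|$ (you make this iteration an explicit induction on $n$), then pass to the $\partial_t$ form via $\partial_\tau=t^p\partial_t$, exactly as the paper does.
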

\begin{proof}
    Using lemma \ref{lemma:comm_estimate}:
    \begin{align*}
        \left | \br{\square_{\bf g} \phi_{\le n}}_{\le m} \right | &\le \left | \Gamma^{\le m +n} \square_{\bf g} \phi \right | + \left | \Gamma^{\le m} [\square_{\bf g}, \Gamma^{\le n}] \phi \right | \\
        &\le \left |\br{\square_{\bf g} \phi}_{\le m+ n} \right | +
        \left | \br{\square_g \phi_{\le n-1}}_{\le m}\right |
        + (t^{-2} + t^{-1-p})\left | \partial_{\tau} \phi_{\le m + n -1 }\right | \, .
    \end{align*}
    By iterating the above argument for the second-to-last term above, the result follows.
\end{proof}

\begin{theorem}[Commuted energy estimate] \label{thm:commuted_energy}
    Let $\Gamma \in \mathbb{L} = \{ \partial_{\mu}, S, \Omega_{ij} \}$ and $0 < p < 1$. Then, using the notation in \eqref{def_len} and given $n \in \mathbb{N} \setminus \{ 0 \}$, we have:
    \begin{align}
    \sup_{t_0\leq t< t_1}E^{1/2}[\phi_{\le n}](t)+
    \lp{t^{p-1/2}\partial_t \phi_{\le n}}{L^2[t_0,t_1]}
    &\lesssim_{\,p, n}  E^{1/2}[ \phi_{\le n}](t_0)+\lp{t^{p+1/2}
   (\square_{\mathbf{g}}\phi)_{\le n}}{L^2[t_0,t_1]}\;,
    \label{energyEst_Gamma_n}
\end{align}
where $f\lesssim_{\,p} g$ means $f\leq C_p\,g$, with $C_p\rightarrow\infty$ as $p\rightarrow 0$. 
\end{theorem}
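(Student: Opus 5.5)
We argue by induction on $n$. The induction applies the uncommuted Improved Energy Estimate \eqref{energyEst} to each $\Gamma^\alpha\phi$ with $|\alpha|\le n$, and then handles the commutator errors. For fixed $\alpha$, \eqref{energyEst} applied to $\Gamma^\alpha\phi$ gives
\[
\sup_{t_0\le t<t_1}E^{1/2}[\Gamma^\alpha\phi](t)+(2p)^{1/2}\lp{t^{p-1/2}\partial_t\Gamma^\alpha\phi}{L^2[t_0,t_1]}\lesssim_p E^{1/2}[\Gamma^\alpha\phi](t_0)+\lp{t^{p+1/2}\square_{\mathbf g}\Gamma^\alpha\phi}{L^2[t_0,t_1]}.
\]
We then write $\square_{\mathbf g}\Gamma^\alpha\phi=\Gamma^\alpha\square_{\mathbf g}\phi+[\square_{\mathbf g},\Gamma^\alpha]\phi$. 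The first term contributes $\lp{t^{p+1/2}(\square_{\mathbf g}\phi)_{\le n}}{L^2[t_0,t_1]}$. For the second term we use Lemma \ref{lemma:comm_estimate} with $m=0$, in its $\partial_t$ form:
\[
|[\square_{\mathbf g},\Gamma^{\le n}]\phi|\lesssim|\square_{\mathbf g}\phi_{\le n-1}|+|(\square_{\mathbf g}\phi)_{\le n-1}|+(t^{-2+p}+t^{-1})|\partial_t\phi_{\le n-1}|.
\]
Corollary \ref{coroll:switch_brackets} (with $m=0$) converts $|\square_{\mathbf g}\phi_{\le n-1}|$ into $|(\square_{\mathbf g}\phi)_{\le n-1}|$ plus the same kind of term $(t^{-2+p}+t^{-1})|\partial_t\phi_{\le n-2}|$. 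The whole commutator error is therefore bounded by
\[
\lp{t^{p+1/2}(\square_{\mathbf g}\phi)_{\le n}}{L^2[t_0,t_1]}+\lp{t^{p+1/2}(t^{-2+p}+t^{-1})\partial_t\phi_{\le n-1}}{L^2[t_0,t_1]}.
\]

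The key point is the weight in the last term. Since $t\ge t_0>0$ and $p<1$, we have $t^{p+1/2}(t^{-2+p}+t^{-1})\lesssim_{t_0} t^{p-1/2}$. The error is thus bounded by $\lp{t^{p-1/2}\partial_t\phi_{\le n-1}}{L^2[t_0,t_1]}$, which is exactly the spacetime term on the left-hand side of \eqref{energyEst_Gamma_n} at level $n-1$. This is where we use the improved, non-Gr\"onwall structure. The slowest piece comes from the scaling field $S$, with coefficient $t^{-1}$; it is borderline and, after multiplication by $t^{p+1/2}$, lands precisely on $t^{p-1/2}$.

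Because this error involves only $n-1$ vector fields, we do not absorb it into the left-hand side. We use the induction hypothesis instead. For $n-1$ (with the base case $n-1=0$ being \eqref{energyEst}), the hypothesis gives
\[
\lp{t^{p-1/2}\partial_t\phi_{\le n-1}}{L^2[t_0,t_1]}\lesssim_{p,n}E^{1/2}[\phi_{\le n-1}](t_0)+\lp{t^{p+1/2}(\square_{\mathbf g}\phi)_{\le n-1}}{L^2[t_0,t_1]}.
\]
Summing the single-$\alpha$ estimates over $|\alpha|\le n$ and inserting this bound closes the induction. We also divide by $(2p)^{1/2}$, which is why the constants degenerate as $p\to0$. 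Note that $\partial_t\Gamma^\alpha\phi$ versus $\Gamma\partial_t$ orderings do not matter: $\partial_t=a^{-1}\partial_\tau$, and $\partial_\tau$ commutes with $\mathbb L$ up to $[S,\partial_\tau]=-\partial_\tau$.

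The main obstacle is to verify carefully that every commutator error really carries a $\partial_t$ (not a spatial gradient, which would be uncontrolled by the spacetime term) with weight at most $t^{-1}$ relative to $\partial_t$. This holds because boosts are excluded, so $c^i=0$. We must also check that the $t_0$-dependence of the constants is harmless; the constants are allowed to depend on $t_0$, which is fixed. If the $t^{-1}$ coefficient from $S$ comes with a large constant, the induction (rather than absorption) still handles it, since it appears only at lower order.
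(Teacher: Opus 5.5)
Your proposal is correct and follows essentially the paper's own argument: you apply the uncommuted estimate \eqref{energyEst} to $\Gamma^\alpha\phi$, and you use the boost-free commutator bounds (Lemma \ref{lemma:comm_estimate} and Corollary \ref{coroll:switch_brackets}) to reduce all errors to $(\square_{\mathbf g}\phi)_{\le n}$ terms plus the borderline $t^{p-1/2}\partial_t\phi_{\le n-1}$ term coming from $S$, which you control by the spacetime term of the level-$(n-1)$ estimate. The paper does the same thing: it treats $n=1$ separately for $\partial_\tau$ and $S$, and phrases the last step as adding a large multiple $M$ of the lower-order estimate to absorb that term, which is equivalent to your direct use of the induction hypothesis.
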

\begin{proof}
    We first consider the case $n = 1$. \newline
{\underline{Energy estimate for ${\partial_\tau\phi}$}}: Using the results of proposition \ref{prop:commutation_relations}, we derive the pointwise estimate
\begin{align*}
    t^{2p+1}|[\square_{\mathbf{g}},\partial_\tau]\phi|^2
    \lesssim 
    t^{2p+1} \left( t^{-4}|\partial_\tau\phi|^2 
    +t^{2p-2}|\square_{\mathbf{g}}\phi|^2\right)
    \lesssim t^{2p-2}\left(|t^{p-1/2}\partial_t\phi|^2 
    +|t^{p+1/2}\square_{\mathbf{g}}\phi|^2\right) \; .
\end{align*}
Applying the energy estimate \eqref{energyEst} to $\partial_\tau\phi$ (recall definition \eqref{energyDef})
and commuting with $\square_{\mathbf{g}}$ yields
\begin{align*}
    \sup_{t_0\leq t< t_1}E^{1/2}[\partial_\tau\phi](t)+
    \lp{t^{p-1/2}\partial_t\partial_\tau \phi}{L^2[t_0,t_1]}
    &\lesssim E^{1/2}[\partial_\tau\phi](t_0)+ \lp{t^{p+1/2}
    \partial_\tau (\square_{\mathbf{g}}\phi)}{L^2[t_0,t_1]} \\
    & +  \lp{O(t^{p-1}) t^{p-1/2}\partial_t \phi}{L^2[t_0,t_1]}
    +\lp{O(t^{p-1})t^{p+1/2}\square_{\mathbf{g}}\phi}{L^2[t_0,t_1]},
\end{align*}
where the underlying constant depends on $p$ as explained in the statement of the theorem; from now on we will keep on simplifying the notation by writing $\lesssim$ instead of $\lesssim_{\,p}$. 
Since $0 < p<1$, the $O(t^{p-1})$ quantity in the last two two terms
is bounded (in fact decays in time).
Adding the energy estimate \eqref{energyEst} yields the following:
\begin{align*}
    \sup_{t_0\leq t< t_1}\sum_{k\leq 1}\left(E^{1/2}[\partial^{k}_\tau\phi](t)
    +\lp{t^{p-1/2}\partial_t\partial^{k}_\tau \phi}{L^2[t_0,t_1]}\right)
    &\lesssim \sum_{ k\leq 1}
    \left(E^{1/2}[\partial^{k}_\tau\phi](t_0)+ \lp{t^{p+1/2}
    \partial^{k}_\tau (\square_{\mathbf{g}}\phi)}{L^2[t_0,t_1]}
    \right) \; .
\end{align*}
\newline
{\underline{Energy estimate for $S\phi$}}: Applying the results of proposition \ref{prop:commutation_relations} once again, we get the pointwise estimate
\begin{align*}
    t^{2p+1}|[\square_{\mathbf{g}},S]\phi|^2
    \lesssim 
    t^{2p+1} \left( t^{-2-2p}|\partial_\tau\phi|^2 
    +|\square_{\mathbf{g}}\phi|^2\right)
    \lesssim |t^{p-1/2}\partial_t\phi|^2 
    +|t^{p+1/2}\square_{\mathbf{g}}\phi|^2 \; .
\end{align*}
Applying the energy estimate \eqref{energyEst} to $S\phi$
and commuting with $\square_{\mathbf{g}}$ yields
\begin{align}
    \sup_{t_0\leq t< t_1}E^{1/2}[S\phi](t)+
    \lp{t^{p-1/2}\partial_tS \phi}{L^2[t_0,t_1]}
    &\lesssim E^{1/2}[S\phi](t_0)+ \lp{t^{p+1/2}
    S (\square_{\mathbf{g}}\phi)}{L^2[t_0,t_1]} \label{energyEst_S}\\
    &\qquad +\lp{t^{p-1/2}\partial_t \phi}{L^2[t_0,t_1]}
    +\lp{t^{p+1/2}\square_{\mathbf{g}}\phi}{L^2[t_0,t_1]}\notag  \; .
\end{align}
Adding an appropriate linear combination of estimates
\[{\text{Eq. }}\eqref{energyEst_S}+M( {\text{Eq. }}\eqref{energyEst})\]
with $M>0$ sufficiently large allows us to absorb the penultimate term above
using the LHS of our combined estimate. Thus:
\begin{align*}
    \sup_{t_0\leq t< t_1}\sum_{k\leq 1}\left(E^{1/2}[S^{k}\phi](t)
    +\lp{t^{p-1/2}\partial_tS^{k} \phi}{L^2[t_0,t_1]}\right)
    &\lesssim \sum_{k\leq 1}
    \left(E^{1/2}[S^{k}\phi](t_0)+ \lp{t^{p+1/2}
    S^{k} (\square_{\mathbf{g}}\phi)}{L^2[t_0,t_1]}
    \right) \; .
\end{align*}
\newline
{\underline{Energy estimate for $\phi_{\le 1}$}}:
Using the two energy estimates above, the improved energy bound
\eqref{energyEst} and the fact that \[ 
[\square_{\mathbf{g}}, \Omega_{ij}]=0=
 [\square_{\mathbf{g}}, \partial_i], \quad \forall\, i, j = 1, 2, 3 \; , 
 \]
we get:
\begin{align}
    \sum_{|\alpha|\leq 1}
    \Big(\sup_{t_0\leq t< t_1}E^{1/2}[\Gamma^{\alpha}\phi](t)+
    \lp{t^{p-1/2}\partial_t\Gamma^{\alpha}\phi}{L^2[t_0,t_1]}\Big)
    &\lesssim  \sum_{|\alpha|\leq 1}
    \Big(E^{1/2}[\Gamma^{\alpha} \phi](t_0)+\lp{t^{p+1/2}
    \Gamma^{\alpha}(\square_{\mathbf{g}}\phi)}{L^2[t_0,t_1]}
    \Big) \; .
    \label{energyEst_Gamma}
\end{align}
{\underline{Energy estimate for $\phi_{\le n}$}}: By induction, assume that \eqref{energyEst_Gamma} holds for $|\alpha| \le n-1$ (rather than $|\alpha| \le 1$).  Applying the energy estimate \eqref{energyEst} to  $\Gamma^{\le n} \phi$,  commuting  with $\square_{\bf{g}}$ and using the results of lemma \ref{lemma:comm_estimate}:
\begin{align}
    \sup_{t_0\leq t< t_1}E^{1/2}[\Gamma^{\le n}\phi](t)+
    \lp{t^{p-1/2}\partial_t \Gamma^{\le n} \phi}{L^2[t_0,t_1]} \label{energy_first_line}
    &\lesssim E^{1/2}[ \Gamma^{\le n} \phi](t_0)+ \lp{t^{p+1/2}
     \Gamma^{\le n} (\square_{\mathbf{g}}\phi)}{L^2[t_0,t_1]} \\
    &\qquad +\lp{t^{p+1/2} \square_{\bf{g}} \Gamma^{\le {n-1}} \phi}{L^2[t_0,t_1]}  \label{energy_first_term}\\
    &\qquad +\lp{t^{p+1/2}\Gamma^{\le {n-1}} (\square_{\mathbf{g}}\phi)}{L^2[t_0,t_1]}  \label{energy_second_term} \\
    &\qquad +\lp{t^{p-1/2}(t^{p-1} + 1) \partial_t \Gamma^{\le {n-1}} \phi}{L^2[t_0,t_1]}. \label{energy_third_term}
\end{align}
Now, term \eqref{energy_second_term} can be immediately absorbed in the last term at the right hand side of \eqref{energy_first_line}. Term \eqref{energy_third_term} can be absorbed at the left hand side after taking a linear combination of the current inequality with the energy estimate for $\phi_{\le n-1}$ (which holds by induction assumption). This procedure is analogous to that we followed for the energy estimate for $S\phi$, and is indeed required due to the presence of the scaling vector field.

Finally, the term \eqref{energy_first_term} can be absorbed by commuting $\square_{\bf{g}}$ with $\Gamma^{\le n-1}$. This yields additional terms akin to \eqref{energy_second_term} and \eqref{energy_third_term}, where $\Gamma^{\le n-2}$ replaces $\Gamma^{\le n-1}$. Repeating this procedure $n-1$ times is sufficient to conclude the proof.
\end{proof}

%
\section{Pointwise Decay Estimates}
\label{secPointwise} 
%
We first recall a lemma that we will exploit to obtain pointwise decay of solutions in the wave zone $\{r > \frac{\tau}{2}\}$.
\begin{lemma}
    Let $x \mapsto \phi(\tau,x)\in C^{\infty}_{c}(\mathbb{R}^{3})$, for each fixed value of $\tau$.
    Then, for every $\tau \ge 0$ and $r \ge 1$:
    \begin{equation}
        \label{ext_pointwise}
         |\phi|(\tau, r) \lesssim \frac{1}{r}\sum_{k =1}^{2}
        \sum_{|\beta|=0}^{2 }\lp{\partial^{k}_r
        \Omega^{\beta}\phi(\tau, \cdot)}{L^2} \ ,
    \end{equation}
    where $\Omega$ denotes the generators of spatial rotations.
\end{lemma}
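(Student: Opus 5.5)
The plan is to combine Sobolev embedding on the unit sphere with a one-dimensional radial Sobolev/fundamental-theorem-of-calculus argument. The variable $\tau$ plays no role, so I fix $\tau$ and write $\phi(x)=\phi(r\omega)$ with $\omega\in\mathbb{S}^2$. The rotations $\Omega_{ij}$ span the tangent space of each sphere $\{|x|=r\}$, and $|\Omega^\beta\phi|$ controls the corresponding angular derivatives of $\phi(r\,\cdot)$ on $\mathbb{S}^2$ (no powers of $r$ enter, since $\Omega$ acts as the angular derivative on the unit sphere). Hence the Sobolev embedding $H^2(\mathbb{S}^2)\hookrightarrow L^\infty(\mathbb{S}^2)$ gives
\[
|\phi(r\omega)|^2\lesssim \sum_{|\beta|\le 2}\int_{\mathbb{S}^2}|\Omega^\beta\phi(r\omega')|^2\,d\omega' .
\]

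Next I reduce each angular integral to a radial $L^2$ quantity. Set $F(r)=\int_{\mathbb{S}^2}|\Omega^\beta\phi(r\omega')|^2d\omega'$. Compact support gives $F(\rho)\to0$ as $\rho\to\infty$, so by the fundamental theorem of calculus
\[
r^2F(r)=-\int_r^\infty \partial_\rho\big(\rho^2F(\rho)\big)\,d\rho .
\]
Expanding the derivative yields two kinds of terms. One is $\int_r^\infty 2\rho F\,d\rho$, which I would rather avoid. The other is $\int_r^\infty\rho^2\int 2\,\Omega^\beta\phi\,\partial_\rho\Omega^\beta\phi$. By Cauchy–Schwarz in $\rho^2d\rho\,d\omega$, the latter is bounded by $\|\Omega^\beta\phi\|_{L^2}\|\partial_r\Omega^\beta\phi\|_{L^2}$.

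The right-hand side of \eqref{ext_pointwise} contains only $\partial_r^k$ with $k=1,2$ and no undifferentiated term, so I would run the argument more carefully. Instead of applying the identity to $F$ directly, I use
\[
\sqrt{F(r)}\le\int_r^\infty \|\partial_\rho\Omega^\beta\phi(\rho\,\cdot)\|_{L^2(\mathbb{S}^2)}\,d\rho .
\]
Then I apply the one-dimensional bound $\sup_\rho|g|\lesssim\|g\|_{L^2(d\rho)}+\|g'\|_{L^2(d\rho)}$, or more precisely, for $\rho\ge r\ge1$,
\[
\rho\,\|\partial_\rho\Omega^\beta\phi(\rho\,\cdot)\|_{L^2(\mathbb{S}^2)}\lesssim\|\partial_r\Omega^\beta\phi\|_{L^2}+\|\partial_r^2\Omega^\beta\phi\|_{L^2}.
\]
This follows from the same FTC identity applied to $G(\rho)=\rho^2\int|\partial_\rho\Omega^\beta\phi|^2$. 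There the term $\int 2\rho\int|\partial_\rho\Omega^\beta\phi|^2\le 2\int\rho^2\int|\cdot|^2$ is harmless because $\rho\ge1$, and the cross term is handled by Cauchy–Schwarz and AM–GM.

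The main obstacle is the leftover $\int_r^\infty\rho^{-1}\,d\rho$ in this chain, which diverges. To close it, I would rather use the $\rho\ge1$ weight trick directly on $F$: write $r^2F(r)\le\int_r^\infty 2\rho^2|\int\Omega^\beta\phi\,\partial_\rho\Omega^\beta\phi|\,d\rho$, dropping the $2\rho F$ term because it carries a favorable sign. This gives $r^2F\lesssim\|\Omega^\beta\phi\|_{L^2(|x|\ge r)}\|\partial_r\Omega^\beta\phi\|_{L^2}$. The remaining undifferentiated factor $\|\Omega^\beta\phi\|_{L^2(|x|\ge r)}$ is then replaced by a Hardy-type bound in terms of $\partial_r\Omega^\beta\phi$ and $\partial_r^2\Omega^\beta\phi$, using compact support and $r\ge1$ (integrating from infinity inward). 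Taking square roots and summing over $|\beta|\le2$ gives \eqref{ext_pointwise}.
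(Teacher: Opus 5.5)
The step that fails is your last one. No Hardy-type bound controls $\|\Omega^\beta\phi\|_{L^2(|x|\ge r)}$ by $\|\partial_r\Omega^\beta\phi\|_{L^2}+\|\partial_r^2\Omega^\beta\phi\|_{L^2}$. Hardy's inequality only controls $\||x|^{-1}\phi\|_{L^2}$, which on $\{|x|\ge r\}$ is weaker than $\|\phi\|_{L^2}$ by the unbounded factor $|x|$. Under the dilation $\phi\mapsto\phi(\cdot/R)$ at fixed $r$, the quantity you want to bound grows like $R^{3/2}$, while the proposed right-hand side grows at most like $R^{1/2}$.

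In fact no argument can close this step, because the displayed inequality with only $k=1,2$ is false. Take $f\in C^\infty_c([0,2))$ with $f\equiv1$ on $[0,1]$, and set $\phi(x)=f(|x|/R)$ with $R\ge1$. It is radial, so only $\beta=0$ contributes, and
\[
\|\partial_r\phi\|_{L^2}=c_1R^{1/2},\qquad \|\partial_r^2\phi\|_{L^2}=c_2R^{-1/2}.
\]
At $r=|x|=R$ the right-hand side is therefore $O(R^{-1/2})$, while $|\phi|=1$; let $R\to\infty$. The divergent $\int_r^\infty\rho^{-1}\,d\rho$ you hit in your intermediate attempt is a symptom of this, not an artifact of your method. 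With only $\partial_r$-type norms, that chain (using Cauchy--Schwarz instead) naturally gives $r^{1/2}|\phi|\lesssim\sum_{|\beta|\le2}\|\partial_r\Omega^\beta\phi\|_{L^2}$, which is all that scaling permits.

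Your first two steps are sound: Sobolev on $\mathbb{S}^2$ via the rotations, then $r^2F(r)\le2\|\Omega^\beta\phi\|_{L^2(|x|\ge r)}\|\partial_r\Omega^\beta\phi\|_{L^2(|x|\ge r)}$ after discarding the sign-favorable $2\rho F$ term, then AM--GM. What they prove is
\[
r|\phi(\tau,x)|\lesssim\sum_{k=0}^{1}\sum_{|\beta|\le2}\|\partial_r^k\Omega^\beta\phi(\tau,\cdot)\|_{L^2},
\]
which is the standard Klainerman weighted Sobolev inequality the paper invokes through its citation. The paper gives no argument of its own; your angular-Sobolev-plus-radial-FTC route is a standard alternative to the rescaled-Sobolev-on-annuli proof it alludes to. So the index range in the statement should read $k=0,1$, and for that corrected statement your argument is complete.

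The correction costs the paper nothing. The lemma is only applied to $\nabla_{\tau,x}(\chiext\phi)$ when bounding $I\!I$. The new $k=0$ term $\|\Omega^\beta\nabla_{\tau,x}(\chiext\phi)\|_{L^2}$ is bounded by $E^{1/2}[\phi_{\le2}]$ plus cutoff terms handled by the Hardy estimate, exactly like the other terms there.

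Your auxiliary bound $\rho\|\partial_\rho\Omega^\beta\phi(\rho\,\cdot)\|_{L^2(\mathbb{S}^2)}\lesssim\|\partial_r\Omega^\beta\phi\|_{L^2}+\|\partial_r^2\Omega^\beta\phi\|_{L^2}$ is also correct. However, it is the same inequality applied to $\partial_r\phi$, so it controls $r|\partial_r\phi|$, not $r|\phi|$.
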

\begin{proof}
    Since $\tau$ is fixed, the proof does not differ from that in Minkowski spacetime. In particular, the result follows from a scaling argument for the Sobolev inequality on homothetic domains away from the origin. We refer the reader to \cite{kl_remarks} for the details of the proof.
\end{proof}
We now present a series of lemmas that we will use later to control solutions in the interior zone $\{r \le \frac{\tau}{2} \}$; notice, however, that the result is more general both concerning the domain and the class of functions it applies to. 

\begin{lemma}(Klainerman-Sideris Type Estimates) \label{lemma:kl_sid_estimates}
    Suppose $\phi\in C^{\infty}([t_0,\infty)
    \times \mathbb{R}^3)$.
    Then:
    \begin{align}
        |(\tau-r)\Delta\phi|&\lesssim \sum_{|\alpha|\leq 1}
        \Big(|\nabla_{\tau,x}
        \Gamma^{\alpha}\phi|\Big)+
        a^2\tau|\square_{\mathbf{g}}\phi|\label{kl_sid1} \ , \\
         |(\tau-r)\partial^2_\tau\phi|&\lesssim \frac{r}{\tau} \sum_{|\alpha|\leq 1}
        \Big(
        |\nabla_{\tau,x}
        \Gamma^{\alpha}\phi|\Big)+
        a^2 r |\square_{\mathbf{g}}\phi|\label{kl_sid2} \ , \\
        |(\tau-r)\nabla_x \partial_\tau\phi|&\lesssim
        \sum_{|\alpha|\leq 1}
        \Big(|\nabla_{\tau,x}
        \Gamma^{\alpha}\phi|\Big)+
        a^2\tau|\square_{\mathbf{g}}\phi|\label{kl_sid3} \ , 
    \end{align}
where $\Gamma\in \mathbb{L} = \{ \partial_{\mu}, S, \Omega_{ij} \}$.
\end{lemma}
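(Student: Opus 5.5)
The plan is to reduce everything to the flat conformal wave operator and then follow the algebraic route of Klainerman and Sideris \cite{kl_sid}, which uses only $S$, $\Omega_{ij}$ and translations. From \eqref{Laplace_Beltrami_FLRW},
\[
\square_\eta\phi:=-\partial_\tau^2\phi+\Delta\phi=a^2\square_{\mathbf g}\phi+2\frac{\partial_\tau a}{a}\partial_\tau\phi ,
\qquad \Big|\frac{\partial_\tau a}{a}\Big|\sim\tau^{-1}
\]
by Proposition \ref{prop:commutation_relations}. So if I prove the Minkowski-type bounds with $\tau|\square_\eta\phi|$ (respectively $r|\square_\eta\phi|$) on the right, the extra damping term contributes at most $\tau\cdot\tau^{-1}|\partial_\tau\phi|\lesssim|\nabla_{\tau,x}\phi|$ (respectively $(r/\tau)|\partial_\tau\phi|$). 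Both are of the allowed form, since $\Gamma^{0}=\mathrm{id}$ is included in $\sum_{|\alpha|\le1}$.

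\textbf{Step 1: pure identities.} With $S=\tau\partial_\tau+r\partial_r$ and $\Delta=\partial_r^2+\frac2r\partial_r+r^{-2}\sum\Omega_{ij}^2$, I compute $\partial_\tau S$ and $\partial_r S$ and form two combinations. The first is
\[
(\tau^2-r^2)\partial_\tau^2\phi=\tau\partial_\tau S\phi-r\partial_r S\phi-\tau\partial_\tau\phi-r\partial_r\phi+r^2\square_\eta\phi-\textstyle\sum\Omega_{ij}^2\phi .
\]
The second is
\[
(\tau^2-r^2)\partial_r\partial_\tau\phi=-(r\partial_\tau-\tau\partial_r)S\phi+r\partial_\tau\phi-\tau\partial_r\phi+r\tau\Big(-\square_\eta\phi+\tfrac2r\partial_r\phi+r^{-2}\textstyle\sum\Omega_{ij}^2\phi\Big),
\]
together with the analogous identity for $(\tau^2-r^2)\partial_r^2\phi$. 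Since $\Omega_{ij}$ has coefficients of size $r$, we have $|\Omega^2\phi|\lesssim r|\nabla_x\Omega\phi|$ and $r^{-1}|\Omega\psi|\lesssim|\nabla_x\psi|$. Dividing by $\tau+r$ then gives \eqref{kl_sid3} for the radial component, and \eqref{kl_sid1} after writing $\Delta$ in polar form. The angular components of $\nabla_x\partial_\tau\phi$ are $r^{-1}\Omega\partial_\tau\phi$, and $(\tau-r)r^{-1}\Omega\partial_\tau\phi$ is handled by commuting $\Omega$ with $S$ and applying the radial identity to $\Omega\phi$.

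\textbf{Step 2: the estimate \eqref{kl_sid2}.} I use $\tau\partial_\tau=S-r\partial_r$ to write
\[
(\tau-r)\partial_\tau^2\phi=\frac{\tau-r}{\tau}\big(\partial_\tau S\phi-\partial_\tau\phi\big)-\frac r\tau(\tau-r)\partial_r\partial_\tau\phi .
\]
Applying \eqref{kl_sid3} to the last term gives $\frac r\tau\sum|\nabla\Gamma\phi|+a^2r|\square_{\mathbf g}\phi|$. The first term carries the factor $(\tau-r)/\tau$ rather than $r/\tau$. In the exterior/wave zone $r\ge\tau/2$ this is harmless, because there $|\tau-r|/\tau\lesssim r/\tau$. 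In the interior I would try to gain $r/\tau$ by combining with the first identity of Step 1 instead: there $\tau^2-r^2\sim\tau^2$, and the terms not carrying an $r$ factor are $\tau(\partial_\tau S-\partial_\tau)\phi/\tau^2$. This is exactly where I expect the main difficulty.

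\textbf{Main obstacle.} The hardest part is the region near the origin and the interior $r\le\tau/2$. There the factors $1/r$ produced by the polar decomposition must be avoided, and one should work in Cartesian form, e.g. $\tau\partial_i\partial_\tau=\partial_iS-\partial_i-x^j\partial_i\partial_j$, together with an elliptic bound for $r|\nabla_x^2\phi|$ via $\Delta$. In the same region one must check whether the claimed $r/\tau$ weight in \eqref{kl_sid2} holds literally, or only up to a term of the form $|\nabla\Gamma\phi|$. If the $r/\tau$ gain cannot be obtained there, I would prove \eqref{kl_sid2} with $\frac{r+|\tau-r|}{\tau}$ in place of $\frac r\tau$, and verify that this weaker form suffices for the later interior estimates.
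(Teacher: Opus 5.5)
Your treatment of \eqref{kl_sid1} and \eqref{kl_sid3} is correct and is essentially the paper's argument. Both pass to the flat operator via $-\partial_\tau^2\phi+\Delta\phi=a^2\square_{\mathbf g}\phi+2\frac{\partial_\tau a}{a}\partial_\tau\phi$ with $|\partial_\tau a/a|\sim\tau^{-1}$. Both then use the scaling identities $\partial_\tau S\phi-\partial_\tau\phi=\tau\partial_\tau^2\phi+r\partial_r\partial_\tau\phi$ and $\partial_rS\phi-\partial_r\phi=r\partial_r^2\phi+\tau\partial_\tau\partial_r\phi$, plus the rotation bound for $\Delta-\partial_r^2$. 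The paper combines these into $\tau^2\partial_\tau^2\phi-r^2\partial_r^2\phi$, whereas you isolate $(\tau^2-r^2)\partial_\tau^2\phi$ and $(\tau^2-r^2)\partial_r\partial_\tau\phi$; both of your identities are correct.

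The near-origin difficulty in your last paragraph does not arise. Every $r^{-1}$ in your identities is paired with an $\Omega$, and $|\Omega\psi|\le 2r|\nabla_x\psi|$. So you need neither a Cartesian rewriting nor an elliptic bound for $r|\nabla_x^2\phi|$, which would not hold pointwise anyway. For the angular part of $\nabla_x\partial_\tau\phi$, use $\tau\,\Omega\partial_\tau\phi=\Omega S\phi-r\,\partial_r\Omega\phi$. This gives $(\tau+r)r^{-1}|\Omega\partial_\tau\phi|\lesssim|\nabla_xS\phi|+|\nabla_{\tau,x}\Omega\phi|$.

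The real issue is \eqref{kl_sid2}, and your doubt is justified: as stated it is false, so no attempt to gain $r/\tau$ in the interior can succeed. At $r=0$ its right-hand side vanishes for every $\phi$, which would force $\partial_\tau^2\phi(\tau,0)=0$. Concretely, take $\phi=\tau^2$. Then $\partial_\tau^2\phi=2$, $\sum_{|\alpha|\le1}|\nabla_{\tau,x}\Gamma^\alpha\phi|\lesssim\tau$, and $a^2|\square_{\mathbf g}\phi|=2+4\tau\frac{\partial_\tau a}{a}\lesssim1$. So the left side is $2|\tau-r|$ while the right side is $\lesssim r$, and the inequality fails on all of $\{r\le c\tau\}$ for $c$ small.

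The correct statement is exactly your fallback. Your Step 2 identity together with \eqref{kl_sid3} already proves it in full:
\[
|(\tau-r)\partial_\tau^2\phi|\lesssim\frac{r+|\tau-r|}{\tau}\sum_{|\alpha|\le1}|\nabla_{\tau,x}\Gamma^\alpha\phi|+a^2r|\square_{\mathbf g}\phi| .
\]
So you should commit to this bound rather than leave the question open.

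The paper's proof has the same defect. After rearranging, it says the rest ``follows from the same argument as before''. But the term $\frac{\tau}{\tau+r}|\partial_\tau S\phi-\partial_\tau\phi|$ in its bound carries no factor $r/\tau$. Its argument therefore also yields only $\sum_{|\alpha|\le1}|\nabla_{\tau,x}\Gamma^\alpha\phi|+\frac r\tau|\partial_\tau\phi|+a^2r|\square_{\mathbf g}\phi|$.

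This is harmless downstream. The lemma is only invoked in the interior zone (term $I$ in the proof of Theorem \ref{global_ptwise_decay_linear}), localized by $\chiint$ to $r\le\frac34\tau$. There $\tau-r\sim\tau$, and the weaker bound is all that is needed. In that region it even follows directly from \eqref{kl_sid1} and $\partial_\tau^2\phi=\Delta\phi-a^2\square_{\mathbf g}\phi-2\frac{\partial_\tau a}{a}\partial_\tau\phi$.
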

\begin{proof}
    Similarly to the proof of \cite[lemma 2.3]{kl_sid}, we have the identities
    \begin{align}
        \partial_\tau S\phi-\partial_\tau\phi &=
        \tau \partial^2_\tau\phi+r\partial_\tau\partial_r\phi \ , \label{identity1}
        \\  \partial_r S\phi-\partial_r\phi &=
        r \partial^2_r\phi+\tau \partial_\tau\partial_r\phi \ . \label{identity2}
    \end{align}
    Thus,
    \begin{equation}
    \label{scaling1}
        \tau(\partial_\tau S\phi-\partial_\tau\phi)
        -r(\partial_r S\phi-\partial_r\phi)=
        \tau^2\partial^2_\tau\phi-r^2\partial^2_r\phi \, .
    \end{equation}
    Using expression \eqref{Laplace_Beltrami_FLRW} for the Laplace-Beltrami operator in FLRW spacetimes:
    \begin{equation}
    \label{kl_sid_beg}
        \tau^2\partial^2_\tau\phi-r^2\partial^2_r\phi=
        (\tau^2- r^2)\Delta\phi +r^2(\Delta\phi -\partial_r^2\phi)
        -\tau^2a^2\square_{\mathbf{g}}\phi -2 \tau^2 \frac{\partial_\tau a}{a}
        \partial_\tau\phi \, .
    \end{equation}
    Rearranging, taking absolute values, and using
    the triangle inequality yields:
    \begin{equation}
    \label{kl_sid_mid}
        |\tau- r| |\Delta\phi|\leq \frac{1}{\tau+r}
        \left(|\tau^2\partial^2_\tau\phi-r^2\partial^2_r\phi|
        +r^2|\Delta\phi -\partial_r^2 \phi|
        +\tau^2a^2|\square_{\mathbf{g}}\phi| +2 \tau^2|\frac{\partial_\tau a}{a}|
        |\partial_\tau\phi|\right) \, .
    \end{equation}
By Lemma 2.1 in \cite{kl_sid} we have
\begin{equation}
\label{rotations1}
    \left|\Delta\phi -\partial_r^2\phi\right|
    \lesssim \frac{1}{r}\sum_{|\alpha|\leq 1}\sum_{1\leq j\leq 3}|
    \partial_j\Omega^{\alpha}\phi| \, .
\end{equation}
Going back to Eq. \eqref{kl_sid_mid}, we apply Eq. \eqref{rotations1}, Eq. \eqref{scaling1},
and the estimate $|\frac{\partial_\tau a}{a}|\sim \tau^{-1}$
to finish the proof of \eqref{kl_sid1}. Next,
returning to Eq. \eqref{kl_sid_beg} and rearranging terms we have
   \begin{equation}
        \tau^2\partial^2_\tau\phi-r^2\partial^2_r\phi=
        (\tau^2- r^2)\partial^2_\tau\phi +r^2(\Delta\phi -\partial_r^2 \phi)
        -r^2a^2\square_{\mathbf{g}}\phi -2 r^2\frac{\partial_\tau a}{a}
        \partial_\tau\phi \, .
    \end{equation}
Thus,
    \begin{equation}
        |\tau- r| |\partial^2_\tau\phi|\leq \frac{1}{\tau+r}
        \left(|\tau^2\partial^2_\tau\phi-r^2\partial^2_r\phi|
        +r^2|\Delta\phi -\partial_r^2 \phi|
        +r^2a^2|\square_{\mathbf{g}}\phi| +  2 r^2 |\frac{\partial_\tau a}{a}|
        |\partial_\tau\phi|\right) \, .
    \end{equation}
The rest of the proof of \eqref{kl_sid2} follows from the
same argument as before. Finally, \eqref{kl_sid3} also
follows from a similar argument after subtracting \eqref{identity2} to \eqref{identity1} and decomposing spatial
derivatives $\partial_x$ into radial and angular components.
\end{proof}
\begin{lemma}(Sobolev embedding in homogeneous spaces)
    Let $x \mapsto \phi(\tau,x)\in C^{\infty}_{c}(\mathbb{R}^{3})$ for each fixed value of $\tau$. Then:
    \begin{equation}
    \label{Sob_hom}
        \lp{\phi}{L^{\infty}}\lesssim 
        \lp{\nabla_{x}\phi}{L^{2}}+\lp{\Delta\phi}{L^2} \, .
    \end{equation}
\end{lemma}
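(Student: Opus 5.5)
The plan is to prove \eqref{Sob_hom} with no reference to $\tau$: for fixed $\tau$ it is a purely spatial inequality on $\mathbb{R}^3$. I would combine the homogeneous Sobolev embedding $\dot H^1(\mathbb{R}^3)\hookrightarrow L^6(\mathbb{R}^3)$ with a frequency splitting, or equivalently a Gagliardo--Nirenberg type interpolation. Since $\phi(\tau,\cdot)\in C^\infty_c(\mathbb{R}^3)$, its Fourier transform is a Schwartz function, so Fourier inversion gives
\[
|\phi(x)|\le (2\pi)^{-3}\int_{\mathbb{R}^3}|\hat\phi(\xi)|\,d\xi .
\]

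I would split this integral into the regions $\{|\xi|\le 1\}$ and $\{|\xi|>1\}$ and apply Cauchy--Schwarz on each.

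On the low-frequency region I write $|\hat\phi|=|\xi|^{-1}\cdot|\xi||\hat\phi|$. This gives
\[
\int_{|\xi|\le1}|\hat\phi|\,d\xi\le\Big(\int_{|\xi|\le1}|\xi|^{-2}d\xi\Big)^{1/2}\,\| |\xi|\hat\phi\|_{L^2}\lesssim \lp{\nabla_x\phi}{L^2},
\]
where the first factor is finite because $|\xi|^{-2}$ is integrable near the origin in three dimensions (the integral equals $4\pi$), and the last step uses Plancherel.

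On the high-frequency region I write $|\hat\phi|=|\xi|^{-2}\cdot|\xi|^2|\hat\phi|$. This gives
\[
\int_{|\xi|>1}|\hat\phi|\,d\xi\le\Big(\int_{|\xi|>1}|\xi|^{-4}d\xi\Big)^{1/2}\lp{\Delta\phi}{L^2},
\]
where the first factor is finite because $|\xi|^{-4}$ is integrable at infinity in three dimensions (the integral equals $4\pi$). Adding the two bounds gives \eqref{Sob_hom}.

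There is no genuine obstacle. The one point to check is that both weights, $|\xi|^{-2}$ and $|\xi|^{-4}$, are integrable on their respective regions, and this is exactly where the dimension $3$ enters. An $L^2$ norm of $\phi$ itself is never needed, so no lower-order term appears, which is why the estimate is homogeneous. The implied constant does not depend on $\tau$.
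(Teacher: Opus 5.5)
Your proof is correct and is essentially the paper's own argument. The paper also bounds $|\phi|$ by $\int|\hat\phi|\,d\xi$, splits at $B_1(0)$ with the weights $|\xi|^{-1}$ and $|\xi|^{-2}$, applies Cauchy--Schwarz on each piece, and finishes with Plancherel. Your opening mention of $\dot H^1\hookrightarrow L^6$ is never used and can be dropped.
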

\begin{proof}
    Given the Fourier transform of $\phi$, denoted by $\hat \phi$, we have:
    \[
        |\phi|(\tau, x) \lesssim \int_{\mathbb{R}^3} |\hat \phi|(\tau, \xi)d\xi  = \int_{B_1(0)} \frac{|\xi| | \hat \phi|(\tau, \xi)}{|\xi|} d\xi + \int_{\mathbb{R}^3 \setminus B_1(0)} \frac{|\xi|^2 | \hat \phi|(\tau, \xi)}{|\xi|^2}d\xi \; .
    \]
The proof then follows after applying the Cauchy-Schwarz inequality to each term at the right hand side, extending the integrals to $\mathbb{R}^3$ and applying Plancherel's theorem.
\end{proof}
\subsection{\texorpdfstring{Proof of Theorem \ref{global_ptwise_decay_linear}}{Proof of Theorem 2.3}}
\begin{proof}
Define smooth cutoff
functions
\begin{align*}
    \chiint &:= \begin{cases}
        1, & r \le \frac{\tau}{2}, \\
        h(\tau, r), & \frac{\tau}{2} \le r \le \frac34 \tau, \\
        0, & r > \frac34 \tau,
    \end{cases}
    \qquad
    \chiext := 1-\chiint\;,
\end{align*}
 where 
\[
h(\tau, r) \coloneqq g \br{ \frac{4r -2 \tau}{\tau} } 
\]
and $x \mapsto g(x)$ is a positive smooth function such that $g(x)=0$ for $x \ge 1$ and $g(x) = 1$ for $x \le 0$.
For a suitable choice of $h$,\footnote{A standard construction in terms of the function $x \mapsto e^{\frac{1}{x-1}}$ is sufficient for our purposes.} the cutoff functions satisfy the following properties:
\[
\chiint + \chiext = 1 \ ,
\qquad |\nabla^k_{\tau,x}(\chiint) | \lesssim_k \langle r \rangle ^{-k}
\chi_{ \{ \frac{\tau}{2} \le r \le \frac34 \tau \}}
\ , \qquad | \nabla^k_{\tau,x}(\chiext) | \lesssim_k \langle r \rangle ^{-k}
\chi_{ \{ \frac{\tau}{2} \le r \le \frac34 \tau \}} \ ,
\]
for $k \in \mathbb{N}$, $k \ge 1$, where $\chi_{D}$ is the indicator function of the set $D$ and $\langle \cdot \rangle = \sqrt{1 + (\cdot)^2}$ denote the Japanese brackets. In the following, we will employ the notation
\[
    \chi_{r \sim \frac{1}{2}\tau} := \chi_{ \left \{ \frac{\tau}{2} \le r \le \frac34 \tau \right \}}\;.
\]
We have:
\begin{align*}
    \lp{\tau\nabla_{\tau,x}\phi}{L^{\infty}}&=
    \lp{\tau(\chiint+\chiext)\nabla_{\tau,x}\phi}{L^{\infty}}\\
    &\lesssim
    \lp{\tau\nabla_{\tau,x}(\chiint\phi)}{L^{\infty}}
    +\lp{\tau\nabla_{\tau,x}(\chiext\phi)}{L^{\infty}}
    +\lp{\langle r \rangle ^{-1}\tau\chi_{r\sim \frac{1}{2}\tau}\phi}{L^{\infty}}\\
    &\lesssim \lp{\tau\nabla_{\tau,x}(\chiint\phi)}{L^{\infty}}
    +\lp{\tau\nabla_{\tau,x}(\chiext\phi)}{L^{\infty}}
    +\lp{ \phi }{L^{\infty}}\\
    &:=I+ I\!I+ I\!I\!I .
\end{align*}
{\underline{Bounding the term $I$:}}
Here we apply the Sobolev estimate \eqref{Sob_hom}:
\begin{align*}
    I&=\lp{\tau \nabla_{\tau,x}(\chiint\phi)}{L^{\infty}}\\
    & \lesssim \lp{\tau \nabla_{x}\nabla_{\tau,x} (\chiint\phi) (\tau, \cdot)}{L^{2}}
    +\lp{\tau \Delta \nabla_{\tau,x} (\chiint\phi) (\tau, \cdot)}{L^2}\\
    & \lesssim \lp{\tau \chiint\nabla_{x}(\nabla_{\tau,x}\phi)}{L^{2}}
    +\lp{\tau \chiint\Delta(\nabla_{\tau,x}\phi)}{L^2}
    +\lp{\tau \chi_{r\sim \frac{1}{2}\tau}
    (\langle r \rangle ^{-2}\phi, \langle r \rangle ^{-1}\nabla_{\tau,x}\phi,
    \langle r \rangle ^{-1} \nabla^2_{\tau,x}\phi)}{L^{2}},
\end{align*}
where, here and in the following, by $\nabla^2_{\tau, x}$ we mean $\nabla^{\alpha}_{\tau, x}$ with $|\alpha|=2$.
Next, we use the support properties of the cutoff:
For the terms containing $\chiint$, we use $\tau\lesssim |\tau-r|$. For the terms containing 
$\chi_{r\sim \frac{1}{2}\tau}$,
we use $\tau\lesssim r$ followed by
$|\chi_{r\sim \frac{1}{2}\tau }|\lesssim 1$. This yields:
\begin{align}
\lp{\tau \chiint\nabla_{x}(\nabla_{\tau,x}\phi)}{L^{2}}
    &+\lp{\tau \chiint\Delta(\nabla_{\tau,x}\phi)}{L^2}
     +\lp{\tau \chi_{r\sim \frac{1}{2}\tau}
    (\langle r \rangle ^{-2}\phi, \langle r \rangle ^{-1}\nabla_{\tau,x}\phi,
    \langle r \rangle ^{-1} \nabla^2_{\tau,x}\phi)}{L^{2}}\label{ptwise_inbetween}\\
    &\qquad \lesssim \sum_{2\leq |\alpha|\leq 3}
    \lp{\chiint (\tau-r)\nabla^{\alpha}_{\tau,x}\phi}{L^{2}}
    +\sum_{1 \le |\gamma|\leq 2}\lp{ \nabla_{\tau, x}^{\gamma} \phi  }{L^{2}} + \lp{ \frac{\phi}{ \langle r \rangle } }{L^2} \ .\notag 
\end{align}
All terms
containing at least
two spatial derivatives are controlled via the elliptic estimate
\[\lp{\nabla^{2}_{x}\phi}{L^{2}}\lesssim
\lp{\Delta \phi}{L^{2}} \ . \]
Therefore, we can apply the Klainerman-Sideris estimate
\eqref{kl_sid1} to the first term above  and commute 
to get
\begin{align}
    \sum_{2\leq |\alpha|\leq 3}
    \lp{\chiint (\tau-r)\nabla^{\alpha}_{\tau,x}\phi}{L^{2}}
    \lesssim 
    \sum_{1\leq |\alpha| \leq 2}
    \lp{\nabla^{\alpha}_{\tau,x}\phi_{\leq 1}}{L^2}+
    \sum_{|\beta| \leq 1}\lp{\tau a^2  \square_{\mathbf{g}}
    (\nabla^{\beta}_{\tau,x}\phi)}{L^2} \ . \label{ptwise_inbetween2}
\end{align}
Using the commutation relations \eqref{commutator_di} and \eqref{commutator_dtau}, and the fact that $\tau^{1 - \frac{2}{1-p}}a^2 \lesssim \tau^{-1}$,
we can simplify
the second term on the RHS above:
\begin{align}
    \lp{\tau a^2  \square_{\mathbf{g}}(\nabla^{\le 1}_{\tau,x}\phi)}{L^2}
    \lesssim \lp{a^2 \cdot (\tau+r) \nabla^{\le 1}_{\tau,x}(\square_{\mathbf{g}}\phi)}{L^2}
    +\lp{\tau^{-1} \nabla_{\tau,x}\phi}{L^2} \ . \label{ptwise_inbetween3}
\end{align}
Next, we go back to the last term on the RHS of Eq. \eqref{ptwise_inbetween}.
For the undifferentiated term in that sum, we apply
the following Hardy estimate \cite{Evans2}: 
\begin{equation}
    \lp{\langle r \rangle^{-1}\phi}{L^{2}} \lesssim \lp{r^{-1}\phi}{L^{2}}
    \lesssim \lp{\partial_r\phi}{L^{2}}\lesssim
    \lp{\nabla_{\tau,x}\phi}{L^2} \  .\label{hardy}
\end{equation}
Therefore:
    \[
    \sum_{1 \le |\gamma|\leq 2}\lp{ \nabla_{\tau, x}^{\gamma} \phi  }{L^{2}} + \lp{ \frac{\phi}{ \langle r \rangle } }{L^2}
    \lesssim \sum_{1\leq |\gamma| \leq 2}
    \lp{\nabla^{\gamma}_{\tau,x}\phi}{L^2} \  .\]
Combining these estimates along with \eqref{ptwise_inbetween2}
and \eqref{ptwise_inbetween3} yields 
\begin{align*}
    \sum_{2\leq |\alpha|\leq 3}
    \lp{\chiint (\tau-r)\nabla^{\alpha}_{\tau,x}\phi}{L^{2}}
    &+\sum_{1 \le |\gamma|\leq 2}\lp{ \nabla_{\tau, x}^{\gamma} \phi  }{L^{2}} + \lp{ \frac{\phi}{ \langle r \rangle } }{L^2}\\
     &\qquad \lesssim 
    \sum_{1\leq |\alpha| \leq 2}
    \lp{\nabla^{\alpha}_{\tau,x}\phi_{\leq 1}}{L^2}
    +\sum_{|\beta| \leq 1}\lp{a^2 \cdot (\tau+r) \nabla^\beta_{\tau,x}(\square_{\mathbf{g}}\phi)}{L^2} \, .
\end{align*}
By the energy estimate \eqref{energyEst_Gamma_n}
and the definition
of the energy norms, 
we get:
\begin{align*}
    I \lesssim &\sum_{1\leq |\alpha| \leq 2}
    \lp{\nabla^{\alpha}_{\tau,x}\phi_{\leq 1}}{L^2}
    +\sum_{|\beta| \leq 1}\lp{a^2 \cdot (\tau+r) \nabla^\beta_{\tau,x}(\square_{\mathbf{g}}\phi)}{L^2}\\
    &\qquad\qquad \lesssim   
    E^{1/2}
    [\phi_{\leq 2}](t)+
    \lp{t^{p-1/2}\partial_t
    \phi_{\leq 2}}{L^2[t_0,t_1]}
    +\lp{a^2 \cdot (\tau+r) (\square_{\mathbf{g}}\phi)_{\le 1} }{L^2}\\
    &\qquad\qquad \lesssim 
    E^{1/2}
    [\phi_{\leq 2}](t_0)
    +\lp{t^{p+1/2}
    (\square_{\mathbf{g}}\phi)_{\leq 2}}{L^2[t_0,t_1]}
    +\lp{a^2 \cdot (\tau+r)
    (\square_{\mathbf{g}}\phi)_{\le 1} }{L^2}
\end{align*}
{\underline{Bounding the term $I\!I$:}} Here we use 
bound \eqref{ext_pointwise}, the compact support of $\nabla\chiext$ and the fact that $[\Omega_{ij}, \partial_k] = \delta_{jk}\partial_i - \delta_{ik} \partial_j$ for every $i, j, k \in \{1, 2, 3\}$:
\begin{align*}
    I\! I&=\lp{\tau\nabla_{\tau,x}(\chiext\phi)}{L^{\infty}}\\
    &\lesssim \sum_{k =1}^{2}
    \sum_{|\beta|=0}^{2}\lp{\partial^{k}_r
    \Omega^{\beta}\nabla_{\tau,x}(\chiext\phi)(\tau, \cdot)}{L^2}\\
    & \lesssim 
       \sum_{k =1}^{2}\lp{\chiext
        \partial^{k}_{r}
        \nabla_{\tau,x}\phi_{\leq 2}}{L^2}
        +\lp{\chi_{r\sim \frac{1}{2}\tau}
        (\langle r \rangle ^{-2} \phi_{\le 2}, \langle r \rangle ^{-1}\nabla_{\tau,x}\phi_{\leq 2},
        \langle r \rangle^{-1}\nabla^2_{\tau,x}\phi_{\leq 2})}{L^{2}} \, .
\end{align*}
The last term is smaller than the last term of line \eqref{ptwise_inbetween}, which was already bounded.
In particular, for the undifferentiated terms, we again apply
the Hardy estimate \eqref{hardy}.
Then, by the energy estimate \eqref{energyEst_Gamma_n}: 
\begin{align*}
   I\! I \lesssim &\sum_{k =1}^{2}\lp{\chiext
        \partial^{k}_{r}
        \nabla_{\tau,x}\phi_{\leq 2}}{L^2}
        + \lp{\chi_{r\sim \frac{1}{2}\tau}
        (\langle r \rangle^{-1}\nabla_{\tau,x}\phi_{\leq 2},
        \langle r \rangle^{-1}\nabla^2_{\tau,x}\phi_{\leq 2})}{L^{2}} \\
    &  \hspace{1.5in}
    \lesssim\sum_{k =1}^{2}\lp{
        \partial^{k}_{r}
        \nabla_{\tau,x}\phi_{\leq 2}}{L^2}
        +\lp{(\langle r \rangle ^{-1}\nabla_{\tau,x}\phi_{\leq 2},
        \langle r \rangle ^{-1}\nabla^2_{\tau,x}\phi_{\leq 2})}{L^{2}}\\
    &  \hspace{1.5in}\lesssim 
    E^{1/2}
    [\phi_{\leq 4}](t)
    +\lp{t^{p-1/2} \partial_t 
    \phi_{\leq 4}}{L^2[t_0,t_1]} \\
    &  \hspace{1.5in} \lesssim 
    E^{1/2}
    [\phi_{\leq 4}](t_0)
    +\lp{t^{p+1/2} 
    (\square_{\mathbf{g}}\phi)_{\leq 4}}{L^2[t_0,t_1]} \, .
 \end{align*}

{\underline{Bounding the term $I\!I\!I$:}} Here we apply
the Sobolev estimate \eqref{Sob_hom}
and exploit the same argument used for term $I$. The resulting
terms are harmless since they have equal or better decay in $\tau$, compared to I and II.

{\underline{Combining the estimates for $I$, $I \! I$ and $I \! I \! I$}:} We obtained:
\[
    \lp{\tau \nabla_{\tau, x} \phi}{L^{\infty}} \lesssim 
    E^{\frac12}[\phi_{\le 4}] (t_0) + 
    \lp{t^{p + \frac12}  \br{\square_{\bf g} \phi}_{\le 4}}{L^2[t_0, t_1]}
    + \lp{ a^2 \cdot (\tau + r)  \br{\square_{\bf g} \phi}_{\le 1} }{L^2} \, ,
\]
where we used, once again, definition \eqref{energyDef_tp_tau}. Now, applying the above inequality to $\phi_{\le k}$, for some $k \ge 0$, and using corollary \ref{coroll:switch_brackets} and the fact that $(\tau + r)a^2 \lesssim t^{1+p}$:
\begin{align*}
    \lp{\tau \nabla_{\tau, x} \phi_{\le k}}{L^{\infty}} &\lesssim \lp{\nabla_{\tau, x} \phi_{\le k+4} (t_0)}{L^2} + 
    \lp{t^{p + \frac12} \br{\square_{\bf g} \phi}_{\le k+4} }{L^2[t_0, t_1]} + \lp{t^{1+p} \br{\square_{\bf g} \phi}_{\le k + 1}}{L^2} \\
    &+ \lp{t^{-\frac12}(t^{p-1} + 1) \partial_{\tau} \phi_{\le k + 3}}{L^2[t_0, t_1]} + 
    \lp{(t^{p-1} + 1) \partial_{\tau} \phi_{\le k} }{L^2}.
\end{align*}
The last two terms can be reabsorbed in the remaining terms by means of \eqref{energyEst_Gamma_n}.  In fact, the last two terms are bounded by $E^{\frac12}[\phi_{\le k+3}](t) + \lp{t^{p-\frac12 } \partial_t \phi_{\le k+3}}{L^2[t_0, t_1]}$. 
This concludes the proof of \eqref{global_ptdecay_grad1}.
Estimate \eqref{global_ptdecay_grad2} follows immediately by exploiting
that $a(t)=t^{p}\sim \tau^{\frac{p}{1-p}}$.

\end{proof}

%
%
\section{\texorpdfstring{Proof of Theorem \ref{main1}, case $0 < p < 1$}{Proof of Theorem 2.4, case 0 < p < 1}}
\label{secGlobal} 


In this section, we establish small data global existence result
for the Cauchy problem~\eqref{FJequation}
under the conditions of smoothness and compact support for the initial data.
Throughout this section, $C_0$ will
denote the size of the initial data~\eqref{smallInitial_data_FJ}
and $n=3$. We
start by defining
\begin{equation}
\label{bootDef}
\mathcal{E}_{K}(t):= 
\lp{t^p\nabla_{t,\tilde x}\phi_{\leq K}(t, \cdot )}{L^{2}}
 + C(p,K)\lp{t^{p-1/2}\partial_t\phi_{\leq K}}{L^2[t_0,t]}
\end{equation}
 for $t \ge t_0$ and for a suitable constant $C(p, K) > 0$ that can be derived from the proof of theorem \ref{thm:commuted_energy}.

We now proceed by a continuity argument. Let $K\geq 8$~.
We assume that $M$ is a large enough constant so that
\begin{equation}\label{initial_cond_boot}
	\mathcal{E}_K(t_0)\leq \frac{MC_0}{C_2} ,
\end{equation}
where $C_2$ is a constant that we will make explicit later.
Next, we assume as  bootstrap condition that  $T$ is the supremum over all $t\geq t_0$ for which  
\begin{equation}
\label{bootAss}
\sup_{t_0\leq t<T}\mathcal{E}_{K}(t) \leq 4MC_0\;.
\end{equation}
That a $T>t_0$ in such conditions exists is a direct consequence of
local existence of solutions, see e.g.\ \cite[section 4]{jaj}. We stress that the only constraint on $C_2$ is $C_2 > \frac14$.\\

As a result of the bootstrap assumptions,
the functions $\Gamma^{k}\phi$ enjoy the
pointwise decay summarized in the following:
\begin{lemma}
\label{ptwise_nonlinear}
    Let  $0\leq k \leq  K - 4$ and assume that \eqref{bootAss} holds. Then, for $C_0$
    small enough:
\begin{equation}
   \sup_{t_0\leq t<T} \lp{t\nabla_{t,\tilde x}\phi_{\leq k} (t, \cdot)}
   {L^{\infty}}\lesssim C_0 \ , \label{sup_boot1}
\end{equation}
where the underlying constants depend on $M$.
\end{lemma}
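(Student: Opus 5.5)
The plan is to apply the linear global pointwise estimate \eqref{global_ptdecay_grad2} of Theorem \ref{global_ptwise_decay_linear} to $\phi$ with $\square_{\mathbf g}\phi=(\partial_t\phi)^2$, for $0\le k\le K-4$, on $[t_0,t)$ with $t<T$. This gives
\[
\lp{t\nabla_{t,\tilde x}\phi_{\le k}(t,\cdot)}{L^\infty}\lesssim \lp{t^p\nabla_{t,\tilde x}\phi_{\le k+4}(t_0,\cdot)}{L^2}+\lp{t^{p+1/2}((\partial_t\phi)^2)_{\le k+4}}{L^2[t_0,t]}+\lp{t^{p+1}((\partial_t\phi)^2)_{\le k+1}(t,\cdot)}{L^2}.
\]
The first term is bounded by $\mathcal{E}_K(t_0)\lesssim C_0$. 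Everything then reduces to showing that the two nonlinear terms are $\lesssim C_0^2$, with constants depending on $M$. Since $C_0$ is small, this yields \eqref{sup_boot1}.

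To bound the nonlinear terms, I distribute the vector fields using Leibniz: $\Gamma^\alpha(\partial_t\phi)^2=\sum_{\beta+\gamma=\alpha}\Gamma^\beta\partial_t\phi\,\Gamma^\gamma\partial_t\phi$. Since $\partial_t=a^{-1}\partial_\tau$, and $[S,\partial_\tau]=-\partial_\tau$ while $S(a^{-1})=\tau\partial_\tau(a^{-1})\sim a^{-1}$, we get $|\Gamma^\beta\partial_t\phi|\lesssim|\partial_t\phi_{\le|\beta|}|$; the remaining fields commute with $\partial_t$. In each product, one factor carries at most $\lfloor (k+4)/2\rfloor$ fields.

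The main difficulty is that \eqref{sup_boot1} is exactly what we are trying to prove, so I cannot use it to control that low factor. I will therefore use a cruder pointwise bound coming straight from the bootstrap assumption. Apply the Sobolev estimate \eqref{Sob_hom} to $\partial_\tau\phi_{\le j}$, together with $\partial_x$-commutation and $\lp{\nabla_{\tau,x}\phi_{\le j+2}}{L^2}\lesssim\mathcal{E}_K\le 4MC_0$ for $j+2\le K$. This gives the expansion-type bound $|\partial_t\phi_{\le j}|\lesssim a^{-1}MC_0=t^{-p}MC_0$. This suffices since $j\le (K+4)/2 \le K-2$ for $K\ge 8$; this is presumably where $K\ge 8$ is needed.

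With this bound in hand, the spacetime term is estimated by putting the low factor in $L^\infty$ and the high one in $L^2$:
\[
\lp{t^{p+1/2}\partial_t\phi_{\le j}\,\partial_t\phi_{\le k+4}}{L^2[t_0,t]}\lesssim MC_0\lp{t^{1/2}\partial_t\phi_{\le K}}{L^2[t_0,t]}.
\]
This carries the weight $t^{1/2}$, whereas the bootstrap controls $t^{p-1/2}$, so I lose $t^{1-p}$. The fix is to obtain better decay of the low factor. I can do this by iterating: first establish \eqref{sup_boot1} for low indices $k'\le (K+4)/2\le K-4$ within the same argument, i.e.\ a two-step scheme. Alternatively, I can handle the low factor in the interior zone with the Klainerman--Sideris route giving $t^{-1}$, i.e.\ reuse the proof structure of Theorem \ref{global_ptwise_decay_linear} with an $L^\infty$--$L^2$ splitting.

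Concretely, I would close as follows. Using $|\partial_t\phi_{\le j}|\lesssim t^{-1}\lp{t\nabla_{t,\tilde x}\phi_{\le j}}{L^\infty}$, the spacetime term is $\lesssim \sup_s\lp{s\nabla\phi_{\le j}}{L^\infty}\cdot\lp{t^{p-1/2}\partial_t\phi_{\le K}}{L^2}\lesssim X\,MC_0$, where $X=\sup_{[t_0,t]}\lp{s\nabla_{t,\tilde x}\phi_{\le K-4}}{L^\infty}$. Similarly, the fixed-time term $\lp{t^{p+1}\partial_t\phi_{\le j}\partial_t\phi_{\le k+1}}{L^2}$ is $\lesssim X\lp{t^p\partial_t\phi_{\le K}}{L^2}\lesssim X\,MC_0$. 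This gives $X\lesssim C_0+MC_0X$, and for $C_0$ small (depending on $M$) the $X$ term is absorbed on the left, so $X\lesssim C_0$. Absorption requires $X<\infty$, which holds for the smooth, compactly supported local solution on $[t_0,t]$ (one may restrict first to compact subintervals), so no circularity arises. The main obstacle to watch is checking that the index bookkeeping $j\le K-4$ and $k+4\le K$ works out, using Corollary \ref{coroll:switch_brackets} wherever brackets are exchanged.
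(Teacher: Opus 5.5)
Your final argument is correct and is essentially the paper's proof. You apply \eqref{global_ptdecay_grad2}, put the low-order factor in $L^\infty$ through $X$ and the high-order factor in the bootstrapped norms $\lp{t^{p-1/2}\partial_t\phi_{\le K}}{L^2[t_0,t]}$ and $\lp{t^{p}\partial_t\phi_{\le K}}{L^2}$, then absorb $X\lesssim C_0+MC_0X$ for $C_0$ small. The paper closes at each level $k\ge 4$ using $\lfloor (k+4)/2\rfloor\le k$ rather than once at the top level $K-4$, and it leaves implicit the finiteness of $X$ on compact subintervals, which you note. One correction to your index bookkeeping: the low factor carries at most $\lfloor (k+4)/2\rfloor\le\lfloor K/2\rfloor$ fields, and $\lfloor K/2\rfloor\le K-4$ is exactly where $K\ge 8$ enters; the bound $(K+4)/2\le K-4$ you wrote would require $K\ge 12$.
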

\begin{proof}
Let us first assume that $k \ge 4$ (we note that, in the present work, we take $K \ge 8$).
Using the bootstrap assumption and pointwise decay
estimate \eqref{global_ptdecay_grad2}
we see that for $0 \leq k\leq   K -4$, and $t_0\leq t<T$, we have 
 \[
      \lp{t \nabla_{t,\tilde x}\phi_{\leq k}
      (t,\cdot )}{L^{\infty}}
     \lesssim C_0
     +\lp{t^{p+1/2}
     (\partial_{t} \phi  \partial_{t} \phi)_{\leq k+4}}{L^2[t_0, T]}
     +\lp{t^{p+1}
     \big(\partial_{t} \phi  \partial_{t} \phi\big)_{\leq k+1}(t)}{L^2}
\]
For the first integral term at the right hand side, we distribute derivatives,  use that $[\partial_{\tau}, S ]=\partial_{\tau}$ and $[\Gamma, \partial_{\tau}] = 0$ if $\Gamma \in \mathbb{L} \setminus \{ S\}$,  the bootstrap assumption \eqref{bootAss}, the symmetry of the non-linearity and the fact that $\lfloor \frac{k+4}{2} \rfloor \le k$ when $k \ge 4$, to get:
    \begin{align}
    \lp{t^{p+1/2}
      \big(\partial_{t} \phi  \partial_{t} \phi\big)_{\leq k+4}}{L^2[t_0,T]} 
      &\lesssim \sum_{|\alpha_1| + |\alpha_2| = k+4} \lp{t^{p + \frac12} (\Gamma^{\alpha_1} \partial_t \phi ) (\Gamma^{\alpha_2} \partial_t \phi )}{L^2[t_0, T]} \nonumber \\
      &\lesssim \sup_{t_0\leq t<T}\lp{t \nabla_{t,\tilde x}\phi_{\leq \lfloor \frac{k+4}{2} \rfloor} (t, \cdot)}{L^{\infty}}
     \lp{t^{p-1/2}\partial_t\phi_{\leq k+4}}{L^2[t_0,T]} \notag \\
      &\lesssim \sup_{t_0\leq t<T}\lp{t \nabla_{t,\tilde x}\phi_{\leq k} (t, \cdot)}{L^{\infty}}
     \lp{t^{p-1/2}\partial_t\phi_{\leq k+4}}{L^2[t_0,T]}  \notag \\
     & \lesssim C_0 \cdot \sup_{t_0 < t < T} \lp{t \nabla_{t,\tilde x}\phi_{\leq k}(t, \cdot)}{L^{\infty}} \label{boot_rhs_1} \, .
\end{align}
We proceed similarly for the second term:
\begin{equation}
    \lp{t^{p+1}
      \big(\partial_{t} \phi  \partial_{t} \phi\big)_{\leq k+1}}{L^2}
     \lesssim \lp{t \nabla_{t,\tilde x}\phi_{\leq k}}{L^{\infty}}
     \lp{t^{p} \nabla_{t, \tilde x} \phi_{\leq k+1}}{L^2} 
     \lesssim \lp{t \nabla_{t,\tilde x}\phi_{\leq k}}{L^{\infty}}C_0 \label{boot_rhs_2} \, .
\end{equation}
Using equations \eqref{boot_rhs_1}--\eqref{boot_rhs_2}:
\[
\sup_{t_0 < t < T} \lp{t \nabla_{t,\tilde x}\phi_{\leq k}(t, \cdot)  }{L^{\infty}}\lesssim C_0 \left ( 1 
+ \sup_{t_0 < t < T}  \lp{t \nabla_{t,\tilde x}\phi_{\leq k} (t, \cdot) }{L^{\infty}} \right ) \, .
\]
The result follows by choosing $C_0$ small. Finally, since we proved the last inequality for $k \ge 4$, it must also hold for $k \ge 0$.
\end{proof}

Our next step in the main argument
is to apply the commuted energy estimate~\eqref{energyEst_Gamma_n}: 
 \begin{equation} 
 \label{energyEst2}
 \sup_{t_0\leq t<T}\mathcal{E}_{K}(t) 
 \lesssim \mathcal{E}_{K}(t_0)+\lp{t^{p+1/2}
     (\partial_{t} \phi  \partial_{t} \phi)_{\leq K}}{L^2[t_0,T]}
      :=\mathcal{E}_{K}(t_0)+I \ . 
\end{equation}
To bound the last term:
\[I\lesssim \lp{t^{p+1/2}(\partial_t\phi)_{\leq  k_1}
 (\partial_{t} \phi)_{\leq  k_2}}{L^2[t_0,T]}\]
with $ k=0,1,2,...,K$ and $k_1+ k_2= k$.
This implies that we have to estimate terms of the form 
\[     \Big(\int_{t_0}^{T}
 \!\!\int_{\mathbb{R}^3}t^{2p+1}\left(\Gamma^{ \alpha_1}
 \partial_{t} \phi  \Gamma^{ \alpha_2}\partial_{t} \phi  \right)^2 
 dxdt\Big)^{1/2} \ ,  \]
where $|\alpha_1| + |\alpha_2| = k$, $0 \le k \le K$.
By symmetry of the integrand
it suffices to treat the case $ |\alpha_1|\leq k/2$. Since $K\geq 8$
we have $0\leq  |\alpha_1| \leq k/2\leq K/2\leq K-4$ and therefore
we are allowed to use ~\eqref{sup_boot1}.
Thus, in view of the bootstrap assumption~\eqref{bootAss} and of the fact that $[\partial_{\tau}, S] = \partial_{\tau}$ and $[\partial_{\tau}, \Gamma] = 0$ if $\Gamma \in \mathbb{L} \setminus \{S\}$, we get: 
 \begin{align} 
 \label{Nterms2}
\Big(\int_{t_0}^{T}
 \!\!\int_{\mathbb{R}^3}t^{2p+1}\left(\Gamma^{ \alpha_1}
 \partial_{t} \phi  \Gamma^{ \alpha_2}\partial_{t} \phi  \right)^2 
 dxdt\Big)^{\frac{1}{2}}
&\lesssim
\Big(\int_{t_0}^{T}\lp{ t\Gamma^{ \alpha_1}
\partial_t\phi }{L^{\infty}}^2
 \!\!\int_{\mathbb{R}^3}t^{2p-1}\left(\Gamma^{ \alpha_2}
 \partial_{t} \phi  \right)^2 
 dxdt\Big)^{\frac{1}{2}}\\
 &\lesssim  \sup_{t_0\leq t<T} \lp{t\nabla_{t,\tilde x}
 \phi_{\leq K-4}}{L^{\infty}}\Big(\int_{t_0}^{T}
 \!\!\int_{\mathbb{R}^3} t^{2p-1}
 \left(\Gamma^{\alpha_2}\partial_{t} \phi  \right)^2 
 dx dt\Big)^{\frac{1}{2}} \notag \\
 &\lesssim C_0\Big(\int_{t_0}^{T}
 \!\!\int_{\mathbb{R}^3} t^{2p-1}
 \left(\Gamma^{\alpha_2}\partial_{t} \phi  \right)^2 
 dx dt\Big)^{\frac{1}{2}} \notag \\
 &\lesssim C_0\lp{t^{p-1/2}\partial_t\phi_{\leq K}}{L^2[t_0, T]}
\lesssim C_0   \sup_{t_0\leq t<T} \mathcal{E}_K(t)\;. \notag 
\end{align}
To conclude, we can sum these estimates to show that
the bulk term satisfies the bilinear bounds 
 \begin{equation} 
 \label{nonlinEst}
 I\lesssim \lp{t^{p+1/2}(\partial_t\phi)_{\leq  k_1}
 (\partial_{t} \phi)_{\leq  k_2}}{L^2[t_0,T]}
 \lesssim 
 C_0   \sup_{t_0\leq t<T} \mathcal{E}_K(t)\;.
 \end{equation}
We are now ready to close our bootstrap argument.
From estimates ~\eqref{energyEst2} and
~\eqref{nonlinEst}, it follows that
\begin{equation} 
\label{energyEst4}
 \sup_{t_0\leq t<T} \mathcal{E}_{K}(t)  
\leq C_2 \left(\mathcal{E}_{K}(t_0)
 +C_0 \sup_{t_0\leq t<T} \mathcal{E}_{K}(t) \right), 
  \end{equation}
where $C_2$ is the maximum of the constants in estimates
~\eqref{energyEst2} and ~\eqref{nonlinEst}.
Choosing $C_0$ sufficiently small we can ensure that
\[C_0C_2< \frac{1}{2} .\]
Rearranging \eqref{energyEst4}:
\[
 \sup_{t_0\leq t<T} \mathcal{E}_{K}(t)  
\leq 2C_2  \mathcal{E}_{K}(t_0),  
\]
Applying \eqref{initial_cond_boot}:
 \begin{equation*} 
  \sup_{t_0\leq t<T}\mathcal{E}_{K}(t) \leq 2MC_0 ,
  \end{equation*}
which corresponds to a strict improvement of the bootstrap
assumption~\eqref{bootAss}, from which we can conclude that $T=+\infty$.

\section{\texorpdfstring{Proof of Theorem \ref{main1}, case $p=1$}{Proof of Theorem 2.4, case p=1}} \label{secp1}
Here we follow the proof of section \ref{secGlobal} to show small data global existence in the case $p=1$. Differently from the previous case, no commutation with the vector fields of $\mathbb{L}$ is needed.

We define  
\begin{equation}
\label{bootDefp1}
\mathcal{E}_{K}(t):= 
\lp{t \nabla_{t,\tilde x} \partial_x^{\le K} \phi(t, \cdot )}{L^{2}}
 + C(K)\lp{t^{1/2}\partial_x^{\le K} \partial_t\phi}{L^2[t_0,t]}
\end{equation}
for $t \ge t_0$. Let $M, C_0, C_2 > 0$ be such that
\[
\mathcal{E}_{K}(t_0) \le \frac{M C_0}{C_2} \, .
\]
As a bootstrap assumption, we assume that $T$ is the supremum over all $t \ge t_0$ for which:
\begin{equation}
    \sup_{t_0 \le t < T} \mathcal{E}_K(t) \le 4 M C_0 \, .
\end{equation}
By the energy estimate \eqref{energyEst}:
\[
\mathcal{E}_K(t) \lesssim \mathcal{E}_K(t_0) + \lp{ t^{3/2} \partial_x^{\le K} \br{\partial_t \phi \partial_t \phi} }{L^2[t_0, t]} = \mathcal{E}_K(t_0) + I \, ,
\]
for every $t_0 \le t < T$.
We notice that  I is composed by terms of the form
\[
\lp{ t^{3/2} \partial_x^{k_1} \partial_t \phi  \partial_x^{k_2} \partial_t \phi }{L^2[t_0, t]},
\]
where $k_1 + k_2 \le K$ and $k_1, k_2 \ge 0$. Without loss of generality, we assume that 
$0 \le k_1 \le K/2 < K-3/2$. By Sobolev embedding, for $0 \le k_1 < K- \frac32$:
\[
\lp{ \partial_x^{k_1} \partial_t \phi (t, \cdot) }{L^{\infty}} \le C \lp{ \partial_t \phi(t, \cdot) }{H^K} \lesssim \frac{\mathcal{E}_K(t)}{t} \le \frac{4 M C_0}{t},
\]
where in the last step we exploited the bootstrap assumption. Hence:
\[
I \lesssim 4M C_0 \lp{ t^{\frac12} \partial_x^{\le K} \partial_t \phi  }{L^2[t_0, t]} \lesssim 4MC_0 \mathcal{E}_K(t).
\]
Then, similarly to the procedure of section \ref{secGlobal}, the bootstrap closes provided that $C_0$ is chosen sufficiently small.

\section*{Acknowledgements}
JLC was partially supported by FCT/Portugal through CAMGSD, IST-ID (project UID/04459/2025). JO was partially supported by an AMS-Simons Research
Enhancement Grant for Primarily Undergraduate Institution (PUI) Faculty. JO gratefully acknowledges the hospitality of the Instituto Superior Técnico (IST), Lisbon, during his stay, when part of this work was completed. FR was supported by a postdoctoral research fellowship from the Gran Sasso Science Institute and, during the first part of the project, was supported by FCT/Portugal through the PhD scholarship UI/BD/152068/2021.

\end{document}